%% file: manuscript.tex
\pdfoutput=1

\documentclass[11pt]{article}

\usepackage[margin=1in]{geometry}
\usepackage[T1]{fontenc}
\usepackage[utf8]{inputenc}
\usepackage{lmodern}
\usepackage{microtype}
\usepackage{setspace}
\usepackage{anyfontsize}
\usepackage{siunitx}
\usepackage{eurosym}
\usepackage{amsmath,amssymb,amsfonts,amsthm,mathtools,bm}

\usepackage{graphicx}
\usepackage{booktabs}
\usepackage{array}
\usepackage{multirow}
\usepackage{threeparttable}
\usepackage{caption}
\usepackage{subcaption}
\usepackage{float}
\usepackage{algorithm}
\usepackage{algpseudocode}
\usepackage{tikz}
\usepackage{rotating}
\usepackage{fancyvrb}
\usepackage{placeins}

\usepackage[round,authoryear]{natbib}
\usepackage{enumitem}
\setlist[itemize]{topsep=3pt,itemsep=2pt,parsep=0pt}
\setlist[enumerate]{topsep=3pt,itemsep=2pt,parsep=0pt}

\usepackage{xcolor}
\definecolor{linkblue}{HTML}{1A4C8B}
\usepackage[
    colorlinks=true,
    linkcolor=linkblue,
    citecolor=linkblue,
    urlcolor=linkblue
]{hyperref}
\usepackage[nameinlink,noabbrev]{cleveref}

\theoremstyle{plain}

\newtheorem{proposition}{Proposition}
\newtheorem{lemma}{Lemma}
\newtheorem{corollary}{Corollary}

\theoremstyle{definition}

\theoremstyle{remark}

\usepackage{etoolbox}
\newif\ifinappendix
\inappendixfalse
\pretocmd{\section}{%
  \ifinappendix
    \setcounter{equation}{0}%
    \setcounter{figure}{0}%
    \setcounter{table}{0}%
    \setcounter{lemma}{0}%
  \fi
}{}{}

\newcommand{\startappendices}{%
  \clearpage
  \appendix
  \inappendixtrue
  \section*{Appendices}
  \addcontentsline{toc}{section}{Appendices}
  \renewcommand{\theequation}{\thesection.\arabic{equation}}
  \renewcommand{\thefigure}{\thesection.\arabic{figure}}
  \renewcommand{\thetable}{\thesection.\arabic{table}}
}

\title{Competing firms, competing regulators: The strategic cost of fragmented climate policy}

\author{
  Nicole Adler\thanks{The Hebrew University of Jerusalem, Israel. Email: \href{mailto:nicole.adler@mail.huji.ac.il}{nicole.adler@mail.huji.ac.il}.}
  \and
  Gianmarco Andreana\thanks{University of Bergamo, Italy; The Hebrew University of Jerusalem, Israel. Email: \href{mailto:gianmarco.andrena@unibg.it}{gianmarco.andreana@unibg.it}.}
  \and
  Gerben de Jong\thanks{SEO Amsterdam Economics, the Netherlands; The Vrije Universiteit Amsterdam, the Netherlands. Email: \href{mailto:g.dejong@seo.nl}{g.dejong@seo.nl}.}
}

\date{}

\begin{document}

\maketitle

\begin{abstract}
  Climate policy in global network industries is implemented across fragmented jurisdictions, yet firms respond through integrated operational networks. We develop a two-stage game-theoretic framework to analyze how firm-level responses interact with alternative governance structures. Regulators first choose emissions charges. Firms subsequently compete through pricing, service capacity and capital deployment decisions. The analytical results demonstrate that uniform global regulation maximizes welfare in symmetric markets. However, in sufficiently asymmetric markets, a uniform global charge is dominated by decentralized regimes. Multiple regulatory instruments better accommodate region-specific market externalities. We apply this framework to a calibrated case study of North American, Western European and transatlantic aviation markets. The numerical results establish that a globally coordinated regulator setting region-specific charges achieves the highest aggregate welfare. These aggregate gains nonetheless mask substantial distributional disparities across jurisdictions. Effective climate governance in network industries therefore requires more than determining an efficient emissions charge. Policy instruments ought to accommodate regional heterogeneity and transfer mechanisms will be necessary to ensure efficient, politically stable cooperation.
\end{abstract}

\vspace{0.75em}
\noindent\textbf{Keywords:} game theory; competition; environmental regulation; multi-market firms; fragmented governance; aviation markets.\\

\section{Introduction}\label{sec:introduction}

Firms in global network industries make strategic decisions across multiple jurisdictions while facing regulation imposed at regional, national, supranational, and sector-specific levels. Operational networks are integrated, but the regulatory environment is fragmented. A charge, standard, or quota that is local from the regulator's perspective propagates through pricing, capacity, and capital allocation decisions across the network, while a firm's response in one market depends on the regulatory exposure of the rest of its operations. This creates a central problem for both firm strategy and regulatory governance: how should firms organize operations under fragmented and overlapping regulation, and how do firm-level responses reshape the regulatory architectures they face?

The question matters because many of the central policy challenges of the twenty-first century require coordinated action across jurisdictions, while at the same time the effectiveness of such policies depends critically on how firms reorganize operations under heterogeneous regulatory incentives. Climate policy provides a particularly important example, as firms adjust pricing, capacity, and investment decisions across markets in ways that shape the environmental effectiveness and welfare consequences of emissions regulation \citep{nordhaus2019climate, Perino2019, economides1996economics}. Multi-market firms may absorb environmental charges, pass them through to consumers, restrict service capacity, accelerate capital asset renewal, or shift activity toward less regulated jurisdictions. These operational adjustments determine the realized distribution of emissions, consumer surplus, producer surplus and welfare across regions. Sustainable management in network industries therefore requires more than setting the right carbon price. It requires understanding how a firm's strategic decisions interact with the fragmented regulatory environment in which they are made, and how those decisions feed back into the regulatory architecture itself.

Operations management research has examined how environmental regulation shapes firm behavior along several related dimensions. One stream studies carbon pricing and regulatory instruments, showing how emissions taxes, cap-and-trade systems, and related policies affect production, capacity portfolios, technology adoption, clean-technology investment, and operational flexibility under demand and policy uncertainty \citep{drake2016technology, fan2023price}. A second stream links environmental outcomes to supply chain and network structure, showing that facility location, transportation patterns, market coverage, and product offerings jointly determine both costs and emissions, especially when firms operate across markets with heterogeneous regulatory standards \citep{cachon2014retail, han2022curbing}. A third stream examines coordination and responsibility allocation in supply chains, studying how shared, indirect, or upstream emissions should be attributed across firms and how such allocation rules affect abatement incentives, cooperation, and double marginalization \citep{gopalakrishnan2021incentives, gopalakrishnan2021consistent}. Together, this literature shows that environmental regulation in supply chains is not simply a matter of setting a carbon price, but of understanding how policy instruments interact with firms' operational networks, investment incentives, and inter-firm dependencies. We build on this literature by studying a networked setting in which firms respond strategically to fragmented emissions regulation through pricing, capacity, and investment decisions, and in which these responses shape the performance of alternative regulatory schemes.

Aviation provides a particularly transparent setting in which to study this problem. Airlines connect jurisdictions through hub-and-spoke networks and simultaneously face emissions trading systems at national and supranational levels \citep{schmalensee2017lessons, narassimhan2018carbon}, the Carbon Offsetting and Reduction Scheme for International Aviation on international routes \citep{scheelhaase2018eu, jiang2025role, ovaere2025strategic}, ticket taxes at the country level \citep{bernardo2024flight}, and airport-level environmental charges \citep{fageda2025environmental}. The same structural features appear in international shipping, electricity transmission, multi-modal freight, logistics networks and global supply chains, where firms operate over connected networks, demand and congestion are spatially heterogeneous, and regulatory interventions in one part of the network alter outcomes elsewhere. Electricity markets provide a close analogue, since carbon regulation in congested power networks interacts with demand elasticity, transmission constraints, market structure and strategic firm behavior \citep{limpaitoon2011impact}. The mechanisms we identify in aviation therefore generalize to any network industry in which firms transmit regulatory signals across jurisdictions through linked operational decisions.

The theoretical case for carbon pricing is well established \citep{Marron2014, Stiglitz2019, timilsina2022carbon}. Three features of global network industries complicate how firms respond to it. First, market characteristics vary across space. Regional differences in demand elasticity, market power, network externalities and operating costs imply that the same charge produces different firm responses in different markets, and firms may exploit these differences by passing costs through to less elastic demand segments or by reallocating capacity toward markets where regulation is less binding. Second, operational networks transmit local policies across jurisdictions. Environmental charges in one region can induce carbon leakage when firms shift activity toward less regulated markets \citep{bohringer2012role, carbone2013linking}, and overlapping instruments imposed by multiple regulators may either amplify or offset each other depending on how firms restructure operations \citep{Baylis2013, Nordhaus2015, perino2025overlapping}. Third, regulators act strategically. Regional authorities choose policy instruments to maximize local welfare while recognizing that multi-market firms can reallocate activity across jurisdiction. Regulatory competition therefore unfolds through the operational responses of network firms, potentially pushing equilibrium outcomes away from globally efficient outcomes \citep{oates1988economic, kennedy1994equilibrium}.

These dynamics generate ambiguous effects on both emissions and firm strategy. Higher effective carbon prices should encourage firms to deploy cleaner capital assets \citep{deJong2022fleet} and moderate service capacity \citep{fageda2022pricing}. Market power at hub facilities \citep{brueckner2002airport, pels2004economics}, positive frequency externalities \citep{mohring1972optimization}, cost heterogeneity, and strategic regulatory interaction may push firm responses in the opposite direction. A firm that reduces emissions in one jurisdiction may simply shift them elsewhere. A regulatory regime that improves global welfare may leave some firms or regions worse off and therefore fail to be politically stable. Resolving these ambiguities requires a structural framework that integrates firm-level operational responses with the incentives of the regulators who set the environment in which those decisions are made.

This paper develops a two-stage game-theoretic model of firm operations under fragmented sustainability governance in a multi-market network industry. In the first stage, regulators at regional and global levels choose emission charges to maximize jurisdiction-specific or global welfare. In the second stage, firms observe these charges and compete through pricing, capacity and capital deployment decisions. The firm decision problem is the center of the framework. The regulatory stage matters because it defines the environment firms face, and because firm behavior in the second stage shapes regulatory incentives in the first. We characterize equilibrium outcomes under five governance regimes, namely baseline, decentralized regional regulation, uniform global regulation, differentiated global regulation and overlapping regulation. These regimes jointly span two distinct dimensions of the regulatory environment firms face, coordination and instrument flexibility. Coordination refers to whether one authority internalizes emissions and surplus across regions. Instrument flexibility refers to whether the authority sets one common charge or region-specific charges. Comparing these regimes isolates how firm strategy varies with the regulatory architecture and how that architecture in turn responds to firm behavior.

The paper makes three primary contributions to the literature on sustainable operations management in regulated network industries. First, we identify the operational channels through which multi-market firms transmit environmental regulation across a network. Firms respond to emissions charges through three linked margins: service frequency, fleet composition, and fare pass-through. These margins vary systematically with business model, network geography, passenger heterogeneity, and regulatory exposure. In the calibrated case study, low-cost carriers operating short-haul point-to-point networks adjust primarily through capacity contraction, whereas hub-and-spoke legacy carriers combine smaller frequency reductions with larger fare increases concentrated in less elastic demand segments. These results show that environmental regulation in network industries acts as a system-wide operational shock rather than a route-level cost shock, because shared fleet constraints and network connectivity propagate local charges across markets. To our knowledge, this is the first paper to derive these firm-response channels structurally within an endogenous, multi-regulator setting and to show that the welfare ranking across governance regimes depends on which operational channel dominates.

Second, we characterize how market structure changes the regulatory environment faced by firms. The efficient emissions charge generally differs from the Pigouvian benchmark because it reflects not only marginal environmental damages but also pre-existing distortions from market power, frequency choices, and demand heterogeneity. As a result, the optimal charge may fall below the social cost of carbon and, in some cases, become a subsidy. We show that independent regional regulators impose excessive aggregate charges in symmetric markets, and that overlapping regional and global regulation further amplifies this inefficiency. However, when regions differ in costs, demand, or emissions intensity, the welfare ranking across governance regimes can reverse. A uniform global charge may be too coarse to accommodate regional heterogeneity, so decentralized or overlapping regulation can dominate uniform global regulation despite regulatory competition. We identify this reversal analytically in a symmetric duopoly extension and confirm it numerically in an asymmetric two-region network. This result extends the environmental-federalism literature \citep{williams2012growing, banzhaf2012fiscal, coria2021interjurisdictional} to network industries in which firms transmit regulatory signals across markets rather than operating as independent facilities.

Third, we quantify these mechanisms in a calibrated aviation case study and show that the relevant regulatory benchmark for firms is not necessarily the welfare-maximizing regime, but the regime that can be sustained politically. The extended model is applied to North American, Western European, and transatlantic aviation markets with heterogeneous airlines, nested logit demand, endogenous fleet choices, and hub-and-spoke routing. Differentiated global regulation achieves the highest aggregate welfare, but it leaves North America worse off than under decentralized regional regulation, creating an incentive to defect. We compute the side-payment range required to sustain cooperation, approximately \euro 7 to \euro 10 billion annually, and show that the direction of transfers can reverse under lower carbon damages, when the global regulator optimally subsidizes the industry to mitigate market-power distortions. The case study therefore shows that fragmented regulation can persist even when coordinated regulation is welfare superior. For firms, this implies that capacity, fleet renewal, and pricing decisions should be evaluated under fragmented regulatory exposure rather than under the assumption that global coordination is the natural long-run outcome.

The remainder of the paper is organized as follows. Section~\ref{sec:literature} reviews the related literature. Section~\ref{sec:analytical_model} presents the analytical model, derives the regulators' best responses and the resulting equilibrium and welfare rankings across regulatory regimes. We also analyze asymmetric markets in which regions differ in market structure and flight distance. Section~\ref{sec:extended_model} develops an extended, choice-based numerical model that embeds nested logit demand, hub-and-spoke routing and endogenous fleet choice within the same two-stage game. Section~\ref{sec:case_study} applies this model to the North American, Western European and transatlantic aviation markets, comparing airline responses and welfare outcomes across governance regimes and quantifying the side payments required to sustain cooperation. Section~\ref{sec:conclusions} concludes.

\section{Related Literature}
\label{sec:literature}
This research connects three bodies of literature. First, how do firms in network industries adjust their operations in response to environmental regulation? Second, how do multiple regulators at regional and global levels interact when setting environmental charges on the same multi-market firms? Third, how do pre-existing market distortions, including market power, frequency externalities and carbon leakage, alter the effectiveness of these charges? Comprehensive reviews of the individual domains are available elsewhere \citep{de2012transport, zhang2012airports, stavins2022relative}.

\subsection{Firm strategic responses to environmental regulation}

The operations management literature establishes that firm-level responses to emissions regulation are strategic rather than passive. \citet{drake2016technology} and \citet{krass2013environmental} show that the form of regulation affects both the direction and magnitude of firms' technology and capacity choices. \citet{cachon2014retail} and \citet{park2015supply} demonstrate that network configuration and supply-chain design mediate the environmental and welfare consequences of carbon pricing, so that a carbon price alone is an ineffective instrument when network structure shapes firm and consumer behavior. We extend these insights to a multi-jurisdictional setting in which overlapping regulators and multi-market firm competition jointly determine equilibrium outcomes.

A parallel game-theoretic literature models firm competition in network industries through prices, frequencies and network design \citep[e.g.,][]{Hansen1990, Hong1992, Dobson1993, Hendricks1999, Adler2005, Vaze2012, Adler2014}, reviewed in \citet{adler2021review}; see also \citet{doyme2019simulating}. This literature has developed rich models of oligopolistic competition on networks but has rarely incorporated environmental policy. The strand closest to our second-stage model studies how landing fees and capacity choices shape aircraft size, seat availability and service frequency in duopoly markets \citep{wei2005impact, wei2006impact, wei2007airlines}. Our second-stage model also relates to choice-based airline planning that integrates passenger choice with timetabling, schedule design, fleet assignment and network planning \citep{wei2020airline, yan2022choice, birolini2021airline}.

Studies of firm responses to environmental regulation in aviation typically take the regulatory regime as exogenous. \citet{Brueckner2010} show that a well-designed emission charge in a duopoly improves welfare through fleet upgrades and more efficient network choices, while \citet{Yuen2011} show that unilateral greenhouse-gas regulation can induce inefficient networks by distorting competition between domestic and foreign carriers. Further work examines how emissions trading, allowance allocation and offset schemes affect entry, competition and collusion \citep{barbot2014trade, Sheu2014, Zheng2019}, and how the EU ETS and CORSIA compare as instruments \citep{anger2010including, Vespermann2011, scheelhaase2018eu}. \citet{chen2024impacts} examine sustainable aviation fuel mandates, and \citet{adler2026estimating} estimate the nested logit demand function used in our case study. Simulation and econometric studies extend this work to ticket prices, demand, network configurations and fleet decisions \citep{Sgouridis2011, Kahn2016, Oesingmann2022}. \citet{chen2026carbon} estimate a structural model of European airline competition with endogenous entry, frequencies and pricing under the EU ETS phase-out of free allowances, finding asymmetric impacts across carrier types and segments but holding the regulatory regime fixed.

The closest antecedent is \citet{ovaere2025strategic}, who model a multi-stage game with governments endogenously setting fuel taxes and efficiency standards followed by competition between aircraft manufacturers. We share their structural approach of endogenizing regulatory decisions within a game-theoretic model of industry competition, but study downstream carrier operations rather than upstream technology markets. The operational responses of multi-market carriers, including cost pass-through across heterogeneous demand segments and fleet redeployment across jurisdictions, are the primary channel through which overlapping environmental policies produce unintended welfare consequences.

\subsection{Multi-level governance and regulatory competition}

A large literature examines the design of carbon pricing instruments, including revenue recycling \citep{Marron2014, timilsina2022carbon} and price-based versus quantity-based approaches \citep{weitzman2017voting}. Research on climate clubs emphasizes the difficulty of sustaining international cooperation without enforcement and highlights the pervasive problem of free-riding \citep{carraro1993strategies, barrett1994self, Nordhaus2015}, with recent work suggesting that current mitigation efforts are insufficient to meet temperature targets under plausible uncertainty \citep{Nordhaus2018, Stiglitz2019}.

When multiple carbon pricing instruments apply simultaneously, their interaction depends on design, timing and location in the value chain. \citet{perino2025overlapping} develop a general framework classifying overlapping policies according to their interaction with a cap-and-trade system, and \citet{wu2026policy} empirically confirm this pattern across more than 100 countries. Related work shows that overlapping instruments can reinforce or undermine an underlying carbon market depending on the degree of coordination between jurisdictions \citep{goulder2011challenges, Baylis2013}. These contributions provide conceptual foundations but analyse overlapping instruments at a macro or sectoral level without modelling individual firm responses.

The environmental federalism literature is closely related. Classical analyses of multi-level government \citep{oates1972fiscal, gordon1983optimal, keen2002does} have been extended to environmental policy in settings with cross-border spillovers and heterogeneous abatement costs \citep{williams2012growing, banzhaf2012fiscal, coria2021interjurisdictional}. \citet{goulder2012unintended} identify severe leakage when regional initiatives are nested within uniform federal standards, since strict regional limits relax the binding federal constraint and allow firms to shift high-emission activity to unregulated markets. \citet{coria2021interjurisdictional} show that a federal tax on a local pollutant can lead local authorities to raise their own rates, increasing rather than rationalising total taxation. More broadly, \citet{kennedy1994equilibrium} and \citet{barrett1994strategic} show that governments strategically set environmental taxes to shift rents toward domestic firms under imperfect competition. We find a similar pattern, with local regulators using emission charges partly to protect domestic carriers and extract surplus from foreign passengers.

The transportation literature has focused primarily on congestion pricing rather than environmental regulation. \citet{pels2004economics} show that a Pigouvian tax can be welfare-reducing when market power is present, with optimal tolls potentially negative under coordination and excessive under tax competition. \citet{silva2014airline} extend this to airline route competition and confirm that regulatory competition generates welfare-reducing deviations from the social optimum. Our setting differs in two respects. The tax base is not a set of independent facilities but a network industry in which firms connect multiple jurisdictions, so regulators influence each other through the network decisions of multi-market firms in addition to standard tax-base effects. We also consider regional and global regulators jointly in an inherently transnational sector, capturing the consequences for firms that must navigate jurisdiction-specific welfare maximization alongside global environmental objectives.

\subsection{Market distortions in network industries}

Network industries exhibit three structural features that complicate environmental regulation. Hub-and-spoke networks and limited infrastructure generate market power, allowing firms to extract surplus through prices that deviate from competitive benchmarks \citep{borenstein1989hubs, brueckner1992fare}. Since output in oligopolistic markets falls below the competitive level, taxing emissions further restricts output and may reduce welfare even when addressing a genuine externality \citep{barnett1980pigouvian, brueckner2002airport}. These distortions interact with positive frequency externalities. The Mohring effect \citep{mohring1972optimization} implies that higher frequency reduces schedule delay and raises customer utility, so charges that reduce frequency impose welfare costs beyond the direct price effect. Multi-jurisdictional networks also expose firms and regulators to carbon leakage, with charges in one region reallocating traffic and emissions toward less regulated markets \citep{carbone2013linking, Perino2019}. \citet{deJong2022fleet} document a related form of leakage in which faster fleet upgrades on short-haul routes subject to the EU ETS are partially offset by slower upgrades on long-haul routes.

Our contribution integrates these strands. We model first-stage regulatory decisions by multiple authorities optimizing jurisdiction-specific welfare subject to global environmental damages, together with a second-stage model of oligopolistic firm competition in which carriers respond through pricing, capacity and fleet decisions. Endogenising both the regulatory and the firm-level decisions clarifies how market distortions and network effects shape the welfare outcomes of carbon pricing, and connects firm-response and regulatory-competition literatures that have developed largely independently.

\section{Analytical Model}
\label{sec:analytical_model}

We develop a stylized analytical framework that isolates the key mechanisms through which firms respond to environmental charges and how these responses interact with regulatory design. The model features two symmetric airlines serving a single market, up to three regulators operating at local and global levels, and passengers whose willingness to pay is also a function of service frequency. We state the main assumptions below, then derive the policy instrument and regulatory best responses (\ref{subsec:generic_instrument}--\ref{subsec:scenarios}), establish equilibrium and welfare rankings (\ref{subsec:rankings}--\ref{subsec:welfare}), and extend the analysis to asymmetric competition (\ref{subsec:asymmetry}). The assumptions are as follows:

\begin{description}
  \item [\textit{Assumption 1:}] Airlines compete in a Cournot fashion. Specifically, airlines choose the quantity they want to supply, and the fares are derived accordingly from the inverse demand function. This is in line with the wide majority of supply chain industries and in particular with the aviation sector \citep{pels2004economics,silva2014airline}
  \item [\textit{Assumption 2:}] The airline $i$ cost function is composed of a variable cost $c_p$ depending on the number of passengers $q_i$, a variable cost $c_f$ based on the service frequency $f_i$, and a fixed cost $F$:
        \begin{equation}
          c_p q_i + c_f f_i + F
          \label{eq:cost function cournot}
        \end{equation}
  \item [\textit{Assumption 3:}] Airlines operate at a constant load factor, implying that service frequency depends on the seat capacity $s$ offered on a flight:
        \begin{equation}
          f_{i} = \frac{q_{i}}{s}
          \label{eq:load_factor}
        \end{equation}
\end{description}

This assumption implies that an increase in the number of passengers served by an airline leads directly to higher service frequency. This creates a positive frequency externality for passengers because the more travelers on a route, the higher the frequency, which benefits all users of that route. The Mohring effect \citep{mohring1972optimization} plays a central role in the model by separating the welfare-maximizing charge from the standard Pigouvian benchmark.

\begin{description}
  \item[\textit{Assumption 4:}] The passenger utility function is quadratic in the consumption of flights \citep{singh1984price, adler2016regulating} and the quality measure depends on frequency:
        \begin{equation}
          U(q_1, q_2) = \sum_{i=1}^2[\alpha + \beta \ln(1+f_i)] q_i - \frac{1}{2} (q_1 + q_2)^2
          \label{eq:utility}
        \end{equation}
        which, when differentiated with respect to $q_i$, leads to an inverse demand function:
        \begin{equation}
          \frac{\partial U}{\partial q_i} = \left[\alpha + \beta \ln(1 + f_i)\right] + \frac{\beta q_i}{s + q_i} - (q_i + q_j) = \alpha + \beta\, \phi(q_i) - (q_i + q_j) = p_i,
          \label{eq:inverse_demand}
        \end{equation}
        where $\alpha > 0$, $\beta > 0$ and $s > 0$ and:
        \begin{equation}
          \phi(q_i) \equiv \ln \left(1 + \frac{q_i}{s} \right) + \frac{q_i}{s + q_i}
          \label{eq:phi_function}
        \end{equation}
\end{description}

The function $\phi(q_i)$ captures the combined effect of frequency on willingness to pay. Its first component, $\ln(1 + q_i/s)$, reflects the direct utility passengers derive from higher frequency, while the second component, $q_i/(s + q_i)$, captures the marginal frequency benefit that the airline can appropriate through fares. The parameter $\beta$ governs the strength of the frequency effect. When $\beta$ is large relative to $s$, frequency externalities are strong and the case for taxing emissions is weakened because the social benefit from additional flights is high.

\begin{description}
  \item [\textit{Assumption 5:}] We assume that airlines operate a single type of aircraft in our model. We will relax this assumption in the case study, allowing for environmental improvements in the fleet composition.
\end{description}

We define the properties of $\phi$ and the consumer surplus expression in Appendix~\ref{app:aux}.

\subsection{Policy instrument}
\label{subsec:generic_instrument}

We model carbon policy as a round-trip instrument $T$ that enters the airline's marginal cost $m(T)$ additively through frequency:
\begin{equation}
  m(T) := c_p + \frac{c_f + T}{s}.
  \label{eq:mT}
\end{equation}
Given $T$, airline $i$'s profit is:
\begin{equation}
  \pi_i(q_i, q_j; T) = \big[p_i(q_i, q_j) - m(T)\big]q_i - F.
  \label{eq:profit_analytics}
\end{equation}

\begin{lemma}[Symmetric Cournot equilibrium]
  \label{lem:cournot}
  Let $T$ be any scalar instrument and define the Cournot first-order condition as:
  \begin{equation}
    G_i(q_i, q_j; T) := \alpha - m(T) - q_j + \beta\, \phi(q_i) + \beta q_i \phi'(q_i) - 2q_i = 0.
    \label{eq:Gi}
  \end{equation}
  Under symmetry ($q_i = q_j = q$), the equilibrium quantity $q^*(T)$ is uniquely determined by the first-order condition:
  \begin{equation}
    G(q; T) := \alpha - m(T) + \beta \phi(q) + \beta q \phi'(q) - 3q = 0.
    \label{eq:Gsym}
  \end{equation}
  Furthermore, provided standard stability conditions hold (see Appendix \ref{app:cournot}), the equilibrium output $q^*(T)$ is strictly decreasing in $T$, while the equilibrium price $p^*(T)$ is strictly increasing in $T$.
\end{lemma}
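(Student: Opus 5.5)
The plan has three steps: derive the first-order conditions and impose symmetry, establish existence and uniqueness of the symmetric root, and then sign the comparative statics with the implicit function theorem.

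\textbf{Deriving $G$.} Differentiate $\pi_i$ in \eqref{eq:profit_analytics} with respect to $q_i$, using the inverse demand \eqref{eq:inverse_demand}. Since $\partial p_i/\partial q_i = \beta\phi'(q_i) - 1$, the first-order condition is
\begin{equation*}
p_i - m(T) + q_i\big(\beta\phi'(q_i) - 1\big) = 0 .
\end{equation*}
Substituting $p_i$ gives exactly $G_i$ in \eqref{eq:Gi}. Imposing $q_i = q_j = q$ yields $G(q;T)$ in \eqref{eq:Gsym}.

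\textbf{Existence and uniqueness.} I will use the properties of $\phi$ from Appendix~\ref{app:aux}: $\phi(0)=0$, $\phi$ increasing and concave, and $\phi'$ bounded with $\phi'(q)\to 0$ as $q\to\infty$. From these I would show:
\begin{itemize}
\item $G(0;T) = \alpha - m(T)$, which is positive whenever the market is viable, i.e.\ $\alpha > m(T)$;
\item $G(q;T)\to -\infty$ as $q\to\infty$, because $\beta\phi(q)$ grows only logarithmically and $\beta q\phi'(q)$ stays bounded, while $-3q$ is linear.
\end{itemize}
Existence of a root $q^*(T)>0$ then follows from the intermediate value theorem. For uniqueness I want $G$ strictly decreasing in $q$. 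Computing
\begin{equation*}
\partial_q G = 2\beta\phi'(q) + \beta q\phi''(q) - 3,
\end{equation*}
I will bound this using the explicit forms $\phi'(q) = \frac{1}{s+q} + \frac{s}{(s+q)^2}$ and $\phi''<0$. This gives $\partial_q G < 2\beta\phi'(0) - 3 = 4\beta/s - 3$.

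\textbf{Main obstacle.} Uniqueness genuinely requires a parameter restriction, something like $\beta < 3s/4$ or the weaker requirement that $\partial_q G<0$ along the relevant range. This is presumably the ``standard stability condition'' referred to in Appendix~\ref{app:cournot}, and I will state it as such. The same condition should also give the Cournot stability requirement that the own-effect dominates the cross-effect:
\begin{equation*}
|\partial G_i/\partial q_i| > |\partial G_i/\partial q_j| = 1 .
\end{equation*}
This ensures the symmetric equilibrium is locally stable and that no asymmetric equilibrium interferes with the symmetric selection. A fully global treatment of asymmetric equilibria is delicate, so I will restrict attention to the symmetric solution under that condition.

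\textbf{Comparative statics.} Since $\partial_T G = -m'(T) = -1/s < 0$ and $\partial_q G<0$ at the root, the implicit function theorem gives
\begin{equation*}
\frac{dq^*}{dT} = -\frac{\partial_T G}{\partial_q G} = \frac{1/s}{\partial_q G} < 0 .
\end{equation*}
For the price, I will avoid differentiating $p^*$ directly. Instead I use the first-order condition to write
\begin{equation*}
p^* = m(T) + q^*\big(1 - \beta\phi'(q^*)\big).
\end{equation*}
Differentiating directly is also feasible: $dp^*/dT = (\beta\phi'(q^*) - 2)\,dq^*/dT$. This is positive provided $\beta\phi'(q^*)<2$, which is implied by the stability condition since $\beta\phi'\le 2\beta/s<3/2$. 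Hence $p^*$ is strictly increasing in $T$.
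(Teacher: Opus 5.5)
Your derivation of $G_i$ and $G$ matches the paper's proof essentially line by line, as do the implicit-function step with $G_T=-1/s$ and $G_q=2\beta\phi'(q)+\beta q\phi''(q)-3$ and the price step $dp^*/dT=(\beta\phi'(q^*)-2)\,dq^*/dT$. Your intermediate-value-plus-monotonicity argument for existence and uniqueness of the symmetric root is a genuine addition: the paper asserts that $q^*(T)$ is uniquely determined but never proves it.

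One claim is wrong, however, and your chosen parameter condition is too weak. The own-over-cross dominance condition $|\partial G_i/\partial q_i|>1$ reads $2\beta\phi'(q)+\beta q\phi''(q)<1$. This is exactly the paper's stability condition \eqref{eq:stability}, equivalently $\partial_q G<-2$. It does not follow from $\beta<3s/4$, which only gives $\partial_q G<0$.

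The weaker condition also creates a real problem. For $\beta\in(s/2,3s/4)$, the own-profit second derivative $\partial G_i/\partial q_i=2\beta\phi'(q)+\beta q\phi''(q)-2$ is positive near $q=0$, since it tends to $4\beta/s-2>0$. So when $\alpha-m(T)$ is small, the unique root of $G$ satisfies the first-order condition at a local minimum of $\pi_i(\cdot,q^*)$. That root is then not a Cournot equilibrium at all.

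The fix is to work under the paper's standing assumption $\beta<s/2$, which its own proof already invokes in the price step. Then $2\beta\phi'(q)+\beta q\phi''(q)\le 4\beta/s<2$ for every $q\ge 0$. This delivers three things:
\begin{itemize}
\item each $\pi_i$ is strictly concave in $q_i$;
\item $\partial_q G<-1$ globally, which gives both uniqueness and $dq^*/dT<0$;
\item $\beta\phi'(q^*)<1$, which settles the price result.
\end{itemize}
The lemma's conclusions only require $\partial_q G<0$ at the root, so you do not need \eqref{eq:stability}. You should not claim that your parameter restriction delivers it.
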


\subsection{Regulatory settings and best responses}
\label{subsec:scenarios}

We consider four regulatory settings in the analytical model, summarized in Table~\ref{tab:scenarios}, distinguished by the interactions between local and global authorities. A fifth benchmark, differentiated global regulation, is introduced in the case study once regional heterogeneity makes region-specific charges meaningful. Throughout the analysis, we distinguish coordination from instrument flexibility. Coordination refers to whether one authority internalizes emissions and surplus across regions. Instrument flexibility refers to whether the authority can set one common charge or region-specific charges. These dimensions need not coincide: uniform global regulation is coordinated but inflexible, decentralized regional regulation is flexible but non-cooperative, differentiated global regulation combines coordination with regional flexibility, and overlapping regulation combines regional and global instruments. We assume that consumer surplus and environmental damage, $e$, are shared equally between the two jurisdictions. This assumption is consistent with IPCC assessments that indicate comparable climate vulnerability across North American and European regions \citep{portner2022climate}.

\begin{table}[ht]
  \centering
  \caption{Regulatory scenarios and benchmark labels}
  \label{tab:scenarios}
  \begin{tabular}{lll}
    \hline
    Scenario & Description                       & Total Instrument ($T$)                  \\ \hline
    1        & Baseline                          & $T^{(1)} = 0$                           \\
    2        & Decentralized regional regulation & $T^{(2)} = \tau_L + \tau_{L'}$          \\
    3        & Uniform global regulation         & $T^{(3)} = \tau_G$                      \\
    4        & Overlapping regulation            & $T^{(4)} = \tau_L + \tau_{L'} + \tau_G$ \\ \hline
  \end{tabular}
\end{table}

The welfare functions for a local regulator ($SW_L$) and global regulator ($SW_G$) are defined as:
\begin{equation}
  SW_L = \tfrac{1}{2} CS(q_1, q_2) + \pi_1 + \tau_L\tfrac{(q_1 + q_2)}{s} - \tfrac{1}{2}\tfrac{e(q_1 + q_2)}{s}
  \label{eq:SWL}
\end{equation}
\begin{equation}
  SW_G = CS(q_1, q_2) + \pi_1 + \pi_2 + \tau_G \tfrac{(q_1 + q_2)}{s} - \tfrac{e(q_1 + q_2)}{s}
  \label{eq:SWG}
\end{equation}

We can now define the regulator's best response in each scenario. Let $e > 0$ denote the marginal environmental damage per round trip.

\begin{proposition}[Local and global best responses]
  \label{prop:best_responses}
  Evaluating the best response for symmetric output $q$:

  \smallskip
  \noindent (i) The optimal carbon tax for the local regulator in Scenarios 2 and 4 is:
  \begin{equation}
    \tau_L^*(q) = \tau_{L'}^*(q) = \frac{1}{2}\left[e + 2sq - \frac{2\beta s^3 q}{(s+q)^3}\right]
    \label{eq:tauL}
  \end{equation}

  \smallskip
  \noindent (ii) The optimal carbon tax for the global regulator in Scenarios 3 and 4 is:
  \begin{equation}
    \tau_G^*(q) = e - sq\bigl[1 - \beta\phi'(q)\bigr] = e - sq + \frac{\beta s q(2s+q)}{(s+q)^2}.
    \label{eq:tauG}
  \end{equation}
\end{proposition}

The optimal charges in Equations~\ref{eq:tauL} and~\ref{eq:tauG} each contain three components. Both the local and global regulators set charges that reflect the environmental damage, $e$, however airlines already restrict output below competitive levels, so an additional tax exacerbates this distortion, leading regulators to set the carbon tax below the social cost of emissions. The Mohring effect, captured by the $\beta$-dependent terms, partially offsets this downward pressure because a reduction in frequency lowers consumer surplus for the remaining passengers. The optimal carbon tax is therefore set below the social cost of emissions but remains positive for sufficiently high environmental damage estimations.

\begin{corollary}[Tax ranking]
  \label{cor:tax_rank}
  Assuming $\beta < s/2$, then for all $q > 0$:
  \begin{equation}
    2\tau_L^*(q) > \tau_G^*(q),
    \qquad
    T^{(4)}(q) = 2\tau_L^*(q) + \tau_G^*(q) > 0.
    \label{eq:tax_rank}
  \end{equation}
\end{corollary}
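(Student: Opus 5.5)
The plan is direct substitution of the closed forms from Proposition~\ref{prop:best_responses}, followed by factoring out $sq>0$ and bounding the $\beta$-dependent terms uniformly in $q$. No equilibrium argument is needed: both claims are pointwise inequalities in $q>0$, with $s>0$, $e>0$, $\beta>0$, and (for the first claim only) $\beta<s/2$.

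\emph{First inequality.} Using \eqref{eq:tauL} and \eqref{eq:tauG}, the $e$ terms cancel and
\begin{equation*}
  2\tau_L^*(q)-\tau_G^*(q) = sq\left[3-\beta\left(\frac{2s^2}{(s+q)^3}+\frac{2s+q}{(s+q)^2}\right)\right].
\end{equation*}
Since $sq>0$, it suffices to bound the bracketed $\beta$-coefficient by $4/s$ for all $q>0$. For the first term, $(s+q)^3>s^3$ gives $2s^2/(s+q)^3<2/s$. For the second term, $s(2s+q)=2s^2+sq\le 2(s+q)^2$ gives $(2s+q)/(s+q)^2\le 2/s$. Hence the bracket is strictly greater than $3-4\beta/s$. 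Under $\beta<s/2$ this lower bound exceeds $1$, so the bracket is positive and $2\tau_L^*(q)>\tau_G^*(q)$.

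\emph{Second inequality.} Summing the same expressions gives
\begin{equation*}
  2\tau_L^*(q)+\tau_G^*(q) = 2e+sq+\beta sq\left[\frac{2s+q}{(s+q)^2}-\frac{2s^2}{(s+q)^3}\right].
\end{equation*}
Placing the bracket over the common denominator $(s+q)^3$ gives numerator $(2s+q)(s+q)-2s^2=3sq+q^2>0$. Therefore $T^{(4)}(q)>2e+sq>0$, and this part does not require the restriction on $\beta$.

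The only delicate step is the uniform bound on the Mohring terms in the first inequality, which is where the condition $\beta<s/2$ is used. I would check it by the elementary comparisons $(s+q)^3>s^3$ and $2(s+q)^2\ge s(2s+q)$. The resulting crude bound $4\beta/s<2<3$ leaves ample slack, so I expect no real obstacle beyond careful algebra.
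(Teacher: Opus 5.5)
Your proposal is correct and follows the paper's proof essentially line for line: the same factorization $2\tau_L^*(q)-\tau_G^*(q)=sq\left[3-\beta\left(\frac{2s^2}{(s+q)^3}+\frac{2s+q}{(s+q)^2}\right)\right]$ with each $\beta$-term bounded by $2/s$, and the same closed form $T^{(4)}(q)=2e+sq+\beta sq(3sq+q^2)/(s+q)^3$ for the second claim. Your observation that the second inequality needs only $\beta>0$, not $\beta<s/2$, matches the paper's argument.
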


This ranking shows that decentralized regional regulators will collectively set a higher tax relative to the coordinated uniform benchmark. While each local authority internalizes the environmental damage from departures within its jurisdiction, it only accounts for the welfare losses borne by its own residents and airlines. Consequently, uncoordinated local policies lead to excessive output restrictions, a distortion that is further compounded under the overlapping charges of Scenario 4.

\subsection{Equilibrium rankings}
\label{subsec:rankings}

Let $q^{(\sigma)*}$ and $p^{(\sigma)*}$ denote the symmetric equilibrium quantity and price in scenario $\sigma \in \{1,2,3,4\}$, with policy instruments evaluated at their fixed points.

\begin{proposition}[Ranking when $\tau_G^*(q) \geq 0$]
  \label{prop:rank_pos}
  Assume Lemma~\ref{lem:cournot} and suppose $\tau_G^*(q^{(3)*}) \geq 0$. Then:
  \begin{align}
     & 0 = T^{(1)} \le T^{(3)*} < T^{(2)*} < T^{(4)*}, \label{eq:T_rank_pos} \\
     & q^{(4)*} < q^{(2)*} < q^{(3)*} \leq q^{(1)*}, \label{eq:q_rank_pos}   \\
     & p^{(4)*} > p^{(2)*} > p^{(3)*} \geq p^{(1)*}. \label{eq:p_rank_pos}
  \end{align}
  If $\tau_G^*(q^{(3)*})>0$, all inequalities are strict.
\end{proposition}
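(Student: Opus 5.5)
The strategy is to reduce all three chains to a single ranking of the equilibrium instruments $T^{(\sigma)*}$. Lemma~\ref{lem:cournot} says $q^*(T)$ is strictly decreasing and $p^*(T)$ strictly increasing in $T$. Once \eqref{eq:T_rank_pos} holds, \eqref{eq:q_rank_pos} and \eqref{eq:p_rank_pos} follow immediately. The real work is therefore ranking the instruments at their fixed points. This is subtle, because Corollary~\ref{cor:tax_rank} compares $2\tau_L^*(q)$ and $\tau_G^*(q)$ only at a common $q$, whereas each scenario is evaluated at its own equilibrium quantity.

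To handle this, I would define for each scenario a composite map $\Psi_\sigma(T) := T^{(\sigma)}(q^*(T))$. The three maps are $\Psi_3(T)=\tau_G^*(q^*(T))$, $\Psi_2(T)=2\tau_L^*(q^*(T))$ and $\Psi_4=\Psi_2+\Psi_3$. An equilibrium is then a fixed point $T=\Psi_\sigma(T)$.

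The key monotonicity fact is that $\tau_L^*$ and $\tau_G^*$ are increasing in $q$ under $\beta<s/2$.
\begin{itemize}
\item From \eqref{eq:tauL}, the derivative of $2\tau_L^*$ is $2s - 2\beta s^3 (s-2q)/(s+q)^4$, which is at least $2s(1-\beta/s)>0$.
\item From \eqref{eq:tauG}, the derivative of $\tau_G^*$ is $-s+\beta s\,\frac{d}{dq}[q(2s+q)/(s+q)^2] = -s + 2\beta s^3/(s+q)^3$, which is negative when $\beta<s/2$.
\end{itemize}
So in fact $\tau_G^*$ is \emph{decreasing} in $q$, and I would redo the comparison carefully with the correct signs. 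Since $q^*(T)$ is decreasing, $\Psi_2$ is decreasing in $T$ and $\Psi_3$ is increasing in $T$ (the latter with slope below one, by the stability conditions). Under the stability conditions of Appendix~\ref{app:cournot}, which I would assume guarantee $\Psi_\sigma'(T)<1$, each $T-\Psi_\sigma(T)$ is strictly increasing and has a unique root.

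The ranking then comes from a standard comparative-statics argument on fixed points. If $\Psi_a(T)>\Psi_b(T)$ for all $T$ and both maps satisfy $\Psi'<1$, then $T^{(a)*}>T^{(b)*}$. To see this, evaluate $T-\Psi_a(T)$ at $T^{(b)*}$: it equals $\Psi_b(T^{(b)*})-\Psi_a(T^{(b)*})<0$, so the root of $T-\Psi_a$ lies to the right. The two pointwise comparisons needed are:
\begin{itemize}
\item $\Psi_2>\Psi_3$, which is Corollary~\ref{cor:tax_rank} evaluated at $q^*(T)$;
\item $\Psi_4>\Psi_2$, which holds wherever $\Psi_3>0$.
\end{itemize}

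The main obstacle is that last comparison, because $\tau_G^*$ need not be positive everywhere. I would only need $\Psi_3(T)>0$ on the relevant interval $T\in[T^{(2)*},\infty)$. The hypothesis gives $\tau_G^*(q^{(3)*})\ge 0$. Because $T^{(2)*}>T^{(3)*}$, we have $q^*(T)<q^{(3)*}$ on that interval, and $\tau_G^*$ is decreasing in $q$, so $\Psi_3(T)\ge\tau_G^*(q^{(3)*})\ge0$ there. I would use the strict version of this inequality where available. If equality holds, I would argue that $\Psi_3$ is strictly increasing, so it is strictly positive for $T>T^{(3)*}$.

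Finally, $T^{(3)*}=\tau_G^*(q^{(3)*})\ge 0=T^{(1)}$ by hypothesis, with strict inequality exactly when $\tau_G^*(q^{(3)*})>0$. This inequality is the only one that can be weak, which is where the strictness clause of the statement comes from. Applying Lemma~\ref{lem:cournot} to \eqref{eq:T_rank_pos} then yields \eqref{eq:q_rank_pos} and \eqref{eq:p_rank_pos}.
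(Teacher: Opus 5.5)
Your proposal is correct and is essentially the paper's proof written in instrument space instead of quantity space: because $q^*(T)$ inverts the paper's map $H(q)=T$, your $T-\Psi_\sigma(T)$ is exactly the paper's $F_\sigma(q)=H(q)-T^{(\sigma)}(q)$ evaluated at $q=q^*(T)$, and your two comparisons (Corollary~\ref{cor:tax_rank} at the Scenario~3 fixed point, then $\tau_G^*$ decreasing in $q$ to get $\tau_G^*(q^{(2)*})>0$) are its Steps~1 and~2. Delete the opening claim that $\tau_G^*$ is increasing in $q$, which your own derivative computation correctly refutes.
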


These rankings have a direct environmental interpretation. Since emissions are proportional to output in our framework, the quantity ordering in Eq.~(\ref{eq:q_rank_pos}) also ranks the scenarios by total emissions. Uniform global regulation (Scenario~3) delivers the socially optimal emission level in the symmetric analytical benchmark, where one common charge is sufficient. Decentralized regional regulators (Scenario~2) over-tax and therefore reduce emissions below this optimum, at the cost of excessive output restriction. Overlapping regulation (Scenario~4) produces the most severe output contraction and the lowest emissions, but this additional abatement comes at a cost in welfare that exceeds its environmental benefit.

\subsection{Welfare rankings}
\label{subsec:welfare}

To distinguish aggregate welfare from the individual objective of regulators, we define global (network) welfare as:
\begin{equation}
  \mathcal{W}(q_1,q_2;e) = U(q_1, q_2) - \sum_{i=1}^2 \bigl(c_p q_i + c_f f_i + F\bigr) - e\,\tfrac{q_1+q_2}{s}.
  \label{eq:W_def}
\end{equation}
We note that the revenues collected from the carbon charge are a pure transfer from airline profits to government surplus and therefore cancel out. Under symmetry ($q_1 = q_2 = q$), using $f_i = q/s$:
\begin{equation}
  \mathcal{W}(q;e) = 2(\alpha - c_p)q + 2\beta q\ln\!\Bigl(1 + \tfrac{q}{s}\Bigr)- 2q^2 - \tfrac{2(c_f+e)}{s}q - 2F.
  \label{eq:W_sym}
\end{equation}

\noindent It is convenient to decompose welfare as:
\begin{equation}
  \mathcal W(q; e) = \mathcal W_0(q) - \frac{2e}{s}q
  \label{eq:W_decomp}
\end{equation}
where $\mathcal W_0(q):=\mathcal W(q;0)$ denotes baseline welfare in the absence of environmental externalities.

\begin{lemma}[Concavity of global welfare]
  \label{lem:welfare_concavity}
  For $\beta < s/2$ and $\alpha > c_p + (c_f + e)/s$, then $\mathcal W(q; e)$ is strictly concave in $q \geq 0$ and the social optimum (SO),
  $q^{SO}(e)$, is characterized by:
  \begin{equation}
    \alpha + \beta \phi\big(q^{SO}(e)\big) - 2q^{SO}(e) = c_p + \frac{c_f + e}{s}.
    \label{eq:qSO}
  \end{equation}
\end{lemma}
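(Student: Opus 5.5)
The plan is to work directly with the symmetric welfare expression \eqref{eq:W_sym}: compute its first two derivatives in $q$, obtain a uniform negative bound on the second derivative, and then use the sign of the first derivative at the boundary and at infinity to locate a unique interior maximizer. The key observation is that the derivative of $q\ln(1+q/s)$ is exactly $\phi(q)$ as defined in \eqref{eq:phi_function}, so the first-order condition collapses to \eqref{eq:qSO}.

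\emph{Step 1 (first derivative).} Since $\tfrac{d}{dq}\bigl[q\ln(1+q/s)\bigr] = \ln(1+q/s) + \tfrac{q}{s+q} = \phi(q)$, differentiating \eqref{eq:W_sym} gives
\begin{equation*}
  \mathcal W'(q;e) = 2\Bigl[\alpha - c_p + \beta\phi(q) - 2q - \tfrac{c_f+e}{s}\Bigr].
\end{equation*}
Setting this to zero yields \eqref{eq:qSO} directly.

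\emph{Step 2 (concavity).} Differentiating once more gives $\mathcal W''(q;e) = 2\beta\phi'(q) - 4$. Using the properties of $\phi$ from Appendix~\ref{app:aux}, or a direct computation, $\phi'(q) = \tfrac{1}{s+q} + \tfrac{s}{(s+q)^2} = \tfrac{2s+q}{(s+q)^2}$. Its derivative is $-(3s+q)/(s+q)^3 < 0$, so $\phi'$ is decreasing on $q\ge 0$ with maximum $\phi'(0) = 2/s$. Hence
\begin{equation*}
  \mathcal W''(q;e) \le \tfrac{4\beta}{s} - 4 < 0 \quad \text{for all } q \ge 0,
\end{equation*}
whenever $\beta < s$, and in particular under the assumed $\beta < s/2$. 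This gives strict concavity uniformly on $q\ge0$. The only place any care is needed is the monotonicity of $\phi'$, which makes the bound at $q=0$ global. Note that the environmental term is linear, so it does not affect concavity; this is consistent with the decomposition \eqref{eq:W_decomp}.

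\emph{Step 3 (existence and uniqueness of an interior optimum).} At the boundary, $\phi(0)=0$, so $\mathcal W'(0;e) = 2\bigl[\alpha - c_p - (c_f+e)/s\bigr] > 0$ by the assumption $\alpha > c_p + (c_f+e)/s$. As $q\to\infty$, $\phi(q)$ grows only logarithmically, since it is at most $\ln(1+q/s)+1$, while the $-2q$ term is linear. Hence $\mathcal W'(q;e)\to -\infty$. Because $\mathcal W'$ is continuous and strictly decreasing by Step 2, the intermediate value theorem gives a unique $q^{SO}(e) > 0$ at which $\mathcal W' = 0$, and strict concavity makes it the unique global maximizer on $q\ge 0$. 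The growth comparison at infinity is the only mildly delicate point; it can alternatively be handled by using the bound $\mathcal W'' \le -4(1-\beta/s)$ to integrate $\mathcal W'$ down linearly.
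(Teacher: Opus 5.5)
Your proof is correct and follows the paper's argument: both compute $\mathcal W'(q;e)$ using $\tfrac{d}{dq}[q\ln(1+q/s)]=\phi(q)$, and both bound $\mathcal W''(q;e)=2\beta\phi'(q)-4\le 4\beta/s-4<0$ via $\phi'(q)\le\phi'(0)=2/s$. Your Step~3 adds something the paper's proof lacks: the paper simply asserts that a unique maximizer solving \eqref{eq:qSO} exists, whereas you check $\mathcal W'(0;e)>0$ and $\mathcal W'(q;e)\to-\infty$, which is exactly where the hypothesis $\alpha > c_p + (c_f+e)/s$ is needed to exclude a corner solution at $q=0$.
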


\begin{proposition}[Uniform global regulation leads to the social optimum in1 the symmetric benchmark]
  \label{prop:scenario3_SO}
  In Scenario~3, the uniform global regulation implements $q^{SO}(e)$ with a per round-trip charge:
  \begin{equation}
    \tau_G^{(3)*}(e) = e - s q^{SO}(e) \Big[1 - \beta \phi'\big(q^{SO}(e)\big)\Big].
    \label{eq:tauG_SO}
  \end{equation}
  Consequently, $\mathcal W^{(3)*}(e) = \max_{q \geq 0}\mathcal W(q; e)$ and Scenario~3 dominates the other analytical regimes from a global welfare perspective in the symmetric benchmark.
\end{proposition}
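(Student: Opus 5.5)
The plan is to compare the Cournot first-order condition with the social-optimum condition of Lemma~\ref{lem:welfare_concavity}. Choosing $\tau_G$ then amounts to choosing the symmetric equilibrium output. By Lemma~\ref{lem:cournot}, for any $T$ the symmetric equilibrium $q^*(T)$ solves $G(q;T)=0$, that is,
\begin{equation*}
\alpha + \beta\phi(q) + \beta q\phi'(q) - 3q = c_p + \frac{c_f+T}{s}.
\end{equation*}
Subtracting the social-optimum condition \eqref{eq:qSO} evaluated at $q=q^{SO}(e)$ gives the following. The equilibrium coincides with $q^{SO}(e)$ exactly when
\begin{equation*}
\frac{T-e}{s} = \beta q^{SO}\phi'(q^{SO}) - q^{SO},
\end{equation*}
which rearranges to \eqref{eq:tauG_SO}. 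Uniqueness of $q^*(T)$ from Lemma~\ref{lem:cournot} ensures that, at this $T$, the equilibrium is $q^{SO}(e)$ itself and not merely some root of $G$. Strict monotonicity of $q^*(T)$ in $T$ shows that this is the only charge that implements the optimum, provided $q^{SO}(e)$ lies in the range of $q^*(\cdot)$. If $\tau_G$ is allowed to be negative, that range is guaranteed by continuity together with $q^*\to 0$ and $q^*\to\infty$ at the ends.

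Next I show that the global regulator actually picks this charge, so that \eqref{eq:tauG_SO} is the fixed point of the best response \eqref{eq:tauG} in Proposition~\ref{prop:best_responses}(ii). In Scenario~3 the regulator's objective $SW_G$ equals $\mathcal W(q_1,q_2;e)$ once tax revenue is netted out as a transfer, since $CS+\pi_1+\pi_2+\text{revenue}=U-\text{costs}$. Its problem is therefore $\max_{\tau_G}\mathcal W(q^*(\tau_G);e)$. Because $q^*$ is strictly decreasing and continuous, this is the same as maximizing $\mathcal W(q;e)$ over attainable $q$. By the strict concavity in Lemma~\ref{lem:welfare_concavity}, the maximizer is $q^{SO}(e)$. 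As a consistency check, substituting $q=q^{SO}$ into \eqref{eq:tauG} reproduces \eqref{eq:tauG_SO}, and $\mathcal W^{(3)*}(e)=\max_{q\ge0}\mathcal W(q;e)$ follows.

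For dominance, I will note that Scenarios 1, 2 and 4 each yield some symmetric equilibrium output $q^{(\sigma)*}\ge 0$. Their global welfare is therefore $\mathcal W(q^{(\sigma)*};e)\le \max_q\mathcal W=\mathcal W^{(3)*}$. Strictness follows from strict concavity whenever $q^{(\sigma)*}\neq q^{SO}$. For Scenarios 2 and 4 this inequality is supplied by Proposition~\ref{prop:rank_pos} together with Corollary~\ref{cor:tax_rank}, and monotonicity of $q^*(T)$ converts a tax gap into an output gap. For Scenario 1 it holds unless $\tau_G^{(3)*}=0$.

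The main obstacle is attainability and interiority. I need $q^{SO}(e)>0$, which follows from $\alpha>c_p+(c_f+e)/s$ and concavity. I also need $q^{SO}$ to lie where the stability conditions of Lemma~\ref{lem:cournot} hold, so that $q^*(T)$ is a well-defined, continuous bijection onto a set containing $q^{SO}$. My first attempt would be to use $\beta<s/2$ to bound $\beta\phi'$ and $\beta q\phi''$, making $\partial G/\partial q<0$ globally. This gives a continuous, strictly decreasing inverse $T(q)=s[\alpha-c_p+\beta\phi+\beta q\phi'-3q]-c_f$ defined on all of $q>0$, which settles attainability directly.
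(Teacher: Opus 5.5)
Your proposal is correct and follows essentially the same route as the paper. Both net out tax revenue so the regulator maximizes $\mathcal W(q^*(\tau_G);e)$, use strict concavity (Lemma~\ref{lem:welfare_concavity}) to land on $q^{SO}(e)$, and subtract \eqref{eq:qSO} from \eqref{eq:Gsym} to recover \eqref{eq:tauG_SO}. Your version differs only in minor refinements: you recast the regulator's problem as a choice over attainable $q$, using that $\beta<s/2$ alone already gives $G_q<-1$ globally, so $q^*(\cdot)$ is a continuous decreasing bijection onto an interval containing $q^{SO}(e)$; this is a slightly tighter, global version of the paper's first-order argument and makes the strict dominance when $q^{(\sigma)*}\neq q^{SO}(e)$ explicit.
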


\begin{proposition}[Welfare ranking when the global tax is positive]
  \label{prop:welfare_pos}
  For $\beta < s/2$ and $\tau_G^{(3)*}(e) > 0$, then:
  \begin{equation}
    \mathcal{W}^{(3)*}(e) > \mathcal{W}^{(2)*}(e) > \mathcal{W}^{(4)*}(e),
    \qquad
    \mathcal{W}^{(3)*}(e) > \mathcal{W}^{(1)*}(e).
    \label{eq:W_rank_pos}
  \end{equation}
  Comparisons involving Scenario~1 versus Scenarios~2 and~4 depend on $e$.
  Let
  \begin{equation}
    \bar{e}_{21} := \inf\bigl\{ e \geq 0 : \mathcal{W}^{(2)*}(e) \geq \mathcal{W}^{(1)*}(e) \bigr\},
    \qquad
    \bar{e}_{41} := \inf\bigl\{ e \geq 0 : \mathcal{W}^{(4)*}(e) \geq \mathcal{W}^{(1)*}(e) \bigr\}.
    \label{eq:threshold_defs}
  \end{equation}
  Then $\bar e_{21} \leq \bar e_{41}$ and the remaining welfare ranking among Scenarios~1,2,4 is
  \begin{align*}
    \begin{array}{ll}
      \mathcal{W}^{(1)*} > \mathcal{W}^{(2)*} > \mathcal{W}^{(4)*}
       & \text{if } e < \bar{e}_{21},                   \\[4pt]
      \mathcal{W}^{(2)*} \geq \mathcal{W}^{(1)*} > \mathcal{W}^{(4)*}
       & \text{if } \bar{e}_{21} \leq e < \bar{e}_{41}, \\[4pt]
      \mathcal{W}^{(2)*} > \mathcal{W}^{(4)*} \geq \mathcal{W}^{(1)*}
       & \text{if } e \geq \bar{e}_{41}.
    \end{array}
  \end{align*}
  Welfare thresholds $\bar e_{21}$ and $\bar e_{41}$ are specified in Appendix~\ref{app:thresholds}.
\end{proposition}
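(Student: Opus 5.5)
The plan is to reduce every welfare comparison to a comparison of equilibrium quantities on the single concave curve $\mathcal W(q;e)$. By Lemma~\ref{lem:welfare_concavity}, for $\beta<s/2$ the function $\mathcal W(\cdot;e)$ is strictly concave on $q\ge 0$. It is uniquely maximized at $q^{SO}(e)$, so it is strictly increasing on $[0,q^{SO}(e)]$ and strictly decreasing beyond. By Proposition~\ref{prop:scenario3_SO}, $q^{(3)*}=q^{SO}(e)$. When $\tau_G^{(3)*}(e)>0$, Proposition~\ref{prop:rank_pos} gives the strict ordering $q^{(4)*}<q^{(2)*}<q^{(3)*}<q^{(1)*}$. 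Since Scenarios~2 and~4 lie strictly to the left of the maximizer, on the increasing branch, monotonicity yields $\mathcal W^{(3)*}>\mathcal W^{(2)*}>\mathcal W^{(4)*}$. Since $q^{(1)*}\neq q^{SO}$, uniqueness of the maximizer gives $\mathcal W^{(3)*}>\mathcal W^{(1)*}$. This settles \eqref{eq:W_rank_pos}. It uses only that taxes are transfers, so every regime is evaluated by the same $\mathcal W$.

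For Scenario~1 versus Scenarios~2 and~4, $q^{(1)*}$ lies on the decreasing branch and the others on the increasing branch, so the comparison is genuinely ambiguous. I would study the welfare differences $D_{k}(e):=\mathcal W^{(k)*}(e)-\mathcal W^{(1)*}(e)$ for $k=2,4$. The baseline quantity $q^{(1)*}$ does not depend on $e$ (since $T^{(1)}=0$), and $\mathcal W^{(1)*}(e)=\mathcal W_0(q^{(1)*})-\tfrac{2e}{s}q^{(1)*}$ is affine in $e$. Using the decomposition \eqref{eq:W_decomp}, I would write
\[
D_k(e)=\mathcal W_0(q^{(k)*}(e))-\mathcal W_0(q^{(1)*})+\tfrac{2e}{s}\bigl(q^{(1)*}-q^{(k)*}(e)\bigr),
\]
and show that $D_k$ crosses zero at most once, from below. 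This single crossing is what makes the thresholds in \eqref{eq:threshold_defs} partition $e$ into the three stated regimes. With $q^{(k)*}<q^{(1)*}$ the last term is increasing in $e$, which favors upward crossing.

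The ordering $\bar e_{21}\le\bar e_{41}$ then follows pointwise. For each $e$ in the relevant range, $q^{(4)*}(e)<q^{(2)*}(e)<q^{SO}(e)$ on the increasing branch gives $\mathcal W^{(4)*}(e)<\mathcal W^{(2)*}(e)$. So $D_4(e)<D_2(e)$, which means $D_4\ge 0$ implies $D_2>0$, and the infimum defining $\bar e_{41}$ is at least that defining $\bar e_{21}$. Combined with single crossing and the already established $\mathcal W^{(2)*}>\mathcal W^{(4)*}$, this gives the three cases. Below $\bar e_{21}$ both $D_k<0$. Between the thresholds $D_2\ge 0>D_4$. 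Above $\bar e_{41}$ both $D_k\ge0$ and $\mathcal W^{(2)*}>\mathcal W^{(4)*}$.

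I expect the single-crossing step to be the main obstacle. My first attempt would differentiate $D_k$ using the envelope-style identity $\tfrac{d}{de}\mathcal W^{(k)*}=\partial_q\mathcal W(q^{(k)*};e)\,\tfrac{dq^{(k)*}}{de}-\tfrac{2}{s}q^{(k)*}$. The quantity derivative comes from implicit differentiation of \eqref{eq:Gsym}, with $T$ replaced by the fixed-point instrument built from \eqref{eq:tauL}--\eqref{eq:tauG}. Here $\partial_q\mathcal W>0$ because of the increasing branch, while $dq^{(k)*}/de<0$. The aim would be to show that $D_k'(e)>0$ wherever $D_k(e)=0$, which suffices for a single crossing. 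This requires bounding $|dq^{(k)*}/de|$ using the stability conditions of Appendix~\ref{app:cournot} and $\beta<s/2$. If this sign argument fails in general, the fallback is to define the thresholds exactly as infima, as the statement does, and to prove only the weaker claim that the regime classification holds on each interval by continuity, leaving explicit threshold values to Appendix~\ref{app:thresholds}.
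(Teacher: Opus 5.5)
Your derivation of \eqref{eq:W_rank_pos} and of $\bar e_{21}\le\bar e_{41}$ is the paper's argument: concavity of $\mathcal W(\cdot;e)$ with the ordering of Proposition~\ref{prop:rank_pos}, then inclusion of the sets defining the infima. For the inclusion, you also need that for $e\le\bar e_G$ one has $\mathcal W^{(4)*}<\mathcal W^{(1)*}$. This follows from Proposition~\ref{prop:subsidy_regime}, and from $q^{SO}=q^{(1)*}$ at $e=\bar e_G$. It is needed because $\mathcal W^{(4)*}<\mathcal W^{(2)*}$ is only available where $\tau_G^{(3)*}(e)>0$.

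The genuine gap is the three-case ranking. Write $\Delta_k(e)$ for your welfare difference, to avoid a clash with the maps $D_k(q)$ of Lemma~\ref{lem:dq_maps_24_app}. The infimum definitions give only $\Delta_k(e)<0$ for $e<\bar e_{k1}$. The claims $\mathcal W^{(2)*}\ge\mathcal W^{(1)*}$ on $[\bar e_{21},\bar e_{41})$ and $\mathcal W^{(4)*}\ge\mathcal W^{(1)*}$ for $e\ge\bar e_{41}$ need $\{e:\Delta_k(e)\ge 0\}$ to be upward closed. You name single crossing as the goal but do not prove it, and your fallback cannot substitute: continuity does not stop $\Delta_k$ from becoming nonnegative and later dipping below zero. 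The paper takes this step from Lemma~\ref{lem:welfare_crossing}(ii), arguing that $g_e$ shifts up while $q^{(\sigma)*}(e)$ shifts left. But $q^{(\sigma)*}(e)<q^{SO}(e)$, so the leftward shift lowers $g_e$, and that argument hits exactly the opposing-sign problem you flagged.

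Your at-a-zero criterion is the right target, and it closes under $\beta<s/2$ alone. Since $\beta(2\phi'+q\phi'')=\beta(4s^2+3sq+q^2)/(s+q)^3\le 4\beta/s<2$, you get $H'(q)<-s$. The last term in $D_2'$ is at most $2\beta<s$, hence
\[
D_2'(q)=H'(q)-2s+\frac{2\beta s^3(s-2q)}{(s+q)^4}<-2s,
\qquad
D_4'(q)=\frac12\Bigl[H'(q)-s-\frac{6\beta s^3 q}{(s+q)^4}\Bigr]<-s,
\]
so $-1/s<dq^{(k)*}/de<0$.

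At a zero of $\Delta_k$ you have $\mathcal W(q^{(k)*};e)=\mathcal W(q^{(1)*};e)$ with $q^{(k)*}<q^{(1)*}$. The maximizer therefore lies strictly between them, so $\partial_q\mathcal W(q^{(k)*};e)>0$. From $\mathcal W''=2\beta\phi'-4>-4$ you get $\partial_q\mathcal W(x;e)>\partial_q\mathcal W(q^{(k)*};e)-4(x-q^{(k)*})$. Integrating over $[q^{(k)*},q^{(1)*}]$ gives $\partial_q\mathcal W(q^{(k)*};e)<2(q^{(1)*}-q^{(k)*})$. Therefore
\[
\Delta_k'(e)=\partial_q\mathcal W(q^{(k)*};e)\,\frac{dq^{(k)*}}{de}+\frac{2}{s}\bigl(q^{(1)*}-q^{(k)*}\bigr)>-\frac{2}{s}\bigl(q^{(1)*}-q^{(k)*}\bigr)+\frac{2}{s}\bigl(q^{(1)*}-q^{(k)*}\bigr)=0
\]
at every zero. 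So every zero is an isolated upward crossing, $\Delta_k$ has at most one, and $\{\Delta_k\ge 0\}$ is the interval $[\bar e_{k1},\infty)$, possibly empty. With that, your three cases follow. Proving the sign only at zeros is the right weakening, since for $\Delta_4$ these bounds do not yield global monotonicity.
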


The welfare ordering reveals a counter-intuitive result since the addition of a regulatory layer does not necessarily improve environmental outcomes. Overlapping regulation (Scenario~4) never outperforms local-only regulation (Scenario~2) when environmental damages are high because the additional global charge increases the excess taxation problem caused by decentralized regional regulators. When damages are low, the comparison between these two scenarios becomes ambiguous and depends on whether the role of the global regulator (pushing toward the social optimum) can offset the distortions introduced by overlapping taxes. The thresholds $\bar{e}_{21}$ and $\bar{e}_{41}$ define the regions in which local regulation improves upon no regulation at all, providing a formal basis for assessing when fragmented carbon policies are welfare-improving versus welfare-reducing.

\begin{lemma}[Threshold for the sign of the global instrument]
  \label{lem:e_threshold_neg}
  Let $q^{(1)*}$ denote the baseline equilibrium (under $T^{(1)} = 0$) and define:
  \begin{equation}
    \bar e_G := s q^{(1)*} \Big[1 - \beta \phi'\big(q^{(1)*}\big)\Big].
    \label{eq:dbarG}
  \end{equation}
  Then $\tau_G^{(3)*}(e) < 0 \Longleftrightarrow e < \bar e_G$, $\tau_G^{(3)*}(e) = 0 \Longleftrightarrow e = \bar e_G$, and $\tau_G^{(3)*}(e) > 0 \Longleftrightarrow e > \bar e_G$. Moreover, $q^{(3)*}(e) \gtrless q^{(1)*}$ iff $e \lessgtr \bar e_G$.
\end{lemma}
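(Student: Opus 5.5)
The idea is to read $\tau_G^{(3)*}(e)$ as the charge that makes the Cournot first-order condition coincide with the social-optimum condition, and then compare $q^{SO}(e)$ with the baseline equilibrium $q^{(1)*}$. Define
\[
h(q) := s q\bigl[1-\beta\phi'(q)\bigr], \qquad \text{so that by Proposition~\ref{prop:scenario3_SO}} \quad \tau_G^{(3)*}(e) = e - h\bigl(q^{SO}(e)\bigr),
\]
and by definition $\bar e_G = h(q^{(1)*})$.

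\textbf{Step 1: the baseline quantity is the social optimum at $\bar e_G$.} Equation~\eqref{eq:Gsym} at $T=0$ gives
\[
\alpha - c_p - \tfrac{c_f}{s} + \beta\phi(q^{(1)*}) - 2q^{(1)*} = q^{(1)*}\bigl[1-\beta\phi'(q^{(1)*})\bigr] = \bar e_G/s.
\]
This is exactly condition~\eqref{eq:qSO} with $e=\bar e_G$. By the strict concavity in Lemma~\ref{lem:welfare_concavity}, the solution of~\eqref{eq:qSO} is unique, so $q^{SO}(\bar e_G)=q^{(1)*}$. Substituting into Proposition~\ref{prop:scenario3_SO} gives $\tau_G^{(3)*}(\bar e_G)=\bar e_G-h(q^{(1)*})=0$. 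This needs $\bar e_G$ to lie in the admissible range of Lemma~\ref{lem:welfare_concavity}. That holds because $\phi'(q)\le 2/s$, so $\beta<s/2$ makes $1-\beta\phi'>0$ and hence $\bar e_G>0$.

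\textbf{Step 2: strict monotonicity of the implemented quantity.} Write $q^{(3)*}(e)=q^{SO}(e)$. Differentiating~\eqref{eq:qSO} gives
\[
\frac{dq^{SO}}{de} = \frac{-1/s}{2-\beta\phi'(q^{SO})}.
\]
The denominator equals $-\tfrac12\mathcal W''$ up to a factor, so it is positive by Lemma~\ref{lem:welfare_concavity}. Hence $q^{SO}$ is strictly decreasing in $e$. Combined with Step~1, this yields $q^{(3)*}(e)\gtrless q^{(1)*}$ iff $e\lessgtr\bar e_G$, which is the final claim of the lemma.

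\textbf{Step 3: sign of the tax.} This is the step I expect to be the main obstacle. The cleanest route avoids monotonicity of $\tau_G^{(3)*}$ in $e$, which would require controlling $h'$. Instead, use Lemma~\ref{lem:cournot}: the Scenario~3 equilibrium is $q^*(\tau_G^{(3)*})$ with $q^*(\cdot)$ strictly decreasing, and $q^*(0)=q^{(1)*}$. Therefore
\[
\tau_G^{(3)*}(e)\gtrless 0 \iff q^{(3)*}(e)\lessgtr q^{(1)*}.
\]
This is valid because strict monotonicity of $q^*$ makes the map from the tax to the quantity injective, so a tax on either side of zero yields a quantity on the corresponding side of $q^{(1)*}$. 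By Step~2, $q^{(3)*}(e)\lessgtr q^{(1)*}$ iff $e\gtrless\bar e_G$. This gives all three equivalences, including equality exactly at $\bar e_G$. The remaining care point is that the stability conditions of Lemma~\ref{lem:cournot} must cover negative $T$ (subsidies). I would check this against Appendix~\ref{app:cournot}, since $\partial_q G<0$ holds for all $q$ under $\beta<s/2$, independently of the sign of $T$.
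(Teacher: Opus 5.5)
Your proof is correct. Steps 1--2 match the paper's argument. The paper gets uniqueness of $q^{SO}(\bar e_G)$ from strict monotonicity of $p(q)=\alpha+\beta\phi(q)-2q$, which is the same fact as strict concavity of $\mathcal W$.

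The real difference is Step 3. The paper signs the charge directly from the closed form $\tau_G^{(3)*}(e)=e-h\bigl(q^{(3)*}(e)\bigr)$. It notes that $h'(q)=s[1-\beta\phi'(q)]-s\beta q\phi''(q)>0$, which is immediate from $\phi''<0$ and $\beta\phi'<1$. For $e<\bar e_G$, both $e<\bar e_G$ and $h(q^{(3)*}(e))>h(q^{(1)*})$ then push the charge below $\bar e_G-h(q^{(1)*})=0$, and symmetrically for $e>\bar e_G$. So the obstacle you anticipated is a one-line computation; it even gives $d\tau_G^{(3)*}/de>1$.

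You instead pass the quantity comparison from Step 2 through the Cournot map. Because $T=H(q)$ with $H'=s\,G_q<0$, the tax is positive exactly when the implemented quantity lies below the baseline. This route is more structural: it uses nothing about the form of $\tau_G^*$. It is the same device the paper uses in Step 3 of the proof of Proposition~\ref{prop:rank_pos}. Your remark that $G_q\le 4\beta/s-3<-1$ for every $q$ under $\beta<s/2$ is also correct. It shows that the stability condition \eqref{eq:stability} is not needed here, subsidies included. In exchange, the paper's route needs no Cournot comparative statics at all.

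One small correction to Step 1: positivity of $\bar e_G$ is not the hypothesis of Lemma~\ref{lem:welfare_concavity}. That hypothesis is $\alpha>c_p+(c_f+\bar e_G)/s$, and you do not actually need it. Strict concavity, $\mathcal W''=2\beta\phi'-4<0$, holds for every $e$, and $q^{(1)*}$ itself supplies the root of \eqref{eq:qSO}. If you want the hypothesis anyway, it holds: $\alpha-c_p-(c_f+\bar e_G)/s=2q^{(1)*}-\beta\phi(q^{(1)*})>q^{(1)*}>0$, using $\phi(q)\le 2q/s$.
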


\begin{proposition}[Rankings and welfare when the global tax is negative]
  \label{prop:subsidy_regime}
  Assume $\beta < s/2$ and $e < \bar e_G$ (Lemma~\ref{lem:e_threshold_neg}), so that $\tau_G^{(3)*}(e) < 0$. Then the instruments and quantities satisfy
  \[
    T^{(3)*}(e) < 0 = T^{(1)} < T^{(2)*}(e),
    \qquad
    0 < T^{(4)*}(e),
  \]
  and
  \begin{equation*}
    q^{(2)*}(e) < q^{(1)*} < q^{(3)*}(e),
    \qquad
    q^{(4)*}(e) < q^{(1)*} < q^{(3)*}(e).
  \end{equation*}
  with the reverse ordering for prices. In global welfare:
  \begin{equation}
    \mathcal{W}^{(3)*}(e) > \mathcal{W}^{(1)*}(e) > \max\bigl\{\mathcal{W}^{(2)*}(e),\ \mathcal{W}^{(4)*}(e)\bigr\}.
    \label{eq:W_rank_neg}
  \end{equation}
  Finally, $\mathcal W^{(4)*}(e) \gtrless \mathcal W^{(2)*}(e)$ iff $q^{(4)*}(e) \gtrless q^{(2)*}(e)$,	equivalently iff $\tau_G^*(q^{(2)*}(e)) \lessgtr 0$.
\end{proposition}

\begin{table}[!ht]
  \centering
  \renewcommand{\arraystretch}{0.5}
  \caption{Rankings across scenarios (1 = lowest, 4 = highest).}
  \label{tab:qualitative_rankings}
  \begin{tabular}{lcccc}
    \toprule
                           & \multicolumn{4}{c}{Scenario}                                                 \\
    \cmidrule(lr){2-5}
                           & S1: Baseline                 & S2: Two locals & S3: Global & S4: Overlapping \\
    \midrule
    \multicolumn{5}{l}{\textit{Regime A:} $\tau_G^{(3)*}(e) \geq 0$ (Proposition~\ref{prop:welfare_pos})} \\
    Quantity $q$           & 4                            & 2              & 3          & 1               \\
    Price $p$              & 1                            & 3              & 2          & 4               \\
    Instrument $T$         & 1                            & 3              & 2          & 4               \\
    Welfare $(\mathcal W)$ & 1-2-3                        & 2-3            & 4          & 1-2             \\
    \addlinespace
    \multicolumn{5}{l}{\textit{Regime B:} $\tau_G^{(3)*}(e) < 0$ (Proposition~\ref{prop:subsidy_regime})} \\
    Quantity $q$           & 3                            & 1-2            & 4          & 1-2             \\
    Price $p$              & 2                            & 3-4            & 1          & 3-4             \\
    Instrument $T$         & 2                            & 3-4            & 1          & 3-4             \\
    Welfare $(\mathcal W)$ & 3                            & 1-2            & 4          & 1-2             \\
    \bottomrule
  \end{tabular}
\end{table}

\subsection{Asymmetric competition}
\label{subsec:asymmetry}

We next extend the analytical framework with a numerical model that introduces two sources of regional asymmetry. First, each region includes a local market served exclusively by the regional airline, as shown in Figure~\ref{fig:network_asym}. Second, flight distances differ across regions through a parameter $\zeta \geq 1$, which scales the distance flown in one region relative to the other. This distance asymmetry first affects airline operations, as longer flights increase both operating costs and emissions. These higher costs are then transmitted to consumers through equilibrium fares and service frequencies, particularly in the monopoly local market. In the monopoly local market, these higher costs are borne by passengers through the monopolist's pricing and service decisions. The numerical model therefore allows asymmetry in distances to propagate into asymmetry in costs, pollution, market outcomes, and regulatory incentives.

To capture these features, we modify the airline profit function to include the two local monopoly markets and introduce $\zeta$ as a distance shifter for flights in one region. We then use the numerical model to examine whether the welfare ranking derived in the symmetric analytical setting continues to hold once regions differ in both market structure and flight distance.

\begin{figure}[!htb]
  \centering
  \caption{Asymmetric network}
  \vspace{0.2 cm}
  \begin{tikzpicture}
    \node[circle,draw=black] (Point1) at (0, 4.45) {};
    \node[circle,draw=black] (Point2) at (4, 4.45) {};
    \node[circle,draw=black] (Point3) at (0, 1.45) {};
    \node[circle,draw=black] (Point4) at (4, 0.45) {};

    \node (P) at (3.65, 2) {$\zeta$};

    \draw (Point1) -- (Point2) node [midway, fill=white] {duopoly};
    \draw (Point1) -- (Point3) node [midway, fill=white] {monopoly $(l)$};
    \draw (Point2) -- (Point4) node [midway, fill=white] {monopoly $(l')$};

    \draw [fill=red!20, fill opacity=0.5]
    (0,3) ellipse (1.2cm and 2cm) node[left] at (-1.2,4) {$r$};

    \draw [fill=blue!20, fill opacity=0.5]
    (4,2.5) ellipse (1.2cm and 2.5cm) node[right] at (5.2,4) {$r'$};
  \end{tikzpicture}
  \label{fig:network_asym}
\end{figure}

\begin{table}[!htb]
  \centering
  \caption{Parameters for the numerical analysis.}
  \begin{tabular}{cccccccc}
    \hline
    $\alpha$ & $\beta$ & $c_p$ & $c_f$   & F         & s   & e      \\
    \hline
    5,000    & 140     & 100   & 220,000 & 3,000,000 & 300 & 88,000 \\
    \hline
  \end{tabular}
  \label{tab:parameters_numerics}
\end{table}

Table~\ref{tab:parameters_numerics} reports the parameter values used in the numerical analysis. These values are calibrated to an average long-haul round-trip flight. Total $CO_2$-equivalent emissions amount to 440 tons and are valued at \euro~200 per ton. For the local monopoly markets, parameter values are set to one-third of the corresponding long-haul values, except for $\alpha$. This captures the lower operating costs of regional flights, the use of smaller aircraft, and the lower contribution of frequency to passenger utility, while still allowing for relatively high passenger volumes. In the overlapping-regulators scenario, we further assume that revenues collected by the global regulator are redistributed equally across the two regional authorities, thereby preserving revenue neutrality at the global level.

\begin{figure}[!htb]
  \centering
  \caption{Numerical results under asymmetric competition}
  \label{fig:numerics}

  \begin{subfigure}[t]{0.49\textwidth}
    \centering
    \includegraphics[width=\textwidth]{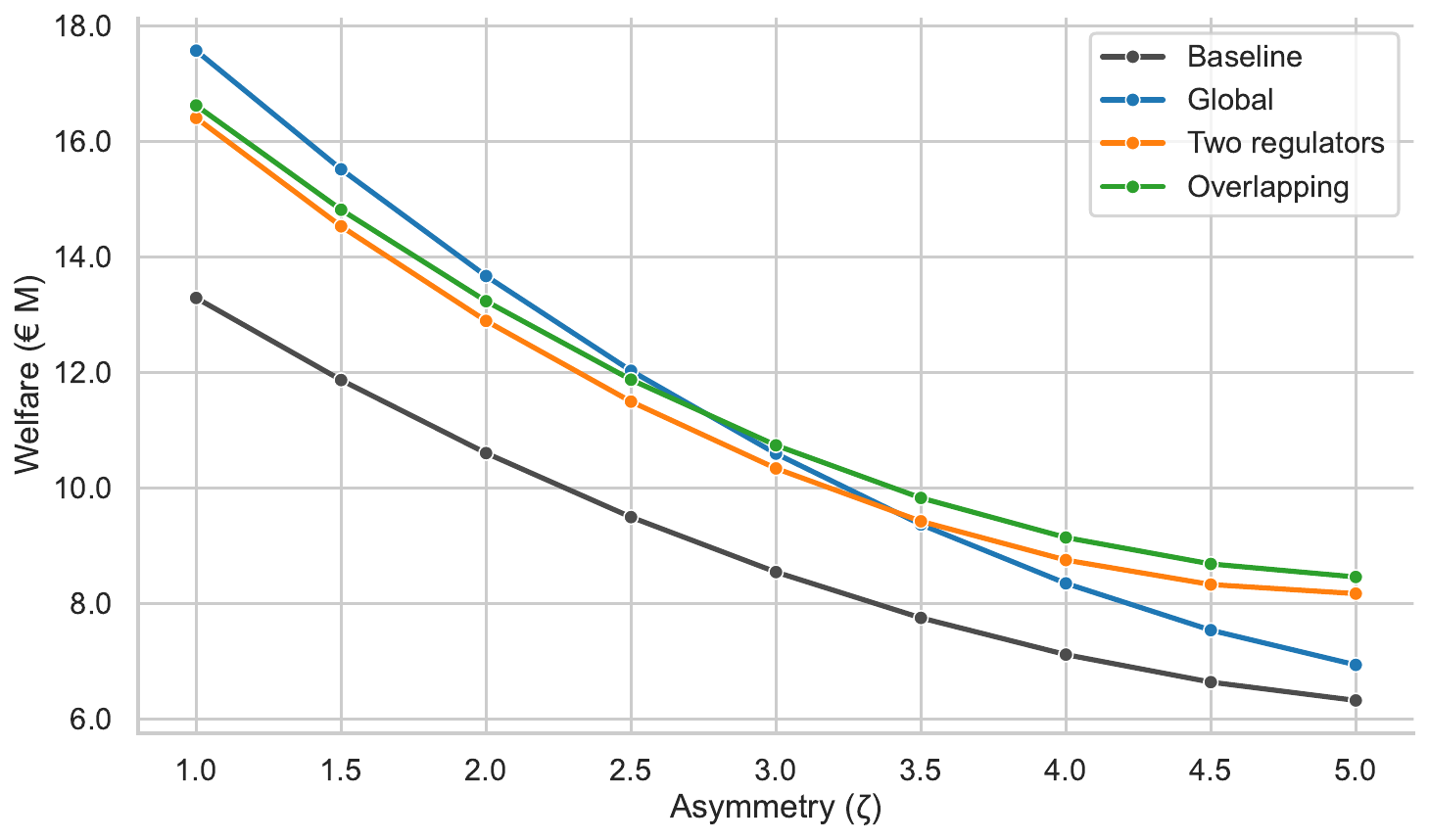}
    \caption{Social welfare}
    \label{fig:numerics_welfare}
  \end{subfigure}
  \hfill
  \begin{subfigure}[t]{0.49\textwidth}
    \centering
    \includegraphics[width=\textwidth]{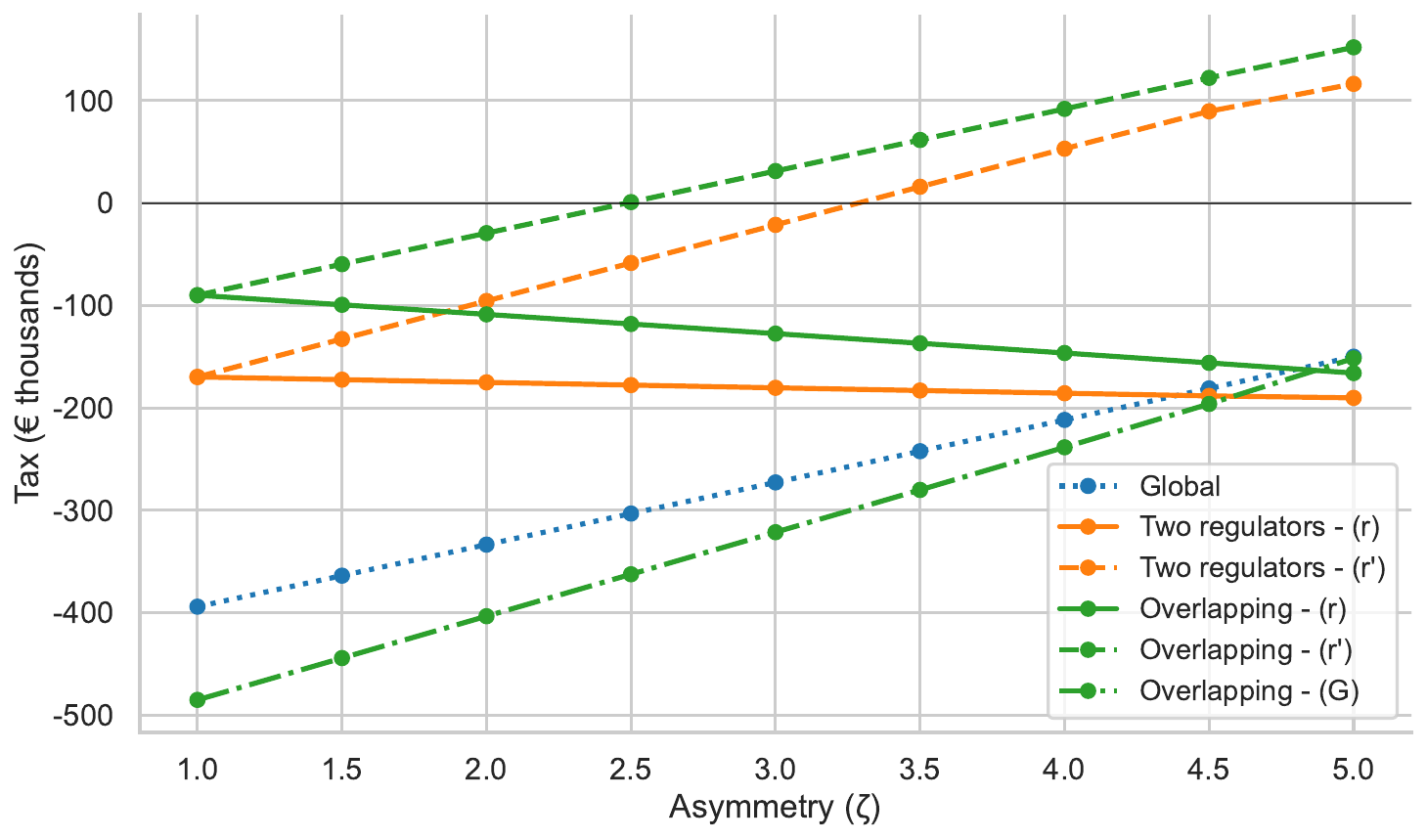}
    \caption{Equilibrium charges}
    \label{fig:numerics_charges}
  \end{subfigure}
\end{figure}

Figure~\ref{fig:numerics} reports the welfare results and the corresponding equilibrium charges. Social welfare declines as asymmetry increases. For low levels of asymmetry, the global-regulator scenario yields the highest welfare, followed by overlapping regulation. As asymmetry rises, this ranking reverses. Welfare under the global regulator falls below welfare under both overlapping regulation and competing regional regulation. Although two decentralized regional regulators continue to perform worse than the other regulated settings at low levels of asymmetry, they dominate the no-intervention benchmark throughout.

The charge patterns on round trip flights clarify the mechanism behind this reversal. Negative charges correspond to subsidies. At low levels of asymmetry, both regional regulators subsidize their aviation sectors. As $(\zeta)$ increases, the regulator in the long-distance region raises its charge and eventually switches from a subsidy to a positive tax, while the regulator in the short-distance region continues to subsidize. A similar pattern arises under overlapping regulation, where the regional instrument in the long-distance region becomes positive as asymmetry increases. Thus, the welfare reversal is driven by the increasing value of differentiated instruments.

The underlying mechanism is straightforward. As asymmetry rises, the region with longer distances becomes both more polluting and more costly to serve, and the regional airline raises fares and reduces service in its monopoly market to recover part of the higher distance-induced cost. A uniform global instrument can vary with the degree of asymmetry, but it cannot differentiate across regions at a given level of $\zeta$. It therefore applies a single charge to markets whose optimal corrective instruments increasingly diverge. Regimes with multiple regulatory layers can respond more flexibly. For high asymmetry, welfare under decentralized regional and overlapping regulation therefore exceeds welfare under uniform global regulation, with overlapping regulation performing best. The reason is not that additional regulatory layers are desirable per se, but that a single uniform instrument is too coarse to accommodate cross-region heterogeneity in costs, emissions, and monopoly distortions.

\section{Extended Model}\label{sec:extended_model}
In this section, we transition from the stylized Cournot framework to a choice-based numerical model. To accommodate passenger heterogeneity, varying fleet efficiencies and hub-and-spoke routing, we adopt the nested logit demand framework of \citet{adler2026estimating}. This richer environment preserves the two-stage game: regulators first set competitive emission charges and airlines subsequently compete on prices and service frequencies. The discrete choice formulation allows us to map how local carbon taxes propagate through a network via operational adjustments, such as upgrading fleets or reallocating high-emission assets. Table~\ref{tab:notation} summarizes the notation used in the numerical model.

\begin{table}[!t]
  \caption{Notation.}
  \resizebox{\textwidth}{!}{%
    \input{Notation}}
  \label{tab:notation}
\end{table}

\subsection{Network design}

We define a hub-and-spoke network, $G(\mathcal{N},\mathcal{K})$, in which each carrier has a hub base in its home country. Each airport is represented as a node in the set $\mathcal{N}$. The hub is connected to its spokes via directed legs in the set $\mathcal{K}$, enabling indirect connectivity between spoke airports through the hub. In this setting, we do not consider code-sharing or interlining.

Given this network configuration, airlines may be subject to different climate policy regimes imposed by regulators. We analyze alternative regulatory settings, including a multi-regulator environment with competing jurisdictions and a single-regulator regime. In our model, regulators levy a carbon tax per ton of $CO_2$ emitted by flights departing from airports under their authority. The tax may vary by aircraft type, thereby incentivizing the deployment of more fuel-efficient aircraft.

The sets within the jurisdiction of regulator \textit{r} are defined as follows:
\begin{align*}
  \mathcal{N}^r & = \bigl\{ i^r, j^r \in \mathcal{N} \mid
  i^r, j^r \text{ are nodes regulated by } r \bigr\}      \\
  \mathcal{A}^r & = \bigl\{ a^r \in \mathcal{A} \mid
  a^r \text{ is an airline regulated by } r \bigr\}       \\
  \mathcal{K}^r & = \bigl\{ k^{hr} \in \mathcal{K} \mid
  k^{hr} \text{ is a leg regulated by } r \bigr\}
\end{align*}

\subsection{First stage}
As in the first stage formulated above, regulators compete to maximize social welfare while accounting for environmental damage. This implies that each regulator's aviation sector contributes to total emissions, but the resulting damages are not borne uniformly across regions. Unlike the analytical model, we allow heterogeneous climate-risk exposure through the parameter $\eta_r$. Regulators whose regions are less affected by climate change may therefore have an incentive to free-ride on the emissions reductions driven by regulators in more climate-vulnerable regions.

All $CO_2$ emissions generated by aviation impose a social cost shared across jurisdictions and captured by the social cost of carbon $e$. Since carbon emissions have global environmental impacts, $e$ is assumed to be homogeneous across regions. Consequently, when setting policy, regional regulators internalize not only damages to their own region but also the spillover effects their decisions impose on other regions.

In our game, regulators trade-off environmental externalities against the passenger surplus of their own inhabitants and the profits of their airlines by setting the level of taxation on $CO_2$ in their jurisdiction. Decisions in one jurisdiction affect incentives elsewhere, inducing competition at the regulatory level. The resulting first-stage equilibrium is then taken as given in the second stage, in which airlines compete.

Given these elements, we express social welfare as the sum of four components: passenger surplus, producer profits, environmental damage and governmental income from environmental taxation as follows:
\begin{align}
  \label{eq:first stage}
  \underset{\theta_{r}}{\max} \; sw_{r}
   & = \sum_{i^{r} \in \mathcal{N}^r} \sum_{j \in \mathcal{N}} \sum_{t \in \mathcal{T}}
  \frac{d_{i^{r}j}}{-\beta_{price}} \ln \!\left(1 + \left(\sum_{a' \in \mathcal{A}}
  \exp\!\left(\frac{V_{i^{r}jta'}}{1 - \rho_t}\right)
  \right)^{1 - \rho_t} \right)
  + \sum_{a^r \in \mathcal{A}^r}
  \pi_{a^r} \!\left(f_{k^{hr}va^r}^*, p_{ijta^r}^*,
  \tilde{\alpha}^*_{hva^r}, \theta_{r}\right) \nonumber                                 \\
   & \quad + \sum_{h \in \mathcal{H}} \sum_{k^{hr} \in \mathcal{K}^r}
  \sum_{v \in \mathcal{V}} \sum_{a \in \mathcal{A}}
  \varepsilon_{k^{hr}v} f_{k^{hr}va}^* \theta_{r} - \eta_{r} \sum_{h \in \mathcal{H}} \sum_{k^{h} \in \mathcal{K}}
  \sum_{v \in \mathcal{V}} \sum_{a \in \mathcal{A}}
  \varepsilon_{k^{h}v} e f_{k^{h}va}^*
\end{align}

\noindent where $\varepsilon_{k^{h}v} = g_{k^{h}}\phi_{hv}co_{2}$ denotes the tons of $CO_2$ produced on flight leg $k^{h}$ when operated with aircraft version $v$. In Eq. (\ref{eq:first stage}) the first term represents consumer surplus, expressed as the log-sum of the utility of passengers departing from airports under the jurisdiction of the regulator\footnote{Given that the consumer surplus considers only passengers departing from airports in the regulatory jurisdiction, the surplus of transatlantic passengers is shared equally between the two jurisdictions under a round-trip assumption.}. The second term captures the profits of airlines based in the jurisdiction, the third term is government revenue from the carbon charge levied on emissions generated on regulated legs and the final term represents the share $\eta_r$ of total emissions borne by the regulator.

The decision variable for the regulatory entity is the per-ton carbon charge $\theta_{r}$ applied to emissions from flights departing within its jurisdiction, anticipating airlines’ responses in the second stage. No sign restriction is imposed on $\theta_{r}$. In particular, negative values of $\theta_{r}$ represent subsidies to airlines.

\subsection{Market share model}
\label{subsec:marketshare}
We assume that passengers maximize utility when choosing an airline and itinerary. Following \citet{McFadden1974}, utility can be decomposed into a systematic component, $V_{ijsa}$, and a random component, $\epsilon_{ijsa}$, as follows:
\begin{equation}
  U_{ijta} = V_{ijta} + \epsilon_{ijta}, \qquad \forall i,j \in \mathcal{N}, t \in \mathcal{T}, a \in \mathcal{A}
\end{equation}

\noindent In order to represent the two legs that constitute a connecting flight, it is convenient to define the set of legs belonging to an itinerary as:
\begin{equation*}
  \mathcal{K}^* = \left\{ k^* \in \mathcal{K} \mid k^* \text{ is a leg of itinerary } (i,j,a) \right\}
\end{equation*}

\noindent The systematic component is defined as follows:
\begin{align}
  \label{eq:sys_utility}
  V_{ijta} = & \ \beta_{0t} + \beta_{1t} mkt\_dist_{ij} + \beta_{2t} mkt\_dist_{ij}^{2} + \beta_{3t} gdp_{ij} + \beta_{4t} direct_{ija} + \beta_{5t} detour_{ija} \nonumber \\
             & + \beta_{6t} \ln (1 + freq\_min_{ija}) + \beta_{7t} price_{ijta} \qquad \forall\, i,j \in \mathcal{N},\ t \in \mathcal{T},\ a \in \mathcal{A}
\end{align}
\noindent where $mkt\_dist$ is the itinerary distance, $gdp$ is the GDP per capita of the origin and destination cities, $direct$ is a dummy variable that takes value $1$ if the trip is non-stop otherwise $0$, $detour$ is the additional distance of a non-direct trip, $freq\_min$ is the minimum service frequency on all links belonging to the itinerary, and $price$ is the ticket fare. As suggested by \citet{Hansen1990}, using the logarithm of frequency is suitable for capturing the diminishing marginal utility of higher service frequency. This term represents the Mohring effect in the utility function. Moreover, we use the minimum frequency among the legs composing an itinerary, so that the itinerary frequency is determined by the bottleneck leg along the route.

The random component is assumed to be independently and identically distributed according to a Gumbel distribution. Consequently, demand can be split between airlines into their market shares through a nested logit model (NL):
\begin{equation}
  m_{ijta}  = \frac{\exp\left(\frac{V_{ijta}}{1-\rho_t}\right)} {\sum_{a' \in \mathcal{A}}\exp\left(\frac{V_{ijta'}}{1-\rho_t}\right)} \frac{S_{ijt}^{(1-\rho_t)}}{V_{0} + S_{ijt}^{(1-\rho_t)}} \qquad \forall\, i,j \in \mathcal{N},\ t \in \mathcal{T},\ a \in \mathcal{A}
  \label{eq:marketShare}
\end{equation}
where $S_{ijt} = \sum_{a' \in \mathcal{A}}\exp\!\left(\frac{V_{ijta'}}{1-\rho_t}\right)$, $V_0$ is the utility associated with the outside option of not flying, specific to each market, and $\rho_{t}$ captures taste heterogeneity. In contrast to purely theoretical settings where competition is often modeled in a Cournot fashion, the nested logit market-share formulation enables a more realistic representation of competitive interactions among airlines.

\subsection{Airline operating costs}

According to \citet{swan2006aircraft}, the direct operating cost of the airline can be represented by a cost function that distinguishes between long- and short-haul flights. We modify this function to separate fuel, ownership and residual operating costs for two reasons. First, isolating fuel costs allows us to better capture the increased share of fuel in total operating expenses using more recent fuel price assumptions. Second, ownership costs depend on the number of aircraft deployed, which is a decision variable for the airline, so we model them explicitly in the objective function. We update costs to 2019 values to reflect inflation and changes in operating expenses in recent years, and to convert the original dollar specification into euros. Finally, to account for low-cost carriers, we assume they face operating costs equal to half those incurred by legacy carriers.

\begin{equation}
  \begin{cases}
    c_{k^{h}v} = \bar{p}\phi_{hv} g_{k} + \chi C^s_{kv}
     & \text{if } k^{h} \in \mathcal{K}^s, \\
    c_{k^{h}v} = \bar{p}\phi_{hv} g_{k} + \chi C^l_{kv}
     & \text{if } k^{h} \in \mathcal{K}^l.
  \end{cases}
  \label{eq:costfunction}
\end{equation}

\noindent where $C^s_{kv} = (g_{k} + 722)(s_{kv} + 104){\$}0.019$ and $C^l_{kv} = (g_{k} + 2{,}200)(s_{kv} + 211){\$}0.0115$ denote the short- and long-haul operating costs respectively, and
\begin{align*}
  \mathcal{K}^s & = \bigl\{ k^s \mid k^s \in \mathcal{K}
  \text{ are the short-haul legs served} \bigr\}         \\[6pt]
  \mathcal{K}^l & = \bigl\{ k^l \mid k^l \in \mathcal{K}
  \text{ are the long-haul legs served} \bigr\}
\end{align*}

The first term in Eq. (\ref{eq:costfunction}) captures fuel costs, a major component of airline operating expenses that accounts for roughly 30\% of total costs \citep{iata2019fuel}. Fuel cost is computed as a function of flight distance $g_{k}$, the fuel burn rate per kilometer $\phi_{hv}$ which depends on aircraft type and version, and the fuel price per ton $\bar{p}$ calibrated using the 2019 average price. The second term represents the share $\chi$ of operating costs excluding fuel and ownership costs. We compute $\chi$ by weighting the components in \citet{swan2006aircraft} and removing the shares attributable to fuel and ownership. Under this specification, per-flight operating costs depend only on leg characteristics and the aircraft version used.

\subsection{Ownership costs}

The monthly cost of owning an aircraft is approximated by the equivalent annual capital costs divided by the number of months per year:
\begin{equation}
  o_{hv} = \frac{(\Psi_{hv} - \lambda_{hv})\left( \frac{\ell(1 + \ell)^n}{(1 + \ell)^n - 1} \right) + \lambda_{hv} \ell}{12}
\end{equation}

\noindent where $\Psi$ is the aircraft purchase price, $\lambda$ is the salvage value at the end of the $n$-year time period and $\ell$ is the interest rate. We select four commonly used aircraft models as reference types in our game (see Table~\ref{tab:aircraft})\footnote{Using alternative aircraft models does not lead to materially different capital cost values.}. Purchase prices are based on average list prices reported by aircraft manufacturers \citep{airbus2018,axon2022}. We compute salvage values assuming straight-line depreciation over 30 years and a service life of 20 years. We set the interest rate to $\ell$ = 10\%. Finally, we assume that airlines receive a discount off the aircraft list price. Offering discounts relative to list prices is common in the aviation industry but the discount rate is private information and reflects bilateral negotiations between airlines and aircraft manufacturers.

\subsection{Second stage}
In the second stage, airlines maximize profits given the environmental charges imposed by regulators across jurisdictions. Each airline strategically chooses the service frequency of each aircraft version on each network leg $f_{kva}$, fares on origin-destination routes $p_{ijta}$, and the optimal fleet composition, including the number of aircraft of each type and version $\Tilde{\alpha}_{hva}$. Given this setting, the objective function is modeled as follows:
\begin{align}
  \underset{p_{ijta},\, f_{k^{h}va},\, \tilde{\alpha}_{hva}}{\max}\; \pi_a
   & = \sum_{\substack{i,j \in \mathcal{N} \\ i \neq j}}
  \sum_{t \in \mathcal{T}} d_{ijt}\, m_{ijta}\, p_{ijta} - \sum_{k^{h} \in \mathcal{K}} \sum_{v \in \mathcal{V}}
  c_{k^{h}v} f_{k^{h}va} \nonumber                                                                                                                                      \\
   & \quad - \sum_{k^{hr} \in \mathcal{K}^r} \sum_{v \in \mathcal{V}} \varepsilon_{k^{hr}v} \theta_{r} f_{k^{hr}va} - \sum_{h \in \mathcal{H}} \sum_{v \in \mathcal{V}}
  o_{hv} \tilde{\alpha}_{hva}
  \label{eq:profit}
\end{align}

\noindent where $m_{ijta}$ is the market share function per origin-destination $(i,j)$ per passenger type $t$ per airline $a$ as specified in Eq.~(\ref{eq:marketShare}), $c_{kv}$ denotes the operating costs incurred to serve leg $k$ with aircraft version $v$ as defined in Eq. (\ref{eq:costfunction}), $o_{hv}$ is the monthly ownership cost of aircraft type $h$ version $v$ and $\tilde{\alpha}_{hva}$ is the number of aircraft of type $h$ and version $v$ purchased. Airlines may either internalize regulatory charges or pass them through to passengers via higher fares, and can also adjust service frequency and fleet mix in response.

Before defining the constraints, it is useful to introduce the subsets corresponding to airline itineraries and flight legs
\begin{align*}
  \mathcal{N}^\circ & = \bigl\{ i^\circ, j^\circ \in \mathcal{N} \mid  (i^\circ, j^\circ)
  \text{ itineraries through } k^{h} \bigr\}                                              \\
  \Omega'           & = \bigl\{ \omega' \in \mathcal{K} \mid \omega'
  \text{ first leg of itinerary } (i,j) \bigr\}                                           \\
  \Omega''          & = \bigl\{ \omega'' \in \mathcal{K} \mid \omega''
  \text{ second leg of itinerary } (i,j) \bigr\}
\end{align*}

\noindent We now proceed to define the constraints of the second-stage problem:
\begin{align}
   & m_{ijta}
  = \frac{\exp\left(\frac{V_{ijta}}{1-\rho_t}\right)}
    {\sum_{a' \in \mathcal{A}}\exp\left(\frac{V_{ijta'}}{1-\rho_t}\right)}
  \frac{S_{ijt}^{(1-\rho_t)}}{1 + S_{ijt}^{(1-\rho_t)}}
   &                                          & \forall\, i,j \in \mathcal{N},\ t \in \mathcal{T}
  \label{eq:mktshar}                                                                                  \\
   & z_{ija}
  \leq f_{\omega'va}
   &                                          & \forall\, i,j \in \mathcal{N},\ \omega' \in \Omega'
  \label{eq:Linearizing 1}                                                                            \\
   & z_{ija}
  \leq f_{\omega''va}
   &                                          & \forall\, i,j \in \mathcal{N},\ \omega'' \in \Omega''
  \label{eq:Linearizing 2}                                                                            \\
   & \sum_{i^\circ \in \mathcal{N}^\circ}
  \sum_{j^\circ \in \mathcal{N}^\circ}
  \sum_{t \in \mathcal{T}}
  d_{i^\circ j^\circ t}\, m_{i^\circ j^\circ t}
  \leq \sum_{v \in \mathcal{V}} s_{k^{h}v} f_{k^{h}va}
   &                                          & \forall\, k^{h} \in \mathcal{K}
  \label{eq:capacity constraint}                                                                      \\
   & \sum_{k^{h} \in \mathcal{K}} f_{k^{h}va}
  \leq \bar{f}_h \tilde{\alpha}_{hva}
   &                                          & \forall\, h \in \mathcal{H},\ v \in \mathcal{V}
  \label{eq:utilization}                                                                              \\
   & f_{k^{h}va}
  \in \mathbb{R}^{+}
   &                                          & \forall\, k^{h} \in \mathcal{K},\ v \in \mathcal{V}
  \label{eq:frequencyPositive}                                                                        \\
   & p_{ijta}
  \in \mathbb{R}^{+}
   &                                          & \forall\, i,j \in \mathcal{N},\ t \in \mathcal{T}
  \label{eq:pricePositive}                                                                            \\
   & \tilde{\alpha}_{hva}
  \in \mathbb{R}^{+}
   &                                          & \forall\, h \in \mathcal{H},\ v \in \mathcal{V}
  \label{eq:xPositive}
\end{align}

\noindent Eq. (\ref{eq:mktshar}) embeds the nested logit market-share formulation in the model. Constraints (\ref{eq:Linearizing 1}) and (\ref{eq:Linearizing 2}) linearize the minimum-frequency term for connecting itineraries in the utility specification (Eq. (\ref{eq:sys_utility})). This avoids discontinuities in the market-share function during the solution process. Equation (\ref{eq:capacity constraint}) imposes the capacity constraint, ensuring that the demand served by an airline on each leg does not exceed the seats supplied on that leg, accounting for all itineraries that use it. Constraint (\ref{eq:utilization}) limits the number of flights operated based on average aircraft utilization, distinguishing between long- and short-haul operations. Finally, constraints (\ref{eq:frequencyPositive}) -- (\ref{eq:xPositive}) define the feasible domains of the decision variables.

\subsection{Game-theoretic competition and algorithm}

The competitive interaction between regulators and airlines is modeled as a two-stage game with complete but imperfect information because of the simultaneous decisions within each stage \citep{fudenberg1991game}. The sequence of play is as follows. In the first stage, regulators simultaneously set environmental charges. In the second stage, after observing these charges, airlines simultaneously choose service frequencies, fares, and fleet compositions.\footnote{While the choice of airports in the network are fixed, airlines can stop operating a connection by setting its frequency to zero, allowing them to endogenously decide whether to continue serving a market.} We define the regulator payoffs by a social welfare function and airline payoffs by a profit function. This timing structure allows airlines to respond to policy choices in the first stage.

We solve the game by backward induction, as summarized in Appendix \ref{app:algorithm}.  The algorithm starts by initializing values for the first- and second-stage decision variables. It then computes the second-stage Nash equilibrium by solving the airline problem for all carriers in $\mathcal{A}$, which we refer to as a cycle. Given the resulting second-stage equilibrium, the algorithm updates the first-stage outcomes by solving the regulator problem for all regulators in $\mathcal{R}$. After each first-stage update, a new cycle is computed using the updated charges. The procedure iterates until convergence to a fixed point, which corresponds to a subgame-perfect Nash equilibrium (SPNE) of the overall two-stage game.

The second-stage airline game is solved numerically. Existence and uniqueness of a second-stage equilibrium are not guaranteed, and the non-linearity of the airline problem implies that a best response iteration may fail to converge. In addition, the solver may converge to different stationary points depending on the initial solution. For these reasons, for each candidate policy, we run the second-stage game from multiple starting points and we repeat the best response cycle using different airline orders. Each airline problem is solved using \textit{IPOPT} \citep{Wachter2006}. For all the scenarios tested, the solutions consistently return the same equilibrium outcome.

Following \citet{Adler2022}, regulators update their first-stage choices using a local grid search around the current incumbent solution. Starting from the incumbent, the algorithm first explores neighboring policies on a discrete grid obtained by moving in absolute increments of \euro~15 in each relevant first-stage dimension. At each candidate grid point, we solve the second-stage problem using the multi-start procedure and evaluate the regulators' objectives. If an improvement is found, the incumbent is updated and the local search is repeated around the new solution. Once the search yields two consecutive iterations with the same incumbent solution (i.e., no change in the selected policy across two rounds), the algorithm switches from fixed-step exploration to a shrinking-grid refinement. The grid radius is reduced by 50\% (relative to the incumbent policy level), and this halving is repeated iteratively to further localize the search. The refinement stops when the cumulative reduction in the grid radius becomes smaller than 2\%, at which point no further meaningful improvements are expected and the current incumbent solution is retained as the final equilibrium outcome.

Finally, while passengers are not modeled as explicit players, their decisions are central to airline objectives in the second stage. Airlines compete for passengers in the routes they serve in order to maximize revenues. Therefore, the market share function endogenises passenger choices over flight alternatives indirectly.

\section{Case Study}\label{sec:case_study}

We assess the model on a representative network spanning two geographic regions, North America and Europe. The network nodes are shown in Figure~\ref{fig:network}. The network comprises 24 nodes, evenly divided across the two regions, and includes domestic, regional, and transatlantic connections. The selected nodes and airlines cover 7\% of demand in each of the three markets.\footnote{A larger set of nodes and airlines would increase computation time substantially given the two-stage nature of the game. Nevertheless, because we use the same hubs as in \cite{adler2026estimating}, the selected nodes adequately capture differences in traffic volume across the two regions.} Average inter-node distances are also consistent with observed values (EU: 1,250 km; NA: 2,076 km; TRA: 6,959 km). Transatlantic connections are operated by legacy carriers through their hubs, whereas low-cost carriers provide point-to-point service within each region. We classify Southwest, Spirit, and JetBlue as North American low-cost carriers and Ryanair, easyJet, and Vueling as their European counterparts.

We consider three regulators: two regional regulators, each with authority over one region, and one global regulator with jurisdiction over both regions. Regional and transatlantic connections are charged by the regulator with jurisdiction over the departure airport. In scenarios with a global regulator, all operations are also subject to its charge. We assume a social cost of carbon of \euro~200, consistent with recent IPCC-based estimates \citep{portner2022climate}. Following \citet{lee2021contribution}, we multiply $CO_2$ emissions by a factor of three to account for non-$CO_2$ effects associated with aviation. Appendix~\ref{app:sensitivity} reports a sensitivity analysis using a lower value for carbon damages to assess how a reduction in the social cost of emissions affects the effectiveness of the carbon tax.

\begin{figure}[!t]
  \centering
  \resizebox{\columnwidth}{!}{\includegraphics{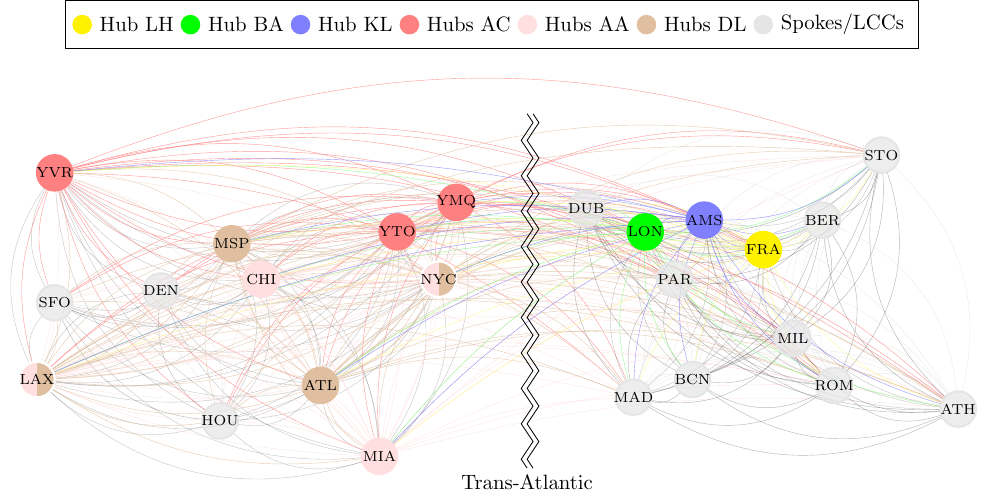}}
  \caption{Selected nodes in North America and Western Europe.}
  \label{fig:network}
\end{figure}

We consider four aircraft variants, two for short-haul flights and two for long-haul flights. Aircraft purchase and salvage values are obtained from manufacturer sources and reported in Table~\ref{tab:aircraft}. We assume a 25\% discount from retail prices. Given the proprietary nature of these discounts and their variance across airlines, the rate is set equal to half the maximum discount reported by Airbus. \footnote{The discount values reported by Airbus are computed by comparing annual report information on aircraft prices and quantities sold with operating revenues.}

\begin{table}[!t]
  \centering
  \caption{Aircraft purchase, salvage values, and operating characteristics.}
  \input{aircraft_costs.tex}
  \label{tab:aircraft}
\end{table}

We begin with the baseline calibration (Section~\ref{baseline}) and then organize the results around five questions. Section~\ref{subsec:airline_responses} examines how airlines respond to environmental charges across regulatory scenarios. Section~\ref{subsec:welfare_comparison} explains how firm-level responses to different institutional arrangements produce varying welfare outcomes. Section~\ref{subsec:co-operation} assesses whether the cooperative equilibrium can be sustained through side payments. Section~\ref{subsec:sensitivity} examines robustness to alternative assumptions about the social cost of carbon.

\subsection{Baseline}
\label{baseline}

We benchmark model outcomes against a baseline scenario without environmental charges, calibrated to approximate 2019 conditions. Passenger demand follows a nested logit model parameterized using the estimates in \citet{adler2026estimating} for the European, North American, and transatlantic markets; the corresponding parameter values are reported in Appendix~\ref{app:logit}.

We use this scenario to assess the model’s ability to reproduce observed outcomes. Table~\ref{tab:validation} reports validation results for one year of operations. Overall, the model matches the main patterns observed for both legacy and low-cost carriers, with the remaining discrepancies reflecting the reduced network representation and data aggregation.

Tables~\ref{tab:welfare_baseline} and \ref{tab:operating_results} show that passenger and airline surpluses are higher in North America than in Europe. In the model, this reflects stronger demand, higher willingness to pay, and longer average regional distances in North America. Regional differences in the attractiveness of the outside option, which captures alternative transport modes and no travel, further widen the gap. Together, these factors generate higher equilibrium fares and larger carrier revenues in North America.

Higher demand in North America leads both legacy and low-cost carriers to operate more flights than in Europe, resulting in higher emissions. Since carbon emissions generate global climate impacts, the model distributes environmental damages equally across both regions, regardless of where the emissions originate. Under zero environmental charges, airlines have little incentive to internalize the emissions externality and fleets consist entirely of older aircraft.

Table~\ref{tab:REGs_results} shows that overall welfare, accounting for both greenhouse gas and non-$CO_2$ emissions, is lowest under the zero-charge scenario because it generates the highest emissions and, hence, the largest externality. This ranking need not hold at the regional level, since consumer and producer surpluses are region specific whereas damages are shared globally by assumption. In North America, welfare is higher under the baseline than under the overlapping-regulator scenario, consistent with the analytics in Section~\ref{sec:analytical_model}, where the baseline can outperform overlapping regulation when $e$ is sufficiently small. Environmental charges nonetheless reduce emissions in every regulated scenario by affecting airfares, frequencies, and fleet choices.

\begin{table}[!t]
  \caption{Operating results in the \textit{baseline} scenario (annual, full network).}
  \resizebox{\textwidth}{!}{\input{real_world}}
  \label{tab:validation}
\end{table}

\begin{table}[!t]
  \caption{Welfare in the \textit{baseline} scenario (annual, full network).}
  \resizebox{\textwidth}{!}{\input{baseline_tab}}
  \label{tab:welfare_baseline}
\end{table}

\begin{table}[!t]
  \caption{Operating results (annual, full network).}
  \resizebox{\textwidth}{!}{\input{Airlines_results}}
  \label{tab:operating_results}
\end{table}

\begin{table}[!t]
  \caption{Results of the case study (annual, full network).}
  \resizebox{\textwidth}{!}{\input{REGs_results}}
  \label{tab:REGs_results}
\end{table}

\begin{table}[!t]
  \label{tab:benchmark_summary}
  \caption{Summary of benchmark comparisons.}
  \resizebox{\textwidth}{!}{
    {\small
      \begin{tabular}{@{}p{0.25\textwidth}p{0.32\textwidth}p{0.36\textwidth}@{}}
        \hline
        Setting & Best-performing regime                                                           & Interpretation \\
        \hline
        Symmetric analytical model
                & Uniform global regulation
                & One common instrument is sufficient when regions are identical.                                   \\[3pt]

        Asymmetric stylized model
                & Decentralized or overlapping regulation can outperform uniform global regulation
                & Multiple instruments can better accommodate local market distortions.                             \\[3pt]

        Calibrated case study
                & Differentiated global regulation
                & Coordination combined with region-specific charges performs best.                                 \\[3pt]

        Participation constraints
                & Decentralized regional regulation may be individually preferred by one region
                & Transfers are needed to sustain efficient coordination.                                           \\
        \hline
      \end{tabular}}
  }
\end{table}

\subsection{Airline strategic responses to environmental charges}\label{subsec:airline_responses}

Airlines do not absorb environmental charges as a uniform cost increase. Their response varies systematically with network structure and business model, and these differences shape both the welfare incidence of regulation and the strategic burden it imposes. Airlines adjust along three jointly determined margins, namely service frequency, fleet deployment and fares. Table~\ref{tab:airline_response_overall} reports the aggregate response relative to the baseline.

\begin{table}[!t]
  \caption{Aggregate airline response by regulatory scenario.}
  \resizebox{\textwidth}{!}{\input{airline_response_summary}}
  {Notes. Frequency, fleet, and passenger columns report percentage changes relative to the baseline. Routes affected is the share of airline-route observations with a nonzero net frequency change.}
  \label{tab:airline_response_overall}
\end{table}

All regulated scenarios reduce frequencies, fleet deployment and passenger volumes while increasing fares. The strongest response occurs under uniform global regulation, followed closely by differentiated global regulation; the weakest occurs under two competing regional regulators, with overlapping regulation in between. Two features stand out. Frequency reductions exceed passenger reductions in every scenario, indicating that airlines contract capacity more aggressively than demand drops in order to preserve aircraft utilization and load factors. Fleet reductions track frequency reductions closely, so the response is a structural contraction in deployed capacity rather than a short-run scheduling adjustment.

\begin{table}[!t]
  \caption{Regional incidence of airline responses.}
  \resizebox{\textwidth}{!}{\input{geographic_incidence}}
  \label{tab:airline_response_region}
\end{table}

Table~\ref{tab:airline_response_region} reports the regional incidence. Since transatlantic markets correspond to long-haul operations and intra-European and intra-North American markets to short-haul operations, the regional split also captures the main haul-type distinction. Transatlantic markets experience the largest proportional frequency reductions, fare increases and passenger losses under every regime, with transatlantic frequencies falling by 16.61\% and weighted fares rising by \euro~40.47 under uniform global regulation. Long-haul markets are therefore the principal channel through which environmental charges affect airline operations and consumers. European and North American intra-regional markets display contrasting adjustment patterns. Intra-European frequencies fall sharply while weighted fares barely change, indicating limited ability to pass policy costs through. North American markets show more moderate frequency reductions and substantially larger fare pass-through, so pass-through ability differs across regional market structures.

\begin{table}[!t]
  \caption{Policy response by airline type.}
  \resizebox{\textwidth}{!}{\input{airlines_business_models}}
  {Notes. Airline groups separate legacy and low-cost carriers by home region. Frequency, fleet, and passenger columns report percentage changes relative to baseline.}
  \label{tab:airline_response_group}
\end{table}

Table~\ref{tab:airline_response_group} shows that responses also vary across airline business models. European low-cost carriers exhibit the largest proportional frequency and fleet reductions, between 15.80\% and 18.18\% across scenarios, while their weighted fare changes remain below \euro~0.25. These carriers behave primarily as capacity adjusters, a response consistent with strong price competition and elastic demand in short-haul point-to-point markets, where raising fares would induce substantial demand losses. Legacy carriers respond differently. American legacy carriers reduce frequency and fleet but also raise fares by \euro~7.96 to \euro~15.39, and European legacy carriers raise fares with smaller passenger losses than other groups. Hub-and-spoke carriers therefore have more scope for fare pass-through, reflecting a passenger mix with higher willingness to pay, greater long-haul exposure and network-based market power.

These results carry three strategic implications. Policy incidence is jointly determined by route geography and business model in ways that average-fare measures obscure. When low-cost carriers respond by contracting service rather than raising fares, consumers are harmed primarily through reduced availability of low-fare options; when legacy carriers pass the burden into fares, consumers experience the policy directly through higher prices. Evaluating environmental policy through average fare changes alone would understate the importance of service availability and misattribute welfare incidence across consumer segments. The case study also shows that regulated scenarios accelerate the replacement of older aircraft with more fuel-efficient versions, so the strategic value of cleaner capital deployment is realized through the interaction of pricing and fleet decisions rather than either margin in isolation. Most importantly, different business models have different preferences over regulatory architecture. A regime that looks similar in aggregate can impose materially different strategic burdens on a regionally focused low-cost carrier than on a transatlantic hub carrier. This heterogeneity in firm preferences over the regulatory environment links the firm-level results in this subsection to the welfare comparisons that follow.

\subsection{Why do regulatory scenarios produce different welfare outcomes?}
\label{subsec:welfare_comparison}

As detailed in Table~\ref{tab:REGs_results},  welfare outcomes vary across regulatory structures because each arrangement creates incentives for firms, which in turn produce different equilibrium levels of emissions, surplus and government revenue. Specifically, (i) decentralized regulation encourages free-riding, (ii) unified global regulation requires balancing market power against environmental priorities and (iii) overlapping policies generate conflicting incentives that distort both regional and global objectives.

(i) When two regional regulators act independently (Scenario~2), they tend to free-ride on each other, setting charges well below the social cost of carbon. Each regulator protects the surplus of both passengers and carriers within its jurisdiction and this motive dominates other strategic incentives in the game, reducing the likelihood of a tax war aimed at extracting surplus from transatlantic passengers in the opposing region. The discrepancy between the European and North American charges is substantial because the European regulator sets a charge of \euro~51 per ton while the North American regulator sets \euro~29. This asymmetry reflects differences in demand elasticity, market structure and the availability of outside transport options. This asymmetry mirrors the institutional landscape that carriers currently face. The European Union prices intra-European emissions through the EU ETS, with
allowances near \euro70 per ton once aviation free allocation is fully phased out
in 2026, whereas North America operates no comparable emissions trading system,
and emissions on international routes are addressed by the much cheaper CORSIA
offsets \footnote{EU ETS allowances traded in a range of approximately \euro60 to \euro95 per ton over 2025 and 2026 \citep{icap2026}, and free allocation to aviation is fully
  phased out from 2026 \citep{ec2026aviation}, so European carriers face close to
  the full allowance price on intra-European flights. CORSIA-eligible unit prices
  were estimated at USD 10 to 40 per ton during the scheme's first phase, based on
  the limited price information available for the ICAO 2025 periodic review
  \citep{easa2025corsia}. These prices are quoted per ton of $CO_2$, whereas the
  social-cost benchmark used here applies an additional factor of three to account
  for non-$CO_2$ forcing.}. The model thus reproduces the observed pattern of a higher
European charge and a lower North American value, with regional charges
remaining well below the social cost of carbon, which suggests that the
market-power and Mohring effect identified in Section~\ref{sec:analytical_model} is relevant. Despite the free-riding, overall welfare increases by \euro~11.21B relative to the baseline. Damages caused by emissions fall by \euro~27.22B and government revenue rises by \euro~20.62B, but these gains come at the expense of producers and consumers, whose surplus declines by \euro~19.47B and \euro~17.14B respectively. This pattern confirms Corollary~\ref{cor:tax_rank} in which local regulators collectively over-tax relative to the global optimum, yet the environmental benefit still outweighs the distortion when the social cost of emissions exceeds the threshold $\bar{e}_{21}$ identified in Proposition~\ref{prop:welfare_pos}.

(ii) Under a uniform global regulator (Scenario~3), free-riding is eliminated and one authority internalizes all emissions across the network. The optimal charge is the highest among all scenarios at \euro~60 per ton, consistent with the analytical results of Section~\ref{sec:analytical_model}. Yet this charge remains well below the social marginal cost of emissions that would compensate for the $CO_2$ and $CO_2$-equivalent emissions produced. The trade-off between accounting for the environmental externality and preserving output in the presence of market power and frequency externality drives the optimal charge below the Pigouvian level (Proposition~\ref{prop:scenario3_SO}). The regulator cannot discriminate across itineraries, so the same charge applies to all operations. This lack of price discrimination may yield a sub-optimal uniform charge. Relative to the baseline, the global charge increases ticket prices and reduces service frequency, while the higher cost also induces airlines to accelerate fleet renewal toward more fuel-efficient aircraft compared to the two regional regulator scenario. Environmentally, the global charge produces the largest reduction in emissions among all scenarios leading to a \euro~40.68B decline in emissions damage, compared with  \euro~27.22B under competing regulators. Overall welfare rises by \euro~13.73B relative to the baseline, confirming the analytical insight that coordination improves welfare by internalizing cross-regional emissions and eliminating regulatory free-riding. However, once regions differ in demand, market structure, and emissions intensity, coordination alone is not sufficient: the instrument must also be regionally differentiated.

Although a uniform global tax improves overall welfare, it is not necessarily optimal when regions are heterogeneous. Regional regulators can better account for differences in passenger preferences, network characteristics and regional climate vulnerability by implementing tailored carbon charges. Under a global regulator that is allowed to set region-specific charges (the discriminating global scenario), the taxation pattern reverses relative to the competing regulator setting. The tax in North America is higher (\euro~60) and broadly aligned with the global level under uniform taxation, reflecting the greater emissions associated with higher traffic volumes. The absence of free-riding or jurisdictional protectionism means the global policy is applied consistently across regions. Europe faces a lower tax rate (\euro~51), reflecting its smaller regional network and lower associated emissions. Although the charges remain below the social cost of emissions, this scenario yields the highest welfare among all scenarios; \euro~13.98B above the baseline. These results suggest that even in the absence of free-riding and carbon leakage, market distortions can lead to inefficient environmental outcomes when regulators rely solely on a uniform carbon charge, consistent with Proposition~\ref{prop:scenario3_SO} and the quantity rankings of Proposition~\ref{prop:rank_pos}.

(iii) The overlapping scenario (Scenario~4) combines two decentralized regional regulators and one global regulator. Accordingly, airlines are subject to double taxation with one charge imposed by the local regulators and another by the global authority. We assume that the global regulator collects the revenues generated from the charge imposed on all flights in the network but cannot retain them, as it operates under a revenue-neutral framework. Consequently, all revenues are redistributed equally to the two local regulators as a lump-sum transfer, and these additional funds are incorporated into their respective objective functions. The equal redistribution of revenues between the two regions may create an incentive for the higher-polluting jurisdiction to free-ride on the other region. North America leverages its higher emissions as a competitive instrument to capture part of the surplus from the cleaner region. The global regulator sets a charge of \euro~53 per ton whilst the local regulators respond by setting zero tax in Europe and a subsidy of \euro~10 per ton in North America. When considering the overall burden on airlines, given the double taxation applied to each flight, the total level is broadly in line with that observed under the competing regulator scenario. This induces similar airline behavior, leading to comparable reductions in emissions across the network.

Despite similar aggregate tax burdens, welfare outcomes differ across regulatory regimes. Relative to the baseline, total emissions decline by \euro~33.84B, producer surplus falls by \euro~24.22B, and consumer surplus falls by \euro~21.74B. Overall welfare nevertheless increases by \euro~13.20B, although this gain remains smaller than under either uniform and differentiated global regulation scenarios. The resulting ordering, $\mathcal{W}^{(3)} > \mathcal{W}^{(4)} > \mathcal{W}^{(2)*}$, is consistent with Section~\ref{subsec:asymmetry}, where we show that layering regional and global regulation does not improve upon global regulation when regional asymmetries are moderate. In this case, the potential advantages of multiple regulatory layers are offset by local free-riding on the global policy, regional protectionism and subsidization, and pre-existing distortions in the airline industry. The analytical model also shows that this ranking need not hold under stronger regional asymmetries, where differentiated regulation may better reflect cross-region heterogeneity.

\subsection{Can co-operation be sustained?}
\label{subsec:co-operation}

Aggregate efficiency is not sufficient for sustainable policy design. Although differentiated global regulation achieves the highest total welfare, the cooperative equilibrium is not participation-compatible for all jurisdictions. As shown in Table~\ref{tab:REGs_results}, the North American regulator experiences a decline in welfare under coordination relative to decentralized regional regulation, driven primarily by reductions in airline profitability and passenger connectivity. This uneven distribution of benefits creates an incentive for North America to defect to a non-cooperative regime, where it can free-ride on European emission reduction efforts while protecting its domestic aviation sector. The result highlights a central political feasibility constraint: the welfare-maximizing sustainability policy may fail unless the jurisdictions that gain from coordination compensate those that bear disproportionate adjustment costs.

To implement the higher joint surplus associated with the globally coordinated outcomes, the jurisdiction with the larger co-operative payoff could make a transfer (side payment) to the jurisdiction with the smaller payoff to satisfy its participation constraint. In our setting, aggregate welfare under co-operation exceeds the sum of the regulators' welfare under competition, so there exists a transfer from Europe to North America that ensures both regulators are weakly better off than in the non-cooperative outcome. In a two-player, transferable-utility setting, this superadditivity condition implies that the core is non-empty \citep{osborne1994course}.

Consider first the uniform global regulation. The co-operative surplus relative to the non-cooperative regime is:
\begin{equation}
  \Delta_c = 13.73 - 11.21 = 2.52
\end{equation}

Let $\psi$ denote a transfer from Europe to North America. To induce North America to join the cooperative arrangement, $\psi$ must satisfy the two participation constraints:
\begin{equation}
  \begin{cases}
    -4.27 + \psi \geq 3.08 \\
    18.01 - \psi \geq 8.13
  \end{cases}
\end{equation}

These imply $\psi \in [7.35, 9.88]$. This range captures the side payment needed for individual rationality; it is distinct from the incremental surplus $\Delta_c$. Any transfer in this interval therefore makes both regulators weakly better off than under non-co-operation. Similarly, when the differentiated global regulator sets region-specific charges, the transfer must lie in $\psi \in [7.84, 10.63]$ to induce North America to cooperate. This side-payment result highlights an equity-efficiency tension in sustainability governance. The regime that maximizes aggregate welfare imposes adjustment costs unevenly across jurisdictions. Without a compensating transfer, the jurisdiction bearing those costs prefers a less efficient decentralized policy. Distributional design is therefore not separate from environmental effectiveness; it is a condition for sustaining efficient climate governance. In this setting, transfers are not merely a fairness correction after the efficient policy has been identified. They are part of the institutional architecture required to make the efficient policy politically feasible.

Consequently, the cooperative outcome achieves aggregate efficiency but fails participation compatibility without transfers. Coordinated governance therefore requires pre-agreed redistribution to remain politically stable. Allowing transfers restores participation compatibility by making cooperation individually rational for both regulators. This finding speaks directly to the institutional design challenges facing ICAO's CORSIA. The absence of transfer mechanisms helps explain why international climate governance often settles on weaker offsetting arrangements rather than binding, coordinated emission pricing. In network industries, therefore, sustainable policy design requires distributional instruments alongside environmental instruments.

\subsection{Sensitivity to the social cost of carbon}
\label{subsec:sensitivity}

The preceding results assume a social cost of carbon $e$ equal to \euro~200 per ton, multiplied by three to also capture non-$CO_2$ effects, in line with the latest IPCC reports \citep{portner2022climate}. Given the substantial uncertainty surrounding this parameter, we conduct a sensitivity analysis that is specified in Appendix~\ref{app:sensitivity}, assuming the social cost is set at 30\% of the baseline value.

The main welfare rankings do not change. The differentiated global regime remains the most efficient outcome and decentralized regional regulators continue to outperform overlapping regulation. However, sensitivity analysis reveals that the likelihood of cooperation becomes even weaker at the lower emission prices. The regulators choose to subsidize airlines in order to mitigate their market power. A global regulator finds it optimal to subsidize the aviation industry, placing greater weight on industry performance and consumer surplus than on environmental concerns but at a large cost to the coordinating governments. Consequently, the incentive for individual regions to abandon the global framework becomes stronger, reinforcing the conclusion that global carbon policy is structurally unstable without binding side-transfers. Unlike the baseline scenario, where Europe compensates North America to sustain cooperation, the lower social cost of carbon reverses the direction of the required transfer. Sustaining the globally efficient policy in this scenario would require North America to transfer an annual side payment of between \euro~8.45 and \euro~10.82 billion to Europe in order to offset the fiscal losses.

\section{Conclusions and future directions}
\label{sec:conclusions}

This paper studies how firms in global network industries plan operations under fragmented environmental regulation, and how the firm-level response shapes the regulatory architectures these firms face. The two-stage game-theoretic framework treats firm pricing, capacity allocation and capital deployment decisions as the primary channel through which environmental regulation produces its effects, and treats the regulatory environment as endogenous to firm behavior. Aviation provides the operational setting because it combines mobile capital, network competition, overlapping environmental regulation and cross-border emissions, but the mechanisms apply more generally to network industries in which firms transmit regulatory signals across jurisdictions through linked operational decisions, including international shipping, electricity transmission, multi-modal freight, logistics networks and global supply chains.

Three findings emerge with direct implications for firm strategy and regulatory design. First, firms in concentrated network markets face a regulatory environment in which the effective charge falls systematically below the social cost of emissions, and may take the form of a subsidy rather than a tax for parameter ranges in which market power and the Mohring effect dominate. Current instruments such as emissions trading schemes and CORSIA implicitly treat the Pigouvian benchmark as the relevant target, but the effective charge should be calibrated to the degree of market concentration and the strength of frequency effects on each route, neither of which current schemes accommodate. For firms, this means that fleet planning and pricing decisions should be made on the basis of effective charges that depend on local market structure rather than on the headline rate.

Second, firms operating under fragmented regulation face an aggregate burden that exceeds what a coordinated regime would produce, even though each individual regulator under-prices relative to the social cost of carbon. Overlapping regulation in which local and global authorities operate simultaneously compounds this distortion rather than correcting for it, particularly when environmental damages are below the thresholds identified in Proposition~\ref{prop:welfare_pos}. Airlines currently face national departure taxes, regional emissions trading schemes and CORSIA on international routes simultaneously, so the implication is that layering additional instruments on top of existing ones is unlikely to reduce the burden firms face without explicit coordination of charge levels across jurisdictions.

Third, the cooperative regulatory regime is not the one firms should plan against. Differentiated global regulation achieves the highest aggregate welfare in the calibrated case study, but it leaves North America worse off than decentralised regional regulation, creating an incentive to defect. Side payments in the range of \euro~7 to \euro~10 billion annually are needed to sustain cooperation, and these mechanisms are absent from existing international aviation climate agreements. This finding reframes equity from a secondary distributional concern into a design condition. In fragmented network industries, the policy that maximises aggregate welfare shifts costs toward particular regions, firms or passenger segments, and if those actors can block, weaken or exit the cooperative arrangement, unequal burden-sharing directly undermines environmental effectiveness. For firms, the strategic implication is that coordination repeatedly settles on weaker arrangements than the welfare-maximising benchmark would suggest, and capacity, fleet and pricing decisions should be planned on the assumption that fragmented regulation is the durable equilibrium.

These findings carry three specific managerial implications. Business fares absorb a disproportionate share of regulatory costs because business demand is less elastic, and this effect varies across jurisdictions, so revenue management systems should model regulatory charge pass-through by demand segment and jurisdiction jointly rather than by route alone. Welfare rankings across regulatory regimes reverse as regional cost structures diverge, so the regulatory environment is structurally unstable in direction rather than merely uncertain in magnitude, and carriers whose capacity and fleet commitments span multiple jurisdictions are more exposed to these reversals than carriers operating within a single region. Different business models also have different regulatory interests. A low-cost carrier operating within a single region would benefit from coordination that eliminates the excess taxation produced by fragmented regulation, whereas a hub-and-spoke carrier connecting asymmetric regions may be better served by decentralised or overlapping regulation that accommodates the heterogeneity it faces across its network.

The analysis is subject to several limitations that suggest directions for future research. The set of airports served by an airline is fixed, so airlines cannot enter new routes in response to charges. Relaxing this assumption would enable the study of carbon leakage through network reconfiguration. Environmental damage is shared equally across jurisdictions, which abstracts from heterogeneous climate vulnerability. The parameter $\eta_r$ in the extended model provides a starting point for exploring this dimension. Introducing profit-maximising airports as a third stage in the game would capture the interaction between landing charges, environmental taxes and airline network decisions, and would also allow the framework to incorporate charges for local externalities such as air pollutants and aircraft noise that are relevant to airport-level environmental regulation.

\section*{Acknowledgments}
The authors would like to thank the participants at INFORMS 2022, hEART 2023, ITEA 2023, ISMP 2024, GARS, AEC, and the SOAR 2023 and SOAR 2026 conferences for their helpful comments. Nicole Adler and Gianmarco Andreana gratefully acknowledge support from the Israel Science Foundation under grant 2441/21, which helped finance this research. Nicole Adler also gratefully acknowledges partial funding from the Goldman Center for Data-Driven Innovation at the Hebrew University Business School. Gerben de Jong gratefully acknowledges partial funding from the Netherlands Organisation for Scientific Research (NWO) through the Rubicon programme, award 019.201SG.019.

\startappendices
\renewcommand{\thelemma}{\thesection.\arabic{lemma}} 

\section{Nested logit demand parameters}
\label{app:logit}

This appendix reports the nested logit demand parameters used in the case study. The coefficients are taken from the demand estimation in \citet{adler2026estimating} and enter the systematic utility specification in Equation~\ref{eq:sys_utility}. Table~\ref{tab:logit_coefficients} reports the parameter values separately by passenger segment and market, which are used to compute market shares through the nested logit formulation in Equation~\ref{eq:marketShare}.

\begin{table}[!ht]
  \centering
  \caption{Nested logit coefficients from \citet{adler2026estimating}.}
  \input{logit.tex}
  \label{tab:logit_coefficients}
\end{table}

\FloatBarrier

\section{Sensitivity analysis over the social cost of carbon}
\label{app:sensitivity}

Given the uncertainty surrounding the monetary valuation of carbon emissions, in this Appendix, we present a sensitivity analysis of carbon charges and social welfare assuming that the social cost of $CO_2$, $e$, is set at 30\% of the baseline value. Results for all scenarios are reported in Tables~\ref{tab:baseline_sens_30_tab} and ~\ref{tab:REGs_sensitivity_30}.

When the social cost of emissions is lower, regulators choose to subsidize airlines in order to mitigate market power. The most efficient outcome continues to be the differentiated global regime. A global regulator finds it optimal to subsidize the aviation industry, placing greater weight on industry performance and output than on environmental concerns. As a result, governments allocate taxpayer resources to the sector. Airlines, benefiting from the subsidy, expand their operations, leading to higher passenger volumes and increased revenues. This effect is pronounced under the uniform and differentiated global regulation scenarios. Consumers benefit concurrently from improved connectivity and from the reduction in airline market power induced by the subsidy.

In the case of overlapping regulators, the overall fiscal balance remains negative. However, the European regulator sets a positive carbon tax (+\euro~66/ton) to counteract the global subsidy (-\euro~83/ton) and partially offset the resulting increase in emissions. This reflects the relatively smaller contribution of the European aviation sector to regional welfare compared to North America, suggesting that the European regulator will collect tax revenue rather than supplementing the global policy.

As in the case where the marginal cost of damage from emissions is higher, the sustainability of a cooperative carbon policy is not guaranteed. Even when the social cost is set at 30\% of the baseline value, cooperation remains fragile. For relatively low levels of the social cost of carbon, Europe has an incentive to deviate. While the cooperative subsidies generate surplus gains for European passengers and airlines, these benefits are offset by the fiscal burden imposed on the government, with a deficit of \euro~32.25 billion under the differentiated uniform global regulation scenario. Consequently, Europe's overall welfare is lower under cooperation than under decentralized competition, -\euro~13.81 billion compared to -\euro~5.36 billion, giving the European regulator an incentive to abandon the global framework in order to protect its fiscal balance.

To implement the higher joint surplus associated with the globally coordinated outcomes in this scenario, North America must provide a transfer to Europe to satisfy its participation constraint. The cooperative surplus relative to the non-cooperative regime remains strictly positive, allowing for a mutually beneficial transfer. Let $\psi$ denote a transfer from North America to Europe. To induce Europe to maintain the cooperative arrangement under a differentiated global regulator, $\psi$ must satisfy the participation constraints for both regions:
\begin{equation}
  \begin{cases}
    -13.81 + \psi \geq -5.36 \\
    24.23 - \psi \geq 13.41
  \end{cases}
\end{equation}

These inequalities imply that the stabilizing transfer must fall within the range $\psi \in [8.45, 10.82]$ in billions of euros. Any transfer within this interval compensates Europe for its fiscal losses while allowing North America to retain a portion of its cooperative gains. Therefore, analogous to the baseline analysis, allowing for inter-regional side payments restores efficiency by making cooperation individually rational for both regulators.

\begin{table}[!t]
  \centering
  \caption{Welfare in the baseline scenario when the social cost of carbon is 30\% of $e$ annually.}
  \input{baseline_sens_30}
  \label{tab:baseline_sens_30_tab}
\end{table}

\begin{table}[!t]
  \centering
  \caption{Sensitivity analysis when the social cost of carbon is 30\% of $e$ annually.}
  \input{sensitivity_30}
  \label{tab:REGs_sensitivity_30}
\end{table}

\FloatBarrier

\section{Proofs of Lemmas and Propositions}

\subsection{\texorpdfstring{Auxiliary derivations: $\phi$ and consumer surplus}{Auxiliary derivations: phi and consumer surplus}}\label{app:aux}

\begin{lemma}[Derivatives and bounds for $\phi$]\label{lem:phi_app}
  Let $\phi(q) = \ln(1 + q/s) + \frac{q}{s + q}$ with $s > 0$. Then
  \[
    \phi'(q) = \frac{2s + q}{(s + q)^2} > 0,
    \qquad
    \phi''(q) = - \frac{3s + q}{(s + q)^3} < 0.
  \]
  Moreover $\phi'(q) \leq \phi'(0) = 2 / s$ for all $q \geq 0$.
\end{lemma}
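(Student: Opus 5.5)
The plan is to differentiate $\phi$ term by term and then read off signs and the bound directly; no earlier result is needed beyond $s>0$ and $q\ge 0$.

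\emph{First derivative.} Write $\phi(q)=\ln(s+q)-\ln s+\frac{q}{s+q}$. Then $\frac{d}{dq}\ln(s+q)=\frac{1}{s+q}$, and by the quotient rule
\[
\frac{d}{dq}\frac{q}{s+q}=\frac{(s+q)-q}{(s+q)^2}=\frac{s}{(s+q)^2}.
\]
Putting both over the common denominator $(s+q)^2$ gives
\[
\phi'(q)=\frac{(s+q)+s}{(s+q)^2}=\frac{2s+q}{(s+q)^2}.
\]
This is strictly positive because $s>0$ and $q\ge 0$ make both numerator and denominator positive.

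\emph{Second derivative.} Differentiate $\phi'(q)=(2s+q)(s+q)^{-2}$ by the product rule:
\[
\phi''(q)=(s+q)^{-2}-2(2s+q)(s+q)^{-3}=\frac{(s+q)-2(2s+q)}{(s+q)^3}=-\frac{3s+q}{(s+q)^3}.
\]
This is strictly negative for the same sign reasons.

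\emph{The bound.} Since $\phi''<0$ on $q\ge 0$, $\phi'$ is strictly decreasing there. Hence $\phi'(q)\le\phi'(0)=\frac{2s}{s^2}=\frac{2}{s}$ for all $q\ge 0$.

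There is no real obstacle. The only care needed is the algebraic simplification of the numerator in $\phi''$, and the observation that the bound follows from monotonicity of $\phi'$ rather than from a separate estimate. The bound is worth recording because it is what makes $1-\beta\phi'(q)>0$ whenever $\beta<s/2$, which is used in later results.
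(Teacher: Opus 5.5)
Your proposal is correct and follows essentially the same route as the paper: term-by-term differentiation to get $\phi'$, the product rule on $(2s+q)(s+q)^{-2}$ to get $\phi''$, and the bound $\phi'(q)\le\phi'(0)=2/s$ from the monotonicity of $\phi'$ implied by $\phi''<0$. Your algebra checks out, and you also state the sign justifications ($s>0$, $q\ge 0$), which the paper leaves implicit.
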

\begin{proof}
  Differentiate term by term:
  \begin{equation*}
    \frac{d}{dq}\ln\Bigl(1+\frac{q}{s}\Bigr)  = \frac{1}{1+q/s}\cdot\frac{1}{s} = \frac{1}{s+q},
    \qquad
    \frac{d}{dq}\Bigl(\frac{q}{s+q}\Bigr)
    = \frac{(s+q)-q}{(s+q)^2}
    = \frac{s}{(s+q)^2}.
  \end{equation*}
  Summing yields
  \[
    \phi'(q) = \frac{1}{s + q} + \frac{s}{(s + q)^2} = \frac{s + q + s}{(s + q)^2} = \frac{2s + q}{(s + q)^2}.
  \]
  Differentiate $\phi'(q) = (2s + q)(s + q)^{-2}$:
  \begin{equation*}
    \phi''(q) = (s+q)^{-2} + (2s+q)(-2)(s+q)^{-3} = -\frac{3s+q}{(s+q)^3}.
  \end{equation*}
  Since $\phi''(q) < 0$, $\phi'$ is decreasing in $q$, hence $\phi'(q) \leq \phi'(0) = 2/s$.
\end{proof}

\begin{lemma}[Consumer surplus]\label{lem:CS_app}
  With utility \eqref{eq:utility} and inverse demand \eqref{eq:inverse_demand}, consumer surplus is
  \begin{equation}
    CS(q_1, q_2) = \frac{1}{2}(q_1 + q_2)^2
    - \beta \left(\frac{q_1^2}{s + q_1} + \frac{q_2^2}{s + q_2}\right).
    \label{eq:CS_app}
  \end{equation}
  Under symmetry $q_1 = q_2 = q$, $CS(2q) = 2q^2 - \frac{2 \beta q^2}{s + q}$ and $\frac{1}{2} CS(2q) = q^2 - \frac{\beta q^2}{s + q}$.
\end{lemma}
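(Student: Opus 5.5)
The plan is to define consumer surplus in the standard way for a representative consumer with quasi-linear utility, namely as gross utility minus expenditure evaluated at the inverse-demand prices:
\[
CS(q_1,q_2) := U(q_1,q_2) - \sum_{i=1}^2 p_i(q_1,q_2)\,q_i ,
\]
where $p_i$ is given by \eqref{eq:inverse_demand}. Since $p_i = \partial U/\partial q_i$, this is the surplus the consumer retains when buying $(q_1,q_2)$ at those prices. The proof then reduces to one substitution and some cancellation; no earlier result is needed beyond the form of $U$ and of $\phi$ in \eqref{eq:phi_function}.

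First, I write total expenditure using $\phi(q_i) = \ln(1+q_i/s) + q_i/(s+q_i)$ and $Q := q_1+q_2$:
\[
\sum_i p_i q_i = \sum_i\Bigl[\alpha + \beta\ln\bigl(1+\tfrac{q_i}{s}\bigr)\Bigr]q_i + \beta\sum_i \frac{q_i^2}{s+q_i} - Q^2 .
\]
Second, I subtract this from $U = \sum_i[\alpha+\beta\ln(1+q_i/s)]q_i - \tfrac12 Q^2$. The linear-in-$\alpha$ terms and the logarithmic frequency terms cancel exactly. What remains is $-\tfrac12Q^2 + Q^2 - \beta\sum_i q_i^2/(s+q_i)$, which is \eqref{eq:CS_app}.

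Third, for the symmetric case I set $q_1=q_2=q$, so that $Q=2q$. This gives $CS = 2q^2 - 2\beta q^2/(s+q)$, and halving it gives the regional share used in \eqref{eq:SWL}.

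The only point needing care is conceptual rather than computational. Because the quality term depends on $q_i$ itself, the "price" in \eqref{eq:inverse_demand} contains the appropriable frequency component $\beta q_i/(s+q_i)$. That component is exactly what produces the $-\beta q_i^2/(s+q_i)$ correction: it is the part of frequency value the airline extracts through fares. I would remark on this after the computation, since it explains why $CS$ falls with $\beta$ at fixed quantities.
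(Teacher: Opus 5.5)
Your proposal is correct and matches the paper's proof essentially line for line. The paper also takes $CS = U - \sum_i p_i q_i$ as the definition, expands $\sum_i p_i q_i$ via the inverse demand with $Q = q_1 + q_2$, cancels the $\alpha$ and logarithmic terms against $U$ to leave $\tfrac12 Q^2 - \beta\sum_i q_i^2/(s+q_i)$, and obtains the symmetric expressions by substituting $q_1 = q_2 = q$.
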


\begin{proof}
  By definition, $CS = U - \sum_{i=1}^2 p_i q_i$. Let $Q:= q_1 + q_2$. From \eqref{eq:inverse_demand}:
  \[
    p_i q_i = \Big(\alpha + \beta \ln(1 + q_i/s)\Big) q_i + \beta \frac{q_i^2}{s + q_i} - Q q_i.
  \]
  Summing over $i$ gives:
  \begin{equation*}
    \sum_{i=1}^2 p_i q_i = \sum_{i=1}^2 \Bigl(\alpha
    + \beta\ln(1 + q_i/s)\Bigr) q_i + \beta\left(\frac{q_1^2}{s+q_1}
    + \frac{q_2^2}{s+q_2}\right) - Q^2.
  \end{equation*}
  Subtracting from $U = \sum_i(\alpha + \beta \ln(1 + q_i/s)) q_i - \frac{1}{2} Q^2$ yields:
  \begin{equation*}
    CS = \tfrac{1}{2}Q^2 - \beta\left(\frac{q_1^2}{s+q_1} + \frac{q_2^2}{s+q_2}\right),
  \end{equation*}
  which is \eqref{eq:CS_app}. Symmetric expressions follow by substitution.
\end{proof}

\begin{proof}[Proof of Lemma~\ref{lem:cournot}]
  \label{app:cournot}

  Fix $T$. Airline $i$'s profit is \eqref{eq:profit_analytics} with $p_i(q_i, q_j) = \alpha + \beta \phi(q_i) - (q_i + q_j)$ and $m(T) = c_p + \frac{c_f + T}{s}$. Differentiating with respect to $q_i$:
  \[
    \frac{\partial \pi_i}{\partial q_i} = \big(p_i(q_i, q_j) - m(T) \big) + q_i \frac{\partial p_i}{\partial q_i}.
  \]
  Since $\frac{\partial p_i}{\partial q_i} = \beta \phi'(q_i) - 1$, the first-order condition is $p_i - m(T) + q_i(\beta \phi'(q_i) - 1) = 0$. Under symmetry ($q_i = q_j = q$), this simplifies to the equilibrium condition \eqref{eq:Gsym}:
  \[
    G(q; T) := \alpha - m(T) + \beta \phi(q) + \beta q \phi'(q) - 3q = 0.
  \]

  \smallskip
  \noindent To determine the effect of the instrument $T$, we implicitly differentiate the equilibrium condition $G(q^*; T) = 0$ with respect to $T$:
  \[
    G_q(q^*; T) \frac{\partial q^*}{\partial T} + G_T(q^*; T) = 0.
  \]
  The partial derivatives are calculated as:
  \begin{equation*}
    G_T(q; T) = -\frac{1}{s},
    \qquad
    G_q(q; T) = 2\beta\phi'(q) + \beta q\phi''(q) - 3.
  \end{equation*}
  Solving for the change in quantity:
  \begin{equation}
    \frac{\partial q^*}{\partial T} = - \frac{G_T}{G_q} = \frac{1/s}{2 \beta \phi'(q^*) + \beta q^*\phi''(q^*) - 3}.
    \label{eq:dq_formula_app}
  \end{equation}

  Under the standard stability condition
  \begin{equation}
    2\beta\phi'(q^*) + \beta q^*\phi''(q^*) < 1,
    \label{eq:stability}
  \end{equation}
  the denominator in \eqref{eq:dq_formula_app} is less than $-2$. Thus, $\partial q^* / \partial T < 0$, meaning the equilibrium quantity is strictly decreasing in the policy instrument $T$.

  \smallskip
  \noindent For the equilibrium price $p^*(T) = \alpha + \beta \phi(q^*) - 2 q^*$, we find:
  \[
    \frac{\partial p^*}{\partial T} = \big(\beta \phi'(q^*) - 2\big) \frac{\partial q^*}{\partial T}.
  \]
  By Lemma~\ref{lem:phi_app}, $\phi'(q^*) \leq 2/s$. Given the assumption $\beta < s/2$, it follows that $\beta \phi'(q^*) \leq 2 \beta/s < 1$. Consequently, $(\beta \phi'(q^*) - 2) < 0$. Since $\partial q^* / \partial T < 0$, the product is positive: $\partial p^* / \partial T > 0$.
\end{proof}

\begin{proof}[Proof of Proposition~\ref{prop:best_responses}]
  \label{app:best_responses}

  We prove parts (i) and (ii) in turn.

  \smallskip
  \noindent\textbf{(i) Local regulator.}
  Consider local regulator $L$ in Scenario~2. Regulator $L$ values (i) consumer surplus of its citizens, assumed to be half the market, (ii) the locally based airline's profit, (iii) its own tax revenue, and (iv) half the environmental damage. Hence, its objective is
  \begin{equation}
    SW_L = \frac{1}{2} CS(q_1, q_2) + \pi_1 + \tau_L \frac{q_1 + q_2}{s} - \frac{e}{2} \frac{q_1 + q_2}{s}.
    \label{eq:SWL_app}
  \end{equation}
  Under symmetry $q_1 = q_2 = q$, we have $Q = 2q$ and (Lemma~\ref{lem:CS_app})
  \[
    \frac{1}{2} CS(2q) = q^2 - \frac{\beta q^2}{s + q}.
  \]
  Firm 1's profit is $\pi_1 = \big[p(q,q) - m(T) \big] q - F$ with $T = \tau_L+\tau_{L'}$. Using the symmetric Cournot FOC $G(q; T) = 0$ (Lemma~\ref{lem:cournot}), we can write the markup identity
  \begin{equation}
    p(q, q) - m(T) = q \big[1 - \beta \phi'(q) \big],
    \label{eq:markup_app}
  \end{equation}
  so $\pi_1 = q^2 \big[1 - \beta \phi'(q) \big] - F$. Substituting these expressions into \eqref{eq:SWL_app} yields
  \begin{equation}
    SW_L(q, \tau_L) = \Bigl(q^2 - \frac{\beta q^2}{s+q}\Bigr) + q^2\bigl(1 - \beta\phi'(q)\bigr) + \frac{2q}{s}\tau_L - \frac{e}{s}q - F.
    \label{eq:SWL_sym_app}
  \end{equation}

  Regulator $L$ chooses $\tau_L$, taking $\tau_{L'}$ as given, anticipating that $q$ adjusts according to the symmetric Cournot equilibrium. To compute the optimal $\tau_L$ at a candidate symmetric output $q$, we differentiate $SW_L$ along the equilibrium path. The equilibrium path is characterized by the symmetric Cournot condition \eqref{eq:Gsym}:
  \[
    \alpha - \Big(c_p + \frac{c_f + \tau_L + \tau_{L'}}{s} \Big) + \beta\phi(q) + \beta q \phi'(q) - 3q = 0.
  \]
  Solving for $\tau_L$ gives
  \begin{equation}
    \tau_L(q) = s \Big(\alpha - c_p + \beta \phi(q) + \beta q \phi'(q) - 3q \Big) - c_f - \tau_{L'}.
    \label{eq:tauL_path_app}
  \end{equation}
  Since $\tau_{L'}$ is fixed from regulator $L$'s viewpoint, differentiating \eqref{eq:tauL_path_app} yields
  \begin{equation}
    \frac{d \tau_L}{d q} = s \Big(2 \beta \phi'(q) + \beta q \phi''(q) - 3\Big).
    \label{eq:dtauL_dq_path_app}
  \end{equation}

  \noindent Now differentiate \eqref{eq:SWL_sym_app} with respect to $q$ and apply the chain rule:
  \[
    \frac{d SW_L}{d q} = \left.\frac{\partial SW_L}{\partial q}\right|_{\tau_L} + \frac{\partial SW_L}{\partial \tau_L} \frac{d \tau_L}{d q}.
  \]
  From \eqref{eq:SWL_sym_app}, $\frac{\partial SW_L}{\partial \tau_L} = \frac{2q}{s}$, and
  \[
    \frac{\partial}{\partial q} \Big(q^2 - \frac{\beta q^2}{s + q}\Big) = 2q - \beta \frac{2sq + q^2}{(s + q)^2},
  \]
  \[
    \frac{\partial}{\partial q} \Big(q^2(1 - \beta\phi'(q))\Big) = 2q (1 - \beta\phi'(q)) - \beta q^2 \phi''(q),
  \]
  and $\frac{\partial}{\partial q} \big(\frac{2q}{s} \tau_L\big) = \frac{2}{s} \tau_L$ holding $\tau_L$ fixed.
  Therefore
  \begin{equation}
    \frac{d SW_L}{d q} = \Bigl[2q - \beta\frac{2sq+q^2}{(s+q)^2}\Bigr] + \Bigl[2q\bigl(1 - \beta\phi'(q)\bigr) - \beta q^2\phi''(q)\Bigr] + \frac{2}{s}\tau_L + \frac{2q}{s}\frac{d\tau_L}{d q} - \frac{e}{s}.
    \label{eq:dSWL_dq_raw_app}
  \end{equation}
  Substitute \eqref{eq:dtauL_dq_path_app} into \eqref{eq:dSWL_dq_raw_app}:
  \begin{equation*}
    \frac{d SW_L}{d q} = \Bigl[2q - \beta\frac{2sq+q^2}{(s+q)^2}\Bigr] + \Bigl[2q\bigl(1 - \beta\phi'(q)\bigr) - \beta q^2\phi''(q)\Bigr] + \frac{2}{s}\tau_L + 2q\Bigl[2\beta\phi'(q) + \beta q\phi''(q) - 3\Bigr] - \frac{e}{s}.
  \end{equation*}
  Collect terms in $q$ and use Lemma~\ref{lem:phi_app}. A direct simplification gives:
  \begin{equation}
    \frac{d SW_L}{d q} = \frac{2}{s} \tau_L - \frac{e}{s} - 2q + \frac{2 \beta s^2 q}{(s + q)^3}.
    \label{eq:dSWL_simplifiee_app}
  \end{equation}

  \noindent We note that the combination of $\phi'$ and $\phi''$ terms collapses to $\frac{2 \beta s^2 q}{(s + q)^3}$ after substituting $\phi'(q) = \frac{2s + q}{(s + q)^2}$ and $\phi''(q) = -\frac{3s + q}{(s + q)^3}$ and collecting over $(s + q)^3$.

  The local regulator's optimality condition along the equilibrium path is $\frac{dSW_L}{dq}=0$.
  Solving \eqref{eq:dSWL_simplifiee_app} for $\tau_L$ yields
  \[
    \tau_L^*(q) = \frac{1}{2} \left[e + 2sq - \frac{2 \beta s^3 q}{(s + q)^3} \right],
  \]
  which is \eqref{eq:tauL}. Symmetry implies $\tau_{L'}^*(q) = \tau_L^*(q)$.

  \smallskip
  \noindent\textbf{(ii) Global regulator.}
  Global (network) welfare is \eqref{eq:W_def}. Since tax payments are transfers, $T$ affects $\mathcal W$ only through the induced equilibrium quantity $q$. Under symmetry, $\mathcal W(q; e)$ is given by \eqref{eq:W_sym}.

  In Scenario~3, the global regulator chooses $\tau_G$ (hence $T = \tau_G$) anticipating the induced equilibrium quantity $q^*(\tau_G)$ from Lemma~\ref{lem:cournot}. The regulator solves:
  \[
    \max_{\tau_G} \mathcal W \big(q^*(\tau_G); e \big).
  \]
  Under uniqueness and Lemma~\ref{lem:cournot}, $\frac{\partial q^*}{\partial \tau_G}\neq 0$, so the first-order condition is:
  \begin{equation}
    \mathcal W'(q^*; e) = 0.
    \label{eq:global_FOC_app}
  \end{equation}
  By Lemma~\ref{lem:welfare_concavity}, \eqref{eq:global_FOC_app} pins down the unique first-best quantity $q^{SO}(e)$, so the global regulator implements $q = q^{SO}(e)$.

  To compute the corresponding charge as a function of $q$, combine the first-best condition (from \eqref{eq:qSO}):
  \begin{equation}
    c_p + \frac{c_f + e}{s} = \alpha + \beta \phi(q) - 2q
    \label{eq:first_best_condition_app}
  \end{equation}
  with the symmetric Cournot condition \eqref{eq:Gsym}:
  \begin{equation}
    c_p + \frac{c_f + \tau_G}{s} = \alpha + \beta \phi(q) + \beta q \phi'(q) - 3q.
    \label{eq:cournot_condition_app}
  \end{equation}
  Subtract \eqref{eq:first_best_condition_app} from \eqref{eq:cournot_condition_app} to obtain:
  \[
    \frac{\tau_G - e}{s} = \beta q \phi'(q) - q,
  \]
  hence:
  \[
    \tau_G^*(q) = e - sq \big[1 - \beta \phi'(q) \big],
  \]
  which is \eqref{eq:tauG}.
\end{proof}

\begin{proof}[Proof of Corollary~\ref{cor:tax_rank}]
  \label{app:tax_rank}

  Using \eqref{eq:tauL} and \eqref{eq:tauG}:
  \begin{equation*}
    2\tau_L^*(q) - \tau_G^*(q) = sq\left[3 - \beta\left(\frac{2s^2}{(s+q)^3} + \frac{2s+q}{(s+q)^2}\right)\right].
  \end{equation*}
  For any $q \geq 0$, the two terms in parentheses satisfy:
  \[
    \frac{2s^2}{(s + q)^3} \leq \frac{2s^2}{s^3} = \frac{2}{s},
    \qquad
    \frac{2s + q}{(s + q)^2} \leq \frac{2s}{s^2} = \frac{2}{s},
  \]
  so their sum is at most $4/s$. Hence under $\beta < s/2$:
  \[
    \beta \left(\frac{2s^2}{(s + q)^3} + \frac{2s + q}{(s + q)^2}\right) \leq \beta \frac{4}{s} < 2,
  \]
  and therefore the bracketed term is strictly larger than $3 - 2 = 1$. Since $sq > 0$, we obtain $2 \tau_L^*(q) - \tau_G^*(q) > 0$.

  Consequently, since $e \geq 0$, $s > 0$, $q > 0$, and $\beta > 0$:
  \begin{equation*}
    T^{(4)}(q) = 2 \tau_L^*(q) + \tau_G^*(q) = 2e + sq + \frac{\beta s q(3sq + q^2)}{(s + q)^3} > 0.
  \end{equation*}
\end{proof}

\begin{proof}[Proof of Proposition~\ref{prop:rank_pos}]
  \label{app:rank_pos}

  It is useful to rewrite the symmetric Cournot condition \eqref{eq:Gsym} as a mapping from quantities to instruments. Rearranging \eqref{eq:Gsym} yields:
  \begin{equation}
    T = H(q) := s \Big[\alpha - c_p + \beta \phi(q) + \beta q \phi'(q) - 3q \Big] - c_f.
    \label{eq:H_app}
  \end{equation}
  Thus, for any instrument $T$, the induced symmetric Cournot quantity $q$ satisfies $T = H(q)$. Differentiate:
  \begin{equation}
    H'(q) = s \big[2 \beta \phi'(q) + \beta q \phi''(q) - 3\big].
    \label{eq:H_prime_app}
  \end{equation}
  Under the stability condition \eqref{eq:stability} at the relevant equilibria, $2 \beta \phi'(q) + \beta q \phi''(q) < 1$, hence:
  \begin{equation}
    H'(q) < s(1 - 3) = -2s < 0,
    \label{eq:H_decreasing_app}
  \end{equation}
  so $H$ is strictly decreasing.

  Each scenario $\sigma$ pins down an equilibrium $q^{(\sigma)*}$ via a fixed point:
  \begin{align*}
    T^{(\sigma)*} & = H\bigl(q^{(\sigma)*}\bigr), \\
    T^{(1)}       & = 0,                          \\
    T^{(2)}(q)    & = 2\tau_L^*(q),               \\
    T^{(3)}(q)    & = \tau_G^*(q),                \\
    T^{(4)}(q)    & = 2\tau_L^*(q) + \tau_G^*(q).
  \end{align*}
  We proceed in three steps.

  \smallskip
  \noindent\textbf{Step 1: $q^{(2)*} < q^{(3)*}$.} At $q^{(3)*}$, by definition $H(q^{(3)*}) = \tau_G^*(q^{(3)*})$. Then by Corollary ~\ref{cor:tax_rank}:
  \[
    H(q^{(3)*}) - 2 \tau_L^*(q^{(3)*}) = \tau_G^*(q^{(3)*}) - 2 \tau_L^*(q^{(3)*}) < 0
  \]
  Define $F_2(q) := H(q) - 2 \tau_L^*(q)$. Since $H$ is strictly decreasing and $2\tau_L^*(q)$ is continuous, $F_2$ crosses zero at $q^{(2)*}$. Moreover, $H'(q) < 0$ and $2 \tau_L^*(q)$ is weakly increasing in $q$ under $\beta < s/2$, hence $F_2$ is strictly decreasing. Since $F_2(q^{(3)*}) < 0$ and $F_2(q^{(2)*}) = 0$, it follows that $q^{(2)*} < q^{(3)*}$.

  \smallskip
  \noindent\textbf{Step 2: $q^{(4)*} < q^{(2)*}$.} Under the assumption $\tau_G^*(q^{(3)*}) \geq 0$ and Lemma~\ref{lem:phi_app}, $\tau_G^*(q)$ is decreasing in $q$ (since $\frac{d \tau_G^*}{dq} = -s + \frac{2\beta s^3}{(s + q)^3} < 0$ under $\beta < s/2$). Thus $q^{(2)*} < q^{(3)*}$ implies $\tau_G^*(q^{(2)*}) \geq \tau_G^*(q^{(3)*}) \geq 0$, hence:
  \begin{equation*}
    T^{(4)}(q^{(2)*}) = 2\tau_L^*(q^{(2)*}) + \tau_G^*(q^{(2)*}) > 2\tau_L^*(q^{(2)*}) = H(q^{(2)*}).
  \end{equation*}
  Equivalently, $H(q^{(2)*}) - T^{(4)}(q^{(2)*}) < 0$. Define $F_4(q) := H(q) - T^{(4)}(q)$. As above, $H$ is strictly decreasing, $T^{(4)}(q)$ is continuous and $T^{(4)}(q)$ is weakly increasing in $q$ under $\beta < s/2$, so $F_4$ is strictly decreasing. Since $F_4(q^{(4)*}) = 0$ and $F_4(q^{(2)*}) < 0$, we conclude $q^{(4)*} < q^{(2)*}$.

  \smallskip
  \noindent\textbf{Step 3: $q^{(3)*} \leq q^{(1)*}$.} In Scenario~1, $T^{(1)} = 0$, so $q^{(1)*}$ satisfies $H(q^{(1)*}) = 0$. In Scenario~3, $T^{(3)*} = \tau_G^*(q^{(3)*}) \geq 0$, so $H(q^{(3)*}) \geq 0$. Since $H$ is strictly decreasing, $H(q^{(3)*}) \geq H(q^{(1)*}) = 0$ implies $q^{(3)*} \leq q^{(1)*}$, with strict inequality if $\tau_G^*(q^{(3)*}) > 0$.

  Finally, because $H$ is strictly decreasing, the quantity ordering implies the instrument ordering: if $q^{(4)*} < q^{(2)*} < q^{(3)*} \leq q^{(1)*}$ then
  \begin{equation*}
    T^{(4)*} = H(q^{(4)*}) > H(q^{(2)*}) = T^{(2)*} > H(q^{(3)*}) = T^{(3)*}  \geq H(q^{(1)*}) = 0.
  \end{equation*}
  This gives \eqref{eq:T_rank_pos} and \eqref{eq:q_rank_pos}. Price ordering \eqref{eq:p_rank_pos} follows from Lemma~\ref{lem:cournot}, which implies that $p^*(T)$ is increasing in $T$.
\end{proof}

\subsection{Proof of Lemma~\ref{lem:welfare_concavity} and Proposition~\ref{prop:scenario3_SO}}
\label{app:welfare_core}

\begin{proof}[Proof of Lemma~\ref{lem:welfare_concavity}.]

  Under symmetry, \eqref{eq:W_sym} holds. Differentiate:
  \[
    \frac{d}{d q} \left[q \ln \Big(1 + \frac{q}{s} \Big) \right] = \ln \Big(1 + \frac{q}{s} \Big) + \frac{q}{s + q} = \phi(q),
  \]
  hence:
  \[
    \mathcal W'(q; e) = 2 \left[\alpha - c_p - \frac{c_f + e}{s} + \beta \phi(q) - 2q\right],
  \]
  which is \eqref{eq:qSO} when set to zero.
  Differentiate again:
  \[
    \mathcal W''(q; e) = 2 \beta \phi'(q) - 4.
  \]
  By Lemma~\ref{lem:phi_app}, $\phi'(q) \leq 2/s$, so:
  \[
    \mathcal W''(q; e) \leq 2 \beta \frac{2}{s} - 4 = \frac{4\beta}{s} - 4 < 0
    \quad
    \text{under } \beta < s/2.
  \]
  Thus $\mathcal W(\cdot; e)$ is strictly concave and has a unique maximum $q^{SO}(e)$ solving \eqref{eq:qSO}.
\end{proof}

\begin{proof}[Proof of Proposition~\ref{prop:scenario3_SO}]

  In Scenario~3, the global regulator chooses $\tau_G$ anticipating the induced Cournot equilibrium $q^*(\tau_G)$. Because tax revenue is a transfer, global welfare depends on $\tau_G$ only through $q$, so the regulator solves $\max_{\tau_G} \mathcal W(q^*(\tau_G); e)$. Under uniqueness, $\partial q^*/ \partial \tau_G \neq 0$, so the first-order condition is $\mathcal W'(q^*; e) = 0$. By Lemma~\ref{lem:welfare_concavity}, this yields $q^* = q^{SO}(e)$.

  To compute the implementing tax, combine the first-best condition \eqref{eq:qSO} with the symmetric Cournot condition \eqref{eq:Gsym} evaluated at $q^{SO}(e)$, exactly as in \eqref{eq:first_best_condition_app}--\eqref{eq:cournot_condition_app} in Appendix~\ref{app:best_responses}. This gives:
  \[
    \tau_G^{(3)*}(e)
    = e - s q^{SO}(e) \Big[1 - \beta \phi'(q^{SO}(e)) \Big],
  \]
  which is \eqref{eq:tauG_SO}. Since $\mathcal W(\cdot; e)$ is strictly concave, $q^{SO}(e)$ maximizes global welfare, so Scenario~3 weakly dominates any other regime that induces $q \neq q^{SO}(e)$.
\end{proof}

\begin{proof}[Proof of Proposition~\ref{prop:welfare_pos}]
  \label{app:welfare_pos}

  Assume $\tau_G^{(3)*}(e) > 0$. By Proposition~\ref{prop:scenario3_SO}, $q^{(3)*}(e) = q^{SO}(e)$ uniquely maximizes $\mathcal W(\cdot; e)$, so $\mathcal W^{(3)*}(e) > \mathcal W^{(\sigma)*}(e)$ for $\sigma \in \{1,2,4\}$ whenever $q^{(\sigma)*} \neq q^{(3)*}$.

  Under $\tau_G^{(3)*}(e) > 0$, Proposition~\ref{prop:rank_pos} gives $q^{(4)*} < q^{(2)*} < q^{(3)*} < q^{(1)*}$. By strict concavity (Lemma~\ref{lem:welfare_concavity}), $\mathcal W(\cdot; e)$ is strictly increasing on $[0, q^{(3)*}]$ and strictly decreasing on $[q^{(3)*}, \infty)$. Therefore, since $q^{(4)*} < q^{(2)*} < q^{(3)*}$,
  \[
    \mathcal W^{(2)*}(e) = \mathcal W(q^{(2)*}; e) > \mathcal W(q^{(4)*}; e) = \mathcal W^{(4)*}(e),
  \]
  and since $q^{(1)*}>q^{(3)*}$,
  \[
    \mathcal W^{(3)*}(e) = \mathcal W(q^{(3)*}; e) > \mathcal W(q^{(1)*}; e) = \mathcal W^{(1)*}(e).
  \]
  This proves \eqref{eq:W_rank_pos}.
\end{proof}

\smallskip
\noindent\textbf{Threshold ordering.} We have shown $\mathcal W^{(2)*}(e) > \mathcal W^{(4)*}(e)$ for all $e$ satisfying $\tau_G^{(3)*}(e) > 0$. Hence, for any $e$ such that $\mathcal W^{(4)*}(e) \geq \mathcal W^{(1)*}(e)$, it follows that $\mathcal W^{(2)*}(e) > \mathcal W^{(4)*}(e) \geq \mathcal W^{(1)*}(e)$. Thus $\{e: \mathcal W^{(4)*}(e) \geq \mathcal W^{(1)*}(e)\} \subseteq \{e:\mathcal W^{(2)*}(e) \geq \mathcal W^{(1)*}(e)\}$, implying $\bar e_{21} \leq \bar e_{41}$.

\smallskip
\noindent\textbf{Three-case ranking.} By Lemma~\ref{lem:welfare_crossing}, the thresholds $\bar{e}_{21}$ and $\bar{e}_{41}$ exist and are unique. The three cases follow directly from the definitions of the thresholds and the ordering $\bar{e}_{21} \leq \bar{e}_{41}$.

\subsection{Welfare Thresholds}
\label{app:thresholds}

This subsection characterizes the welfare thresholds $\bar e_{21}$ and $\bar e_{41}$ used in Proposition~\ref{prop:welfare_pos}.
From \eqref{eq:Gsym}, the symmetric equilibrium quantity $q^*(T)$ satisfies \eqref{eq:H_app}.

\begin{lemma}[$e$ and $q$ in Scenarios 2 and 4]\label{lem:dq_maps_24_app}
  Let $q$ be a symmetric equilibrium quantity.

  \smallskip
  \noindent\textbf{Scenario 2.}
  At symmetry, $T^{(2)}(q) = 2\tau_L^*(q) = e + 2sq - \frac{2\beta s^3 q}{(s + q)^3}$.
  Combining with \eqref{eq:H_app} yields:
  \begin{equation}
    e = D_2(q) := s(\alpha - c_p) - c_f + s\beta\bigl[\phi(q) + q\phi'(q)\bigr] - 5sq + \frac{2\beta s^3 q}{(s+q)^3}.
    \label{eq:D2_app}
  \end{equation}

  \smallskip
  \noindent\textbf{Scenario 4.}
  At symmetry, Corollary~\ref{cor:tax_rank} implies $T^{(4)}(q) = 2\tau_L^*(q) + \tau_G^*(q) = 2e + sq + \frac{\beta s q(3sq + q^2)}{(s + q)^3}$.
  Combining with \eqref{eq:H_app} yields:
  \begin{equation}
    e = D_4(q) := \frac{s}{2}\biggl[\alpha - c_p - \frac{c_f}{s} + \beta\bigl[\phi(q) + q\phi'(q)\bigr] - 4q - \frac{\beta q(3sq+q^2)}{(s+q)^3}\biggr].
    \label{eq:D4_app}
  \end{equation}
\end{lemma}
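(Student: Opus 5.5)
The plan is a direct substitution. Each scenario's symmetric equilibrium is a fixed point of the Cournot map $T = H(q)$ from \eqref{eq:H_app}, evaluated at the scenario's best-response instrument. So I equate $H(q)$ with $T^{(2)}(q)$ or $T^{(4)}(q)$, both of which are affine in $e$, and solve for $e$. No monotonicity or stability argument is needed here; that belongs to the later crossing lemma. The only work is bookkeeping of the $\beta$-terms.

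\textbf{Scenario 2.} Start from \eqref{eq:tauL}, which gives $T^{(2)}(q) = 2\tau_L^*(q) = e + 2sq - 2\beta s^3 q/(s+q)^3$. Setting this equal to
\[
H(q) = s\bigl[\alpha - c_p + \beta\phi(q) + \beta q\phi'(q) - 3q\bigr] - c_f
\]
and moving everything except $e$ to one side, I collect the linear terms $-3sq - 2sq = -5sq$. The Mohring correction $+2\beta s^3 q/(s+q)^3$ changes sign. The result is exactly $D_2(q)$ in \eqref{eq:D2_app}.

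\textbf{Scenario 4.} First I confirm the closed form of $T^{(4)}(q)$ quoted from Corollary~\ref{cor:tax_rank}. I add \eqref{eq:tauL} twice and \eqref{eq:tauG} once, putting the $\beta$-terms over the common denominator $(s+q)^3$. The resulting numerator is
\[
\beta s q\bigl[(2s+q)(s+q) - 2s^2\bigr] = \beta s q\,(3sq + q^2),
\]
so $T^{(4)}(q) = 2e + sq + \beta s q(3sq+q^2)/(s+q)^3$. Next I equate this with $H(q)$, which gives $2e = s(\alpha-c_p) - c_f + s\beta[\phi+q\phi'] - 4sq - \beta s q(3sq+q^2)/(s+q)^3$. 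Finally I divide by $2$ and factor out $s/2$, which turns $c_f$ into $c_f/s$ inside the bracket. This yields \eqref{eq:D4_app}.

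The only step that needs care is the algebraic simplification of the combined $\beta$-terms in $2\tau_L^* + \tau_G^*$: expanding $(2s+q)(s+q) = 2s^2 + 3sq + q^2$ and cancelling the $2s^2$. Beyond that, the main thing to track is the sign and factor bookkeeping when dividing by $2$. Everything else is linear rearrangement.
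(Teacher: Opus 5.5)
Your proof is correct and follows the paper's route: you set each scenario's closed-form instrument equal to $H(q)$ from \eqref{eq:H_app} and solve the resulting relation, which is affine in $e$, for $e$. Your explicit check that $2\tau_L^*+\tau_G^*$ combines to $2e + sq + \beta sq(3sq+q^2)/(s+q)^3$ adds something the paper omits here, since it simply imports that expression from the proof of Corollary~\ref{cor:tax_rank}.
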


\begin{proof}
  \textbf{Scenario 2.}
  From $T^{(2)}(q) = e + 2sq - \frac{2\beta s^3 q}{(s + q)^3}$ and \eqref{eq:H_app},
  \begin{equation*}
    e + 2sq - \frac{2\beta s^3 q}{(s+q)^3} = s\bigl[\alpha - c_p + \beta\phi(q) + \beta q\phi'(q) - 3q\bigr] - c_f.
  \end{equation*}
  Rearranging yields \eqref{eq:D2_app}.

  \smallskip
  \noindent\textbf{Scenario 4.}
  From $T^{(4)}(q) = 2e + sq + \frac{\beta s q(3sq + q^2)}{(s + q)^3}$ and \eqref{eq:H_app},
  \begin{equation*}
    2e + sq + \frac{\beta sq(3sq+q^2)}{(s+q)^3} = s\bigl[\alpha - c_p + \beta\phi(q) + \beta q\phi'(q) - 3q\bigr] - c_f.
  \end{equation*}
  Divide by $2$ and rearrange to isolate $e$, obtaining \eqref{eq:D4_app}.
\end{proof}

\begin{lemma}[Welfare equivalence]\label{lem:deq_baseline_app}
  Let $q^{(1)*}$ denote the baseline equilibrium (Scenario~1) and define $\mathcal W_0(q) := \mathcal W(q; 0)$. For any $q \in (0, q^{(1)*})$, the unique damage level $e_{=}(q)$ such that $\mathcal W(q; e_{=}(q)) = \mathcal W(q^{(1)*}; e_{=}(q))$ is:
  \begin{equation}
    e_{=}(q) = \frac{s\big[\mathcal W_0(q^{(1)*}) - \mathcal W_0(q) \big]}{2 \big(q^{(1)*} - q\big)}.
    \label{eq:e_equalizer_app}
  \end{equation}
  Moreover, for any $e \geq 0$:
  \[
    \mathcal W(q; e) \gtrless \mathcal W(q^{(1)*}; e)
    \quad \Longleftrightarrow \quad
    e \gtrless e_{=}(q).
  \]
\end{lemma}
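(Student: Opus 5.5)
The proof rests on the decomposition \eqref{eq:W_decomp}, $\mathcal W(q;e) = \mathcal W_0(q) - \frac{2e}{s}q$, which makes welfare affine in $e$ for any fixed quantity. I will fix $q \in (0, q^{(1)*})$ and study the welfare gap
\[
  \Delta(e;q) := \mathcal W(q;e) - \mathcal W(q^{(1)*};e) = \bigl[\mathcal W_0(q) - \mathcal W_0(q^{(1)*})\bigr] + \frac{2e}{s}\bigl(q^{(1)*} - q\bigr).
\]
The baseline quantity $q^{(1)*}$ is determined by $H(q^{(1)*})=0$ and does not depend on $e$. Hence $\Delta(\cdot;q)$ is an affine function of $e$ with slope $\frac{2}{s}(q^{(1)*}-q) > 0$, since $q < q^{(1)*}$ and $s>0$.

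An affine function with strictly positive slope has exactly one root. Setting $\Delta(e;q)=0$ and solving gives
\[
  e_{=}(q) = \frac{s\bigl[\mathcal W_0(q^{(1)*}) - \mathcal W_0(q)\bigr]}{2\bigl(q^{(1)*}-q\bigr)},
\]
which is \eqref{eq:e_equalizer_app}. Uniqueness follows directly from the strict monotonicity.

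Because $\Delta$ is strictly increasing in $e$, we get $\Delta(e;q) \gtrless 0$ if and only if $e \gtrless e_{=}(q)$. This is the claimed equivalence $\mathcal W(q;e)\gtrless \mathcal W(q^{(1)*};e) \iff e\gtrless e_{=}(q)$, and it holds for every $e\ge 0$.

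The one point needing care is that "the unique damage level" should be nonnegative to be meaningful when the statement restricts to $e\ge0$. I will check this sign using Lemma~\ref{lem:welfare_concavity} at $e=0$: $\mathcal W_0$ is strictly concave with maximiser $q^{SO}(0)$. Comparing \eqref{eq:qSO} at $e=0$ with the Cournot condition at $T=0$, the difference $\beta q\phi'(q) - q <0$ (using $\beta\phi'\le 2\beta/s<1$) places the Cournot baseline below the first best, i.e.\ $q^{(1)*} < q^{SO}(0)$. Therefore $\mathcal W_0$ is strictly increasing on $[0,q^{(1)*}]$, so $\mathcal W_0(q^{(1)*})>\mathcal W_0(q)$ and $e_{=}(q)>0$.

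Even without this step, the equivalence holds verbatim for all real $e$, so the main argument is just the linearity in $e$.
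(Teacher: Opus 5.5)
Your proof is correct and matches the paper's: write the gap $\mathcal W(q;e)-\mathcal W(q^{(1)*};e)$ as an affine function of $e$ with slope $\frac{2}{s}(q^{(1)*}-q)>0$, solve for its unique root, and read off the sign. Your extra check that $e_{=}(q)>0$ (via $q^{(1)*}<q^{SO}(0)$) is not part of the paper's proof of this lemma, but it is valid, and it supplies a fact the paper later asserts without proof in Lemma~\ref{lem:welfare_crossing}(i).
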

\begin{proof}
  From \eqref{eq:W_sym}, $\mathcal W(q; e) = \mathcal W_0(q) - \frac{2e}{s}q$. Therefore:
  \begin{equation*}
    \mathcal{W}(q;e) - \mathcal{W}(q^{(1)*};e) = \bigl[\mathcal{W}_0(q) - \mathcal{W}_0(q^{(1)*})\bigr] + \frac{2e}{s}\bigl(q^{(1)*} - q\bigr),
  \end{equation*}
  which is affine in $e$ with positive slope since $q^{(1)*} > q$. Solving the equality case yields \eqref{eq:e_equalizer_app}, and the inequality statement follows.
\end{proof}

\begin{lemma}[Welfare crossing]\label{lem:welfare_crossing}
  Let $\sigma \in \{2,4\}$. Define the welfare difference
  $\Delta_{\sigma}(e):=\mathcal W^{(\sigma)*}(e)-\mathcal W^{(1)*}(e)$.
  Under $\beta<s/2$:
  \begin{enumerate}
    \item[(i)] for $\Delta_{\sigma}(0)<0$ indicating zero environmental damage, regulation reduces welfare relative to the baseline,
    \item[(ii)] $\Delta_{\sigma}(e)$ is strictly increasing in $e$,
    \item[(iii)] there exists a unique $\bar e_{\sigma 1}>0$ such that $\Delta_{\sigma}(\bar e_{\sigma 1})=0$.
  \end{enumerate}
\end{lemma}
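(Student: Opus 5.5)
I will work with the representation $\mathcal W(q;e)=\mathcal W_0(q)-\tfrac{2e}{s}q$ from \eqref{eq:W_decomp}. Since $T^{(1)}=0$, the baseline quantity $q^{(1)*}$ does not depend on $e$. The scenario-$\sigma$ quantity $q^{(\sigma)*}(e)$ for $\sigma\in\{2,4\}$ is the fixed point $H(q)=T^{(\sigma)}(q;e)$, equivalently $e=D_\sigma(q)$ in Lemma~\ref{lem:dq_maps_24_app}. I would therefore write
\[
\Delta_\sigma(e)=\mathcal W_0\bigl(q^{(\sigma)*}(e)\bigr)-\mathcal W_0\bigl(q^{(1)*}\bigr)+\tfrac{2e}{s}\bigl(q^{(1)*}-q^{(\sigma)*}(e)\bigr).
\]
The first step is to establish $q^{(\sigma)*}(e)<q^{(1)*}$ for all $e\ge0$. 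Corollary~\ref{cor:tax_rank} shows that $T^{(2)}(q)=2\tau_L^*(q)>0$, since $2\tau_L^*-\tau_G^*>0$ and $2\tau_L^*\ge e+2sq(1-2\beta/s)>0$. It also shows $T^{(4)}(q)>0$. Because $H$ is strictly decreasing under the stability condition \eqref{eq:stability} and $H(q^{(1)*})=0$, the fixed point satisfies $H(q^{(\sigma)*})>0$, hence $q^{(\sigma)*}<q^{(1)*}$.

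For part (i), I set $e=0$. The baseline Cournot quantity $q^{(1)*}$ lies below the zero-damage social optimum $q^{SO}(0)$. Comparing \eqref{eq:Gsym} at $T=0$ with \eqref{eq:qSO}, the Cournot condition carries the extra term $-q(1-\beta\phi'(q))<0$. Together with the strict concavity in Lemma~\ref{lem:welfare_concavity}, this means $\mathcal W_0$ is strictly increasing on $[0,q^{(1)*}]$. Since $q^{(\sigma)*}(0)<q^{(1)*}$, we get $\Delta_\sigma(0)<0$.

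For part (ii), I will differentiate with an envelope-type argument. Differentiating the expression above gives
\[
\Delta_\sigma'(e)=\mathcal W'\bigl(q^{(\sigma)*};e\bigr)\,\frac{dq^{(\sigma)*}}{de}+\tfrac{2}{s}\bigl(q^{(1)*}-q^{(\sigma)*}\bigr).
\]
The second term is strictly positive by the first step. The factor $dq^{(\sigma)*}/de=1/D_\sigma'(q)$ is negative: the taxes are weakly increasing in $q$ (as used in the proof of Proposition~\ref{prop:rank_pos}), $H$ is decreasing, and $T^{(\sigma)}$ is increasing in $e$, so $D_\sigma'<0$. I therefore need $\mathcal W'(q^{(\sigma)*};e)\ge 0$, meaning the regulated quantity lies weakly below $q^{SO}(e)$. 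When $\tau_G^{(3)*}\ge 0$, this follows from Proposition~\ref{prop:rank_pos}, which gives $q^{(\sigma)*}<q^{(3)*}=q^{SO}$. When $\tau_G^{(3)*}<0$, it follows from Proposition~\ref{prop:subsidy_regime}, which gives $q^{(\sigma)*}<q^{(1)*}<q^{(3)*}$. In both cases the first term is nonnegative, so $\Delta_\sigma$ is strictly increasing.

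For part (iii), continuity of $\Delta_\sigma$ (via the implicit function theorem on $e=D_\sigma(q)$) together with $\Delta_\sigma(0)<0$ and strict monotonicity gives uniqueness. For existence I need some $e$ with $\Delta_\sigma(e)>0$. Here I would use Lemma~\ref{lem:deq_baseline_app}: $\Delta_\sigma(e)>0$ iff $e>e_=(q^{(\sigma)*}(e))$. As $e$ grows, $q^{(\sigma)*}(e)$ decreases toward $0$, possibly with a corner at $0$, so $e_=(q^{(\sigma)*}(e))$ stays bounded by roughly $s\,\mathcal W_0'(0)/2$-type quantities. Since $e$ itself grows without bound, the inequality eventually holds. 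This existence step is the main obstacle: it requires controlling the boundary behavior as $q\to0$ (the assumption $\alpha>c_p+(c_f+e)/s$ must be handled or relaxed), and I would bound $e_=(q)$ uniformly by $s\max_{[0,q^{(1)*}]}\mathcal W_0'/2$ using the mean value theorem.
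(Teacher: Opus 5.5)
Your first step ($q^{(\sigma)*}(e)<q^{(1)*}$) and part (i) are sound and follow the paper. Your comparison of the Cournot condition at $T=0$ with the first-best condition even supplies the justification of $q^{(1)*}<q^{SO}(0)$, which the paper only asserts.

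\textbf{The gap is in part (ii), and it is a sign error.} You correctly show $dq^{(\sigma)*}/de<0$. You also correctly show $\mathcal W'(q^{(\sigma)*};e)>0$, since $q^{(\sigma)*}<q^{SO}(e)$ in both regimes. But the product of these two is strictly \emph{negative}: raising $e$ pushes regulated output further below the optimum, and that costs welfare. For the first term to be nonnegative you would need $q^{(\sigma)*}\ge q^{SO}(e)$, which is the opposite of what you proved. Monotonicity therefore depends on the magnitude comparison
\[
\frac{2}{s}\bigl(q^{(1)*}-q^{(\sigma)*}\bigr) > \mathcal W'\bigl(q^{(\sigma)*};e\bigr)\,\Bigl|\frac{dq^{(\sigma)*}}{de}\Bigr| ,
\]
which your argument never addresses.

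This comparison fails for $\sigma=4$. Let $\beta\to0$ and set $A:=\alpha-c_p-c_f/s$. Then $q^{(1)*}=A/3$, $q^{(4)*}(e)=\tfrac14(A-2e/s)$, $\mathcal W'(q^{(4)*};e)=A$ and $dq^{(4)*}/de=-1/(2s)$. Your two terms then sum to
\[
\Delta_4'(e)=-\frac{A}{2s}+\frac{2}{s}\Bigl(\frac{A}{12}+\frac{e}{2s}\Bigr)=\frac{1}{s}\Bigl(\frac{e}{s}-\frac{A}{3}\Bigr)<0\qquad\text{for }0\le e<\frac{sA}{3},
\]
equivalently $\Delta_4(e)=-\tfrac{1}{72}(A+6e/s)(5A-6e/s)$. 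At $\beta=0$ this decreasing range is exactly the subsidy regime $e<\bar e_G=sA/3$. By continuity the same holds for small $\beta>0$ with all hypotheses satisfied. So (ii) is false for $\sigma=4$ as stated, and no repair of this step can prove it.

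Following the paper will not rescue the step. Its proof of (ii) argues that $g_e(q)$ rises in $e$ at fixed $q$ while the evaluation point $q^{(\sigma)*}(e)$ moves left. It never accounts for the fact that moving left along the increasing part of $\mathcal W(\cdot;e)$ lowers $g_e$, which is precisely your negative term. For $\sigma=2$ the required inequality does hold in the same limit, $\Delta_2'(e)=\tfrac{1}{75s}(2A+48e/s)>0$. That is a quantitative fact you would have to establish for all $\beta<s/2$; it does not follow from a sign argument.

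Your uniqueness claim in (iii) rests on (ii) and falls with it for $\sigma=4$; you would need a separate single-crossing argument. Your existence argument via the mean-value bound $e_{=}(q)\le\tfrac{s}{2}\mathcal W_0'(0)$ is fine modulo corners, and it is essentially the paper's ``bounded term plus linear term'' argument. Your concern about the boundary is well founded. In the example above, $\Delta_4<0$ on the whole interior range $e\le sA/2$ where $q^{(4)*}\ge0$. So for $\sigma=4$ a crossing can only occur at a corner $q^{(4)*}=0$, outside the interior fixed-point framework.
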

\begin{proof} From Appendix~\ref{app:rank_pos}, the symmetric Cournot condition is defined as $T=H(q)$ with $H$ strictly decreasing under Lemma~\ref{lem:cournot}'s stability condition. In Scenario~1, $T^{(1)}=0$ by definition, hence
  \begin{equation*}
    H\!\big(q^{(1)*}\big)=0.
  \end{equation*}
  In Scenario~2, under symmetry, the total charge is $T^{(2)}(q;e)=2\tau_L^*(q)$ with:
  \begin{equation*}
    2\tau_L^*(q) = e + 2sq - \frac{2\beta s^3 q}{(s+q)^3} \geq 2sq\Bigl(1 - \frac{\beta}{s}\Bigr) > 0, \quad \bigl(\text{since } (s+q)^3 \geq s^3 \text{ and } \beta < s/2\bigr).
  \end{equation*}
  so the Scenario~2 fixed point satisfies $T^{(2)*}(e)=H\!\big(q^{(2)*}(e)\big)>0$.
  In Scenario~4, Corollary \ref{cor:tax_rank} yields $T^{(4)}(q;e)=2\tau_L^*(q)+\tau_G^*(q)>0$ for all $q>0$, hence
  $T^{(4)*}(e)=H\!\big(q^{(4)*}(e)\big)>0$.
  Since $H$ is strictly decreasing and $H(q^{(1)*})=0$, it follows that:
  \[
    q^{(\sigma)*}(e)<q^{(1)*}\qquad\forall\,e\ge 0,\ \sigma \in \{2,4\}.
  \]

  \smallskip
  \noindent(i) Since $q^{(\sigma)*}(0)<q^{(1)*}$ and $\mathcal W_0(\cdot)$ is strictly concave with maximizer
  $q^{SO}(0)>q^{(1)*}$ (Lemma~\ref{lem:welfare_concavity} at $e=0$), then $\mathcal W_0$ is strictly increasing on
  $[0,q^{SO}(0)]$ hence:
  \[
    \mathcal W_0\!\big(q^{(\sigma)*}(0)\big)<\mathcal W_0\!\big(q^{(1)*}\big).
  \]
  Since $\mathcal W(q;0)=\mathcal W_0(q)$, this implies $\Delta_{\sigma}(0)<0$.

  \smallskip
  \noindent(ii) For each $e \ge 0$, define: \[g_e(q):=\mathcal W(q; e)-\mathcal W\!\big(q^{(1)*};e\big).\]
  Lemma~\ref{lem:deq_baseline_app} implies that for any fixed $q<q^{(1)*}$, the quantity $g_e(q)$ is strictly increasing in $e$. Since $q^{(\sigma)*}(e)<q^{(1)*}$ for all $e \ge 0$ and Lemma~\ref{lem:dq_maps_24_app} identifies $e=D_{\sigma}(q)$, where $D_{\sigma}$ is strictly decreasing in $q$ under $\beta<s/2$, then $q^{(\sigma)*}(e)=D_{\sigma}^{-1}(e)$ is strictly decreasing in $e$. Consequently, as $e$ increases, the evaluation point $q^{(\sigma)*}(e)$ shifts weakly to the left, while for every admissible $q<q^{(1)*}$ the value $g_e(q)$ shifts strictly upward. Therefore:
  \[
    \Delta_{\sigma}(e)=g_e\!\big(q^{(\sigma)*}(e)\big)
  \]
  is strictly increasing in $e$.

  \smallskip
  \noindent (iii) From (i), $\Delta_{\sigma}(0)<0$. From (ii), $\Delta_{\sigma}$ is strictly increasing and continuous in $e$.
  Moreover, $\mathcal W(q; e)=\mathcal W_0(q)-\frac{2e}{s}q$ under symmetry hence:
  \begin{equation*}
    \Delta_{\sigma}(e) = \bigl[\mathcal{W}_0(q^{(\sigma)*}(e)) - \mathcal{W}_0(q^{(1)*})\bigr] + \frac{2e}{s}\bigl[q^{(1)*} - q^{(\sigma)*}(e)\bigr].
  \end{equation*}
  $q^{(1)*}-q^{(\sigma)*}(e)\ge q^{(1)*}-q^{(\sigma)*}(0)>0$, hence the linear term in $e$ implies
  $\lim_{e\to\infty}\Delta_{\sigma}(e)=+\infty$. The intermediate value theorem ensures a unique (by strict monotonicity) $\bar e_{\sigma 1}>0$ such that $\Delta_{\sigma}(\bar e_{\sigma 1})=0$.

\end{proof}

\begin{proposition}[Crossing condition]\label{prop:crossing_system_app}
  Let $\sigma \in \{2,4\}$. Any welfare crossing $e = \bar e_{\sigma 1}$ between Scenario~$\sigma$ and Scenario~1 occurs at some $q_{\sigma 1} \in (0, q^{(1)*})$ satisfying the system
  \begin{equation}
    \bar e_{\sigma 1} = e_{=}(q_{\sigma 1}) = D_{\sigma}(q_{\sigma 1}),
    \label{eq:crossing_system_app}
  \end{equation}
  where $e_{=}(q)$ is given by \eqref{eq:e_equalizer_app} and $D_{\sigma}(q)$ by \eqref{eq:D2_app} or \eqref{eq:D4_app}.
\end{proposition}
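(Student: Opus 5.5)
The plan is to show that the crossing point $\bar e_{\sigma 1}$ from Lemma~\ref{lem:welfare_crossing} satisfies two separate conditions. One is an \emph{equilibrium} condition linking the damage level to the induced quantity. The other is a \emph{welfare-equality} condition. Each gives one of the two identities in \eqref{eq:crossing_system_app}. Fix $\sigma\in\{2,4\}$ and let $e=\bar e_{\sigma 1}$, which exists and is unique by Lemma~\ref{lem:welfare_crossing}(iii). Set $q_{\sigma 1}:=q^{(\sigma)*}(\bar e_{\sigma 1})$.

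\textbf{Step 1: locate $q_{\sigma 1}$ in $(0,q^{(1)*})$.} The proof of Lemma~\ref{lem:welfare_crossing} shows that $T^{(\sigma)*}(e)=H(q^{(\sigma)*}(e))>0=H(q^{(1)*})$ for every $e\ge 0$. Since $H$ is strictly decreasing (from \eqref{eq:H_decreasing_app}), this gives $q_{\sigma 1}<q^{(1)*}$. Positivity $q_{\sigma 1}>0$ is part of the standing assumption that the symmetric equilibrium is interior; a crossing with $q=0$ would make the scenario degenerate. I would state this interiority explicitly rather than derive it.

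\textbf{Step 2: the equilibrium identity.} At the fixed point of scenario $\sigma$, the total instrument equals $H(q_{\sigma 1})$ and also equals the regulators' best-response sum evaluated at $q_{\sigma 1}$. For $\sigma=2$ this sum is $2\tau_L^*$; for $\sigma=4$ it is $2\tau_L^*+\tau_G^*$, using Proposition~\ref{prop:best_responses} and Corollary~\ref{cor:tax_rank}. Lemma~\ref{lem:dq_maps_24_app} already rearranges exactly this equality into $e=D_\sigma(q)$. Evaluating at $e=\bar e_{\sigma 1}$ therefore gives $\bar e_{\sigma 1}=D_\sigma(q_{\sigma 1})$.

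\textbf{Step 3: the welfare identity.} By definition of the crossing, $\Delta_\sigma(\bar e_{\sigma 1})=0$. This means $\mathcal W(q_{\sigma 1};\bar e_{\sigma 1})=\mathcal W(q^{(1)*};\bar e_{\sigma 1})$. Note that $q^{(1)*}$ does not depend on $e$, because $T^{(1)}=0$. Since $q_{\sigma 1}\in(0,q^{(1)*})$, Lemma~\ref{lem:deq_baseline_app} applies. Because the welfare difference is affine in $e$ with strictly positive slope $\tfrac{2}{s}(q^{(1)*}-q_{\sigma 1})$, the equalizing damage level is unique. Hence $\bar e_{\sigma 1}=e_{=}(q_{\sigma 1})$, and combining with Step~2 yields \eqref{eq:crossing_system_app}.

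The main obstacle is conceptual rather than computational. The $e$ appearing in $D_\sigma$ is the same $e$ that enters the regulators' best responses, so the fixed point $q^{(\sigma)*}(e)$ moves with $e$. The welfare equalizer $e_=(q)$, by contrast, treats $q$ as fixed. I need to ensure that evaluating both at the same pair $(\bar e_{\sigma 1},q_{\sigma 1})$ is legitimate. This holds because $q_{\sigma 1}$ is defined as the equilibrium at $\bar e_{\sigma 1}$, and Lemma~\ref{lem:deq_baseline_app} holds pointwise for every $q<q^{(1)*}$. I would also note, although the statement does not require it, that uniqueness of the solution to the system follows from $D_\sigma$ being strictly decreasing (as used in Lemma~\ref{lem:welfare_crossing}(ii)) together with uniqueness of $\bar e_{\sigma 1}$.
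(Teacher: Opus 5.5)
Your proposal is correct and follows essentially the same route as the paper: set $q_{\sigma 1}:=q^{(\sigma)*}(\bar e_{\sigma 1})$, then read off $\bar e_{\sigma 1}=D_\sigma(q_{\sigma 1})$ from Lemma~\ref{lem:dq_maps_24_app} and $\bar e_{\sigma 1}=e_{=}(q_{\sigma 1})$ from Lemma~\ref{lem:deq_baseline_app}. Your Step~1 shows $q_{\sigma 1}<q^{(1)*}$, which is needed for Lemma~\ref{lem:deq_baseline_app} to apply; the paper's proof leaves this implicit, and it is established only inside the proof of Lemma~\ref{lem:welfare_crossing}, so your version is slightly more complete rather than a different argument.
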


\begin{proof}
  At the crossing $e = \bar e_{\sigma 1}$, let $q_{\sigma 1} := q^{(\sigma)*}(\bar e_{\sigma 1})$ denote the equilibrium quantity in Scenario~$\sigma$. The welfare equality $\mathcal W(q_{\sigma 1}; \bar e_{\sigma 1}) = \mathcal W(q^{(1)*}; \bar e_{\sigma 1})$ is equivalent to $\bar e_{\sigma 1} = e_{=}(q_{\sigma 1})$ by Lemma~\ref{lem:deq_baseline_app}. The equilibrium condition requires $\bar e_{\sigma 1} = D_{\sigma}(q_{\sigma 1})$ by Lemma~\ref{lem:dq_maps_24_app}. Combining yields \eqref{eq:crossing_system_app}.
\end{proof}

\subsection{Proof of Lemma~\ref{lem:e_threshold_neg} and Proposition~\ref{prop:subsidy_regime}}
\label{app:subsidy_regime}

\begin{proof}[Proof of Lemma~\ref{lem:e_threshold_neg}.]

  Let $q^{SO}(e)$ denote the unique maximum of $\mathcal W(\cdot; e)$ (Lemma~\ref{lem:welfare_concavity}). It satisfies:
  \[
    p(q) = \alpha + \beta \phi(q) - 2q = c_p + \frac{c_f + e}{s}.
  \]
  Since $p'(q) = \beta \phi'(q) - 2 < 0$ under $\beta < s/2$ and Lemma~\ref{lem:phi_app}, the implicit-function theorem implies:
  \[
    \frac{d q^{SO}(e)}{d e} = \frac{(1/s)}{p'(q^{SO}(e))} < 0,
  \]
  so $q^{SO}(e)$ is strictly decreasing in $e$.

  Define $\bar e_G := s q^{(1)*}[1 - \beta \phi'(q^{(1)*})]$. At the baseline equilibrium, the markup identity \eqref{eq:markup_app} with $T = 0$ gives:
  \[
    p^{(1)*} - \Big(c_p + \frac{c_f}{s}\Big) = q^{(1)*} \big[1 - \beta \phi'(q^{(1)*}) \big] = \frac{\bar e_G}{s}.
  \]
  Therefore:
  \[
    c_p + \frac{c_f + \bar e_G}{s} = c_p + \frac{c_f}{s} + \frac{\bar e_G}{s} = p^{(1)*}.
  \]
  However the first-best condition is $p \big(q^{SO}(\bar e_G) \big) = c_p + \frac{c_f + \bar e_G}{s} = p^{(1)*}$. Since $p(\cdot)$ is strictly decreasing, this implies $q^{SO}(\bar e_G) = q^{(1)*}$.

  Scenario~3 implements $q^{SO}(e)$ (Proposition~\ref{prop:scenario3_SO}), so $q^{(3)*}(e) = q^{SO}(e)$ hence $q^{(3)*}(e) \gtrless q^{(1)*}$ iff $e \lessgtr \bar e_G$.

  Finally, at any symmetric output $q$, the global best-response charge is $\tau_G^*(q) = e - sq[1 - \beta \phi'(q)]$. Evaluating at the Scenario~3 fixed point $q^{(3)*}(e)$ yields
  \[
    \tau_G^{(3)*}(e) = e - sq^{(3)*}(e) \Big[1 - \beta \phi'(q^{(3)*}(e))\Big].
  \]
  At $e = \bar e_G$, $q^{(3)*}(\bar e_G) = q^{(1)*}$ so $\tau_G^{(3)*}(\bar e_G) = 0$. Define $h(q) := sq[1 - \beta\phi'(q)]$. Since $\phi'' < 0$ (Lemma~\ref{lem:phi_app}) and $1 - \beta\phi'(q) > 0$ under $\beta < s/2$, we have $h'(q) > 0$, so $h$ is strictly increasing. For $e < \bar e_G$, $q^{(3)*}(e) > q^{(1)*}$, hence $\tau_G^{(3)*}(e) = e - h(q^{(3)*}(e)) < \bar e_G - h(q^{(1)*}) = 0.$ Similarly $e > \bar e_G$ implies $\tau_G^{(3)*}(e) > 0$.
\end{proof}

\begin{proof}[Proof of Proposition~\ref{prop:subsidy_regime}.]

  Under $e < \bar e_G$, Lemma~\ref{lem:e_threshold_neg} implies $e < \bar e_G \Rightarrow \tau_G^{(3)*}(e) < 0$, hence $T^{(3)*}(e) < 0$. Scenario~1 has $T^{(1)} = 0$. In Scenario~2, $T^{(2)*}(e) = 2 \tau_L^*(q^{(2)*})$ and $\tau_L^*(q) > 0$ for $e \geq 0$ and $q > 0$ from \eqref{eq:tauL} (at $e = 0$, $\tau_L^*(q) = sq[1 - \beta s^2/(s+q)^3] > 0$ since $\beta < s/2$), hence $T^{(2)*}(e) > 0$. In Scenario~4, $T^{(4)*}(e) > 0$ for all $q > 0$ by Corollary~\ref{cor:tax_rank}.

  Since symmetric Cournot quantity decreases in the scalar instrument (Lemma~\ref{lem:cournot}), we obtain:
  \[
    q^{(2)*}(e) < q^{(1)*} < q^{(3)*}(e),
    \qquad
    q^{(4)*}(e) < q^{(1)*} < q^{(3)*}(e),
  \]
  and the reverse ordering for prices.

  For welfare, Scenario~3 implements the first-best and strictly dominates all other regimes. Since $q^{(1)*}$ lies to the left of the maximum $q^{(3)*}(e)$ when $e < \bar e_G$ (Lemma~\ref{lem:e_threshold_neg}), strict concavity implies $\mathcal W(\cdot; e)$ is increasing on $[0, q^{(3)*}(e)]$. Therefore $q^{(2)*}(e), q^{(4)*}(e) < q^{(1)*}$ implies:
  \begin{equation*}
    \mathcal{W}^{(1)*}(e) = \mathcal{W}(q^{(1)*}; e)
    > \mathcal{W}(q^{(2)*}(e); e)
    = \mathcal{W}^{(2)*}(e),
    \qquad
    \mathcal{W}^{(1)*}(e) > \mathcal{W}^{(4)*}(e).
  \end{equation*}
  which yields \eqref{eq:W_rank_neg}.

  Finally, because $\mathcal W(\cdot; e)$ is increasing on the relevant interval, $\mathcal W^{(4)*}(e) \gtrless \mathcal W^{(2)*}(e)$ iff $q^{(4)*}(e) \gtrless q^{(2)*}(e)$. Moreover, using $T = H(q)$ and $T^{(4)}(q) = T^{(2)}(q) + \tau_G^*(q)$ obtains:
  \begin{equation*}
    q^{(4)*}(e) \gtrless q^{(2)*}(e)
    \iff T^{(4)}(q^{(2)*}(e)) \lessgtr T^{(2)}(q^{(2)*}(e))
    \iff \tau_G^*(q^{(2)*}(e)) \lessgtr 0,
  \end{equation*}
\end{proof}

\section{Algorithm}
\label{app:algorithm}

This appendix presents the pseudocode of the algorithm used to solve the two-stage game and identify the corresponding subgame-perfect Nash equilibria.

\begin{algorithm}[htbp]
  \caption{Find the Sub-game Perfect Nash Equilibria (SPNE) of $\Gamma$}
  \label{alg:algorithm}
  \resizebox{1\textwidth}{!}{%
    \begin{minipage}{1\textwidth}
      \footnotesize
      \begin{algorithmic}[1]
        \setlength{\itemsep}{-1pt}
        \setlength{\parsep}{0pt}
        \setlength{\topsep}{0pt}
        \setlength{\partopsep}{0pt}

        \State \textbf{Start}
        \State Initialize regulators' and airlines' decision variables and networks.
        \State Set grid radius $\rho_1>0$ and convergence thresholds $threshold_1, threshold_2$.
        \State Define a set of starting points $\mathcal{S}$ and a set of airline update orders $\mathcal{O}$.
        \State $solution_1^0 \leftarrow [0,\dots,0]$ of length $|\mathcal{R}|$.
        \State $solution_1 \leftarrow$ initial first-stage decisions, one entry per regulator.

        \While{$\lVert solution_1 - solution_1^0 \rVert > threshold_1$}
        \State $solution_1^0 \leftarrow solution_1$.

        \For{each regulator $r \in \mathcal{R}$}
        \State Create a point grid $\mathcal{G}_r$ around $solution_1[r]$ with radius $\rho_1$.

        \For{each point $g \in \mathcal{G}_r$}
        \State Fix regulator $r$ at $g$ and all other regulators at their current values in $solution_1$.
        \State Reset airlines' decision variables and networks for this point $g$.
        \State Initialize an empty set $\mathcal{E}(g)$ to store second-stage outcomes.

        \For{each $(s,o) \in \mathcal{S}\times\mathcal{O}$}
        \State Initialize airlines' decision variables using starting point $s$.
        \State Set the airline update order according to $o$.
        \State $solution_2^0 \leftarrow [0,\dots,0]$ of length $|\mathcal{A}|$.
        \State $solution_2 \leftarrow$ empty vector of length $|\mathcal{A}|$.

        \For{each airline $a \in \mathcal{A}$ following order $o$}
        \State Solve mathematical program using \textit{IPOPT} and append to $solution_2$.
        \EndFor

        \While{$\lVert solution_2 - solution_2^0 \rVert > threshold_2$}
        \State $solution_2^0 \leftarrow solution_2$.

        \For{each airline $a \in \mathcal{A}$ following order $o$}
        \State Solve mathematical program using \textit{IPOPT} and update entry in $solution_2$.
        \EndFor
        \EndWhile

        \State Store $solution_2$ in $\mathcal{E}(g)$.
        \EndFor

        \State Evaluate welfare $W_r(g)$ under outcomes in $\mathcal{E}(g)$.
        \EndFor

        \State Select $g^\star \in \mathcal{G}_r$ that maximizes $W_r(g)$.
        \State Update $solution_1[r] \leftarrow g^\star$.
        \EndFor

        \State Shrink the grid radius $\rho_1 \leftarrow \alpha \rho_1$ with $0<\alpha<1$.
        \EndWhile

        \State \Return first-stage decisions $solution_1$ and corresponding second-stage outcomes.
      \end{algorithmic}
    \end{minipage}%
  }
\end{algorithm}
\FloatBarrier

\bibliography{references}

\end{document}

%% file: Notation.tex
\begin{tabular}{rl}
    \hline
    \multicolumn{2}{l}{\textbf{Sets and Indices}}                                                                           \\
    $\mathcal{A}$          & Set of airlines; indexed by $a$                                                                \\
    $\mathcal{H}$          & Set of the type of flight (i.e. short and long haul flight); indexed by $h$                    \\
    $\mathcal{K}$          & Set of all legs in the network according to the flight type $h$; indexed by $k^{h}$            \\
    $\mathcal{N}$          & Set of airports nodes; indexed by $i$, $j$                                                     \\
    $\mathcal{R}$          & Set of regulators; indexed by $r$                                                              \\
    $\mathcal{T}$          & Set of passenger types; indexed by $t$                                                         \\
    $\mathcal{V}$          & Set of aircraft versions; indexed by $v$                                                       \\
    \multicolumn{2}{l}{\textbf{Parameters}}                                                                                 \\
    $\beta_{\cdot t}$      & parameter on the corresponding characteristic in the utility function for passenger type $t$   \\
    $c_{ka}$               & Cost for airline $a$ to serve leg $k$                                                          \\
    $d_{ijt}$              & Demand between nodes $i$ and $j$ for passenger type $t$                                        \\
    $\delta_{ija}$         & 1 if the connection between $i$ and $j$ operated by $a$ is direct, 0 otherwise                 \\
    $e$                    & Social cost of carbon emissions                                                                \\
    $\epsilon_{ijta}$      & Random component of utility between nodes $i$ and $j$ for passenger type $t$ and airline $a$   \\
    $\varepsilon_{kv}$     & Tons of $CO_2$ burnt per km on leg $k$ per aircraft version $v$                                \\
    $\bar{f}_h$            & Average utilization of aircraft in the time period by type of flight $h$                       \\
    $g_{k}$                & Great circle distance of leg $k$                                                               \\
    $s_{kv}$               & Seats available on flights served by $a$ on leg $k$ per aircraft version $v$                   \\
    $\phi_{hv}$            & Fuel burnt per kilometer by aircraft version $v$ type $h$                                      \\
    $\bar{p}$              & Fuel price in dollars per ton of kerosene                                                      \\
    $o_{hv}$               & Ownership cost for aircraft type $h$ version $v$                                               \\
    $co_2$                 & Conversion factor between fuel consumption and $co_2$ produced                                 \\
    $\chi$                 & Share of operating costs without fuel and ownership costs                                      \\
    $\lambda_{hv}$         & Salvage value of an aircraft of type $h$ version $v$ at the end of the time period             \\
    $\ell$                 & Interest rate                                                                                  \\
    $n$                    & Time periods                                                                                   \\
    $\Psi_{hv}$            & Initial purchase price of an aircraft of type $h$ version $v$                                  \\
    $\rho_t$               & Heterogeneity between nests in the NL model for passenger type $t$                             \\
    \multicolumn{2}{l}{\textbf{Decision variables}}                                                                         \\
    $\theta_{r}$           & Carbon tax imposed by regulator $r$                                                            \\
    $f_{ka}$               & Service frequency on leg $k$ for airline $a$                                                   \\
    $p_{ijta}$             & Fare set for itinerary from $i$ to $j$ and passenger type $t$  for airline $a$                 \\
    $\Tilde{\alpha}_{hva}$ & Number of aircraft owned of type $h$ version $v$    for airline $a$                            \\
    \multicolumn{2}{l}{\textbf{Auxiliary variables}}                                                                        \\
    $m_{ijta}$             & Market share of airline $a$ for itinerary from $i$ to $j$ and passenger type $t$               \\
    $V_{ijta}$             & Systematic component of utility between nodes $i$ to $j$ per passenger type $t$ of airline $a$ \\
    $z_{ija}$              & Minimum frequency over an indirect itinerary from $i$ to $j$ for airline $a$                   \\
    \hline
\end{tabular}

%% file: aircraft_costs.tex
{
    \begin{threeparttable}
        \setlength{\tabcolsep}{6pt}
        \renewcommand{\arraystretch}{1.15}
        \footnotesize
        \begin{tabular}{
            ll
            S[table-format=3.0]
            S[table-format=3.2]
            S[table-format=1.3]
            S[table-format=3.0]
            S[table-format=3.0]
        }
            \toprule
            \multicolumn{2}{c}{Aircraft type} &
            \multicolumn{1}{c}{Purchase} &
            \multicolumn{1}{c}{Salvage} &
            \multicolumn{1}{c}{Fuel} &
            \multicolumn{1}{c}{Seats} &
            \multicolumn{1}{c}{Seats} \\
            \multicolumn{2}{c}{} &
            \multicolumn{1}{c}{(\$M)} &
            \multicolumn{1}{c}{(\$M)} &
            \multicolumn{1}{c}{(ton/km)} &
            \multicolumn{1}{c}{Legacy} &
            \multicolumn{1}{c}{LCC} \\
            \midrule
            \multirow{2}{*}{Short-haul} & A320ceo & 100 & 33.33  & 0.004 & 151 & 187 \\
                       & A320neo & 110 & 36.30  & 0.003 & 151 & 187 \\
            \addlinespace
            \multirow{2}{*}{Long-haul}  & B777    & 345 & 113.85 & 0.008 & 400 & \multicolumn{1}{c}{--} \\
                       & A350    & 370 & 122.10 & 0.007 & 400 & \multicolumn{1}{c}{--} \\
            \bottomrule
        \end{tabular}
    \end{threeparttable}
    }

%% file: real_world.tex
\begin{threeparttable}
\setlength{\tabcolsep}{6pt}
\small
\begin{tabular}{lrrrr}
\toprule
 & European legacies & European LCCs & American legacies & American LCCs \\
\midrule
\addlinespace[2pt]
\multicolumn{5}{l}{\emph{Passengers (millions)}}\\
Passengers, short-haul economy  & 135.56 (126.98) & 282.57 (427.86) & 478.76 (326.61) & 212.21 (165.31) \\
Passengers, short-haul business & 11.44 (15.44) & \textemdash & 12.58 (29.74) & \textemdash \\
Passengers, long-haul economy   & 28.60 (40.61) & \textemdash & 65.78 (57.20) & \textemdash \\
Passengers, long-haul business  & 2.29 (9.15) & \textemdash & 5.72 (5.15) & \textemdash \\
\addlinespace[4pt]
\multicolumn{5}{l}{\emph{Unit costs and revenues (\euro{} cents)}}\\
Average CASK                    & 5 (7)       & 3 (4)       & 5 (7)       & 3 (5)       \\
Average RASK                    & 14 (8)      & 8 (5)       & 13 (8)      & 7 (6)       \\
\addlinespace[4pt]
\multicolumn{5}{l}{\emph{Average fares (\euro{})}}\\
Short-haul economy              & 182 (152)   & 144 (82)    & 237 (204)   & 218 (174)   \\
Short-haul business             & 862 (544)   & \textemdash & 3{,}001 (814) & \textemdash \\
Long-haul economy               & 491 (258)   & \textemdash & 579 (228)   & \textemdash \\
Long-haul business              & 8{,}393 (3{,}500) & \textemdash & 8{,}933 (3{,}331) & \textemdash \\
\addlinespace[4pt]
\multicolumn{5}{l}{\emph{Frequencies (thousands of flights)}}\\
Total frequency, short-haul     & 1{,}345.34 (1{,}594.74) & 2{,}500.78 (3{,}081.94) & 4{,}138.42 (5{,}044.47) & 1{,}431.14 (2{,}218.22) \\
Total frequency, long-haul      & 105.25 (187.04)   & \textemdash    & 232.23 (331.19)   & \textemdash \\
\bottomrule
\end{tabular}
\begin{tablenotes}[flushleft]\footnotesize
\item \emph{Notes:} Values in brackets correspond to real-world data. Discrepancies arise from the omission of connecting passengers when using a subset of the complete network and from the challenge of reproducing, within a single model and representative network, the differing cost structures of regular and low-cost carriers.
\end{tablenotes}
\end{threeparttable}

%% file: baseline_tab.tex
\robustify\bfseries
\sisetup{
  detect-weight=true,
  detect-family=true,
  retain-explicit-plus
}
\begin{threeparttable}
\setlength{\tabcolsep}{5pt}
\small
\begin{tabular}{
  l
  l
  S[table-format=+3.0]
  S[table-format=+3.2]
  S[table-format=+3.2]
  S[table-format=+3.2]
  S[table-format=+3.2]
  S[table-format=+3.2]
}
\toprule
 & & {\textbf{Charges}} & \multicolumn{5}{c}{\textbf{Welfare outcomes (\,\euro~B)}} \\
\cmidrule(lr){3-3}\cmidrule(lr){4-8}
\textbf{Scenario} & \textbf{Region} & {(\euro/ton)} & {Welfare} & {Emissions} & {Producers} & {Consumers} & {Government} \\
\midrule

\multirow{2}{*}{Baseline}
  & Europe   & 0  & 155.27 & -46.33 & 56.63  & 144.72 & 0.00 \\
  & America  & 0  & 300.87 & -97.81 & 140.71 & 257.97 & 0.00 \\
\cmidrule(l){2-8}
  & \textbf{Total ($\mathcal{W}$)} &  & \bfseries 456.14 & \bfseries -144.14 & \bfseries 197.34 & \bfseries 402.69 & \bfseries 0.00 \\
\bottomrule
\end{tabular}
\end{threeparttable}

%% file: Airlines_results.tex
\begin{threeparttable}
  \setlength{\tabcolsep}{6pt}
  \small
  \begin{tabular}{
      ll
      S[table-format=3.0]  
      S[table-format=2.0]  
      S[table-format=3.0]  
      S[table-format=4.0]  
      S[table-format=4.0]  
      S[table-format=4.0]  
      S[table-format=4.0]  
      S[table-format=4.0]  
    }
    \toprule
             &                & \multicolumn{2}{c}{Passengers} & \multicolumn{2}{c}{Prices}  & \multicolumn{2}{c}{Frequency}          & \multicolumn{2}{c}{Fleet}                                      \\
             &                & \multicolumn{2}{c}{(million)}  & \multicolumn{2}{c}{(\euro)} & \multicolumn{2}{c}{(thousand flights)} & \multicolumn{2}{c}{(aircraft)}                                 \\
    \cmidrule(lr){3-4}\cmidrule(lr){5-6}\cmidrule(lr){7-8}\cmidrule(lr){9-10}
    Scenario & Region         & {Economy}                      & {Business}                  & {Economy}                              & {Business}                     & {New} & {Old} & {New} & {Old} \\
    \midrule

    \multirow{3}{*}{Baseline}
             & America        & 691                            & 13                          & 226                                    & 2,986                          & 0     & 5570  & 0     & 7651  \\
             & Europe         & 418                            & 11                          & 156                                    & 867                            & 0     & 3846  & 0     & 2603  \\
             & Trans-Atlantic & 94                             & 7                           & 554                                    & 8,765                          & 0     & 338   & 0     & 930   \\
    \addlinespace

    \multirow{3}{*}{Local regulators}
             & America        & 665                            & 13                          & 233                                    & 2,990                          & 1742  & 3605  & 2388  & 4948  \\
             & Europe         & 395                            & 11                          & 160                                    & 871                            & 2738  & 665   & 1730  & 615   \\
             & Trans-Atlantic & 87                             & 7                           & 584                                    & 8,793                          & 76    & 221   & 215   & 601   \\
    \addlinespace

    \multirow{3}{*}{\shortstack[l]{Uniform \\ global regulation}}
             & America        & 637                            & 12                          & 241                                    & 2,996                          & 4048  & 1073  & 5563  & 1473  \\
             & Europe         & 391                            & 11                          & 160                                    & 872                            & 2702  & 631   & 1730  & 572   \\
             & Trans-Atlantic & 83                             & 7                           & 603                                    & 8,809                          & 244   & 37    & 672   & 100   \\
    \addlinespace

    \multirow{3}{*}{\shortstack[l]{Differentiated\\ global regulation}}
             & America        & 637                            & 12                          & 241                                    & 2,996                          & 4048  & 1073  & 5563  & 1473  \\
             & Europe         & 395                            & 11                          & 160                                    & 871                            & 2737  & 665   & 1730  & 615   \\
             & Trans-Atlantic & 84                             & 7                           & 599                                    & 8,806                          & 199   & 86    & 544   & 229   \\
    \addlinespace

    \multirow{3}{*}{Overlapping}
             & America        & 652                            & 12                          & 237                                    & 2,993                          & 2925  & 2318  & 4018  & 3189  \\
             & Europe         & 394                            & 11                          & 160                                    & 871                            & 2723  & 663   & 1730  & 601   \\
             & Trans-Atlantic & 85                             & 7                           & 591                                    & 8,799                          & 117   & 173   & 329   & 472   \\
    \bottomrule
  \end{tabular}
\end{threeparttable}

%% file: REGs_results.tex
\robustify\bfseries
\sisetup{
  detect-weight=true,
  detect-family=true,
  retain-explicit-plus
}
\begin{threeparttable}
\setlength{\tabcolsep}{5pt}
\small
\begin{tabular}{
  l
  l
  S[table-format=+3.0] 
  S[table-format=+2.2] 
  S[table-format=+2.2] 
  S[table-format=+2.2] 
  S[table-format=+2.2] 
  S[table-format=+2.2] 
}
\toprule
 & & {\textbf{Charges}} & \multicolumn{5}{c}{\textbf{Welfare variations (\,\euro~B)}} \\
\cmidrule(lr){3-3}\cmidrule(lr){4-8}
\textbf{Scenario} & \textbf{Region} & {(\euro/ton)} & {Welfare} & {Emissions} & {Producers} & {Consumers} & {Government} \\
\midrule

\multirow{2}{*}{Baseline}
  & Europe   & 0  & 0 & 0 & 0 & 0 & 0 \\
  & America  & 0  & 0 & 0 & 0 & 0 & 0 \\
  \cmidrule(l){2-8}
  & \textbf{Total ($\mathcal{W}$)} &  & \bfseries 0 & \bfseries 0 & \bfseries 0$^{*}$ & \bfseries 0$^{*}$ & \bfseries 0 \\
\addlinespace[1.5ex]

\multirow{2}{*}{\shortstack[l]{Local\\regulators}}
  & Europe   & 51 & +8.13 & +13.61 & -6.62 & -7.45 & +8.59 \\
  & America  & 29 & +3.08 & +13.61 & -12.85 & -9.69 & +12.02 \\
  \cmidrule(l){2-8}
  & \textbf{Total ($\mathcal{W}$)} &  & \bfseries +11.21 & \bfseries +27.22 & \bfseries -19.47 & \bfseries -17.14 & \bfseries +20.62 \\
\addlinespace[1.5ex]

\multirow{3}{*}{\shortstack[l]{Uniform \\ global regulation}}
  & Europe   &  & +18.01 & +20.34 & -8.03 & -9.78 & +15.48 \\
  & America  &  & -4.27 & +20.34 & -22.06 & -18.02 & +15.48 \\
  & Global   & 60   &   &   &   &   &  \\
  \cmidrule(l){2-8}
  & \textbf{Total ($\mathcal{W}$)} &  & \bfseries +13.74 & \bfseries +40.68$^{*}$ & \bfseries -30.09 & \bfseries -27.08 & \bfseries +30.96$^{*}$ \\
\addlinespace[1.5ex]

\multirow{3}{*}{\shortstack[l]{Differentiated\\ global regulation}}
  & Europe   &  & +18.76 & +19.71 & -7.18 & -8.68 & +14.91 \\
  & America  &  & -4.76 & +19.71 & -21.69 & -17.69 & +14.91 \\
  & Global   & \text{EU: 51 \& NA: 60} &  &  &  &   &  \\
  \cmidrule(l){2-8}
  & \textbf{Total ($\mathcal{W}$)} &  & \bfseries +13.98$^{*}$ & \bfseries +39.42 & \bfseries -28.87 & \bfseries -26.37 & \bfseries +29.81 \\
\addlinespace[1.5ex]

\multirow{3}{*}{Overlapping}
  & Europe   & 0   & +16.13 & +16.92 & -7.08 & -8.28 & +14.57 \\
  & America  & -10 & -2.93 & +16.92 & -17.13 & -13.46 & +10.75 \\
  & Global\tnote{$\dagger$} & 53  &  &  &  &  & 0 \\
  \cmidrule(l){2-8}
  & \textbf{Total ($\mathcal{W}$)} &  & \bfseries +13.20 & \bfseries +33.84 & \bfseries -24.22 & \bfseries -21.74 & \bfseries +25.32 \\

\bottomrule
\end{tabular}
\begin{tablenotes}[flushleft]\footnotesize
\item[$^{*}$] Associated with the optimal scenario for each welfare component.
\item[$\dagger$] The "Global" row represents a distinct global regulator redistributing half of the tax revenues to each of the two regions.
\end{tablenotes}
\end{threeparttable}

%% file: airline_response_summary.tex
\begin{tabular}{
  l
  S[table-format=-2.2]
  S[table-format=-1.2]
  S[table-format=+2.2]
  S[table-format=-1.2]
  S[table-format=+2.2]
}
\toprule
Scenario
& \multicolumn{1}{c}{Frequency $\Delta$}
& \multicolumn{1}{c}{Fleet $\Delta$}
& \multicolumn{1}{c}{Fare $\Delta$}
& \multicolumn{1}{c}{Passengers $\Delta$}
& \multicolumn{1}{c}{Routes affected} \\
&
\multicolumn{1}{c}{(\%)}
& \multicolumn{1}{c}{(\%)}
& \multicolumn{1}{c}{(\euro)}
& \multicolumn{1}{c}{(\%)}
& \multicolumn{1}{c}{(\%)} \\
\midrule
Local regulators                  &  -7.26 & -6.08 &  5.30 & -4.65 & 23.75 \\
Uniform global regulation         & -10.43 & -9.59 & 10.15 & -7.61 & 25.49 \\
Differentiated global regulation  &  -9.70 & -9.16 &  9.70 & -7.19 & 25.04 \\
Overlapping                       &  -8.56 & -7.62 &  7.36 & -5.91 & 24.58 \\
\bottomrule
\end{tabular}

%% file: geographic_incidence.tex
\begin{tabular}{
  l
  l
  S[table-format=-2.2] 
  S[table-format=+2.2] 
  S[table-format=-2.2] 
  S[table-format=+2.2] 
}
\toprule
Scenario & Region
& \multicolumn{1}{c}{Frequency $\Delta$}
& \multicolumn{1}{c}{Fare $\Delta$}
& \multicolumn{1}{c}{Passengers $\Delta$}
& \multicolumn{1}{c}{Routes affected} \\
&
& \multicolumn{1}{c}{(\%)}
& \multicolumn{1}{c}{(\euro)}
& \multicolumn{1}{c}{(\%)}
& \multicolumn{1}{c}{(\%)} \\
\midrule
\multirow{3}{*}{\shortstack[l]{Local\\ regulators}} 
& Europe         & -11.54 &  1.33 & -5.43 & 19.48 \\
& North America  &  -4.00 &  5.05 & -3.71 & 15.13 \\
& Transatlantic  & -12.15 & 23.71 & -7.75 & 52.38 \\

\addlinespace[1.5ex]

\multirow{3}{*}{\shortstack[l]{Uniform\\ global regulation}}
& Europe         & -13.33 &  1.60 &  -6.40 & 20.56 \\
& North America  &  -8.05 & 10.96 &  -7.74 & 16.78 \\
& Transatlantic  & -16.61 & 40.47 & -11.72 & 55.56 \\

\addlinespace[1.5ex]

\multirow{3}{*}{\shortstack[l]{Differentiated\\ global regulation}} 
& Europe         & -11.56 &  1.25 &  -5.40 & 19.48 \\
& North America  &  -8.05 & 10.97 &  -7.75 & 16.78 \\
& Transatlantic  & -15.74 & 36.55 & -10.86 & 55.16 \\

\addlinespace[1.5ex]

\multirow{3}{*}{\shortstack[l]{Overlapping}} 
& Europe         & -11.95 &  1.37 & -5.64 & 19.91 \\
& North America  &  -5.87 &  7.72 & -5.57 & 15.95 \\
& Transatlantic  & -14.16 & 30.07 & -9.39 & 53.97 \\
\bottomrule
\end{tabular}

%% file: airlines_business_models.tex
\begin{tabular}{
  l
  l
  l
  S[table-format=-2.2]
  S[table-format=-2.2]
  S[table-format=+2.2]
  S[table-format=-1.2]
}
\toprule
Scenario & Region & Type
& \multicolumn{1}{c}{Frequency $\Delta$}
& \multicolumn{1}{c}{Fleet $\Delta$}
& \multicolumn{1}{c}{Fare $\Delta$}
& \multicolumn{1}{c}{Passengers $\Delta$} \\
&
&
& \multicolumn{1}{c}{(\%)}
& \multicolumn{1}{c}{(\%)}
& \multicolumn{1}{c}{(\euro)}
& \multicolumn{1}{c}{(\%)} \\
\midrule
\multirow{4}{*}{\shortstack[l]{Local\\ regulators}}
& \multirow{2}{*}{American}
  & LCC    &  -3.64 &  -3.64 &  3.76 & -3.11 \\
& & Legacy &  -4.47 &  -4.79 &  7.96 & -4.39 \\
\addlinespace
& \multirow{2}{*}{European}
  & LCC    & -15.80 & -15.80 &  0.11 & -7.14 \\
& & Legacy &  -4.48 &  -5.85 &  6.99 & -3.32 \\
\cmidrule(l){2-7}

\multirow{4}{*}{\shortstack[l]{Uniform\\ global regulation}}
& \multirow{2}{*}{American}
  & LCC    &  -8.04 &  -8.04 &  9.17 & -7.43 \\
& & Legacy &  -8.41 &  -8.73 & 15.39 & -8.28 \\
\addlinespace
& \multirow{2}{*}{European}
  & LCC    & -18.18 & -18.18 &  0.23 & -8.43 \\
& & Legacy &  -5.51 &  -7.42 & 10.48 & -4.40 \\
\cmidrule(l){2-7}

\multirow{4}{*}{\shortstack[l]{Discriminating\\ global regulation}}
& \multirow{2}{*}{American}
  & LCC    &  -8.04 &  -8.04 &  9.17 & -7.43 \\
& & Legacy &  -8.37 &  -8.65 & 14.92 & -8.18 \\
\addlinespace
& \multirow{2}{*}{European}
  & LCC    & -15.81 & -15.81 &  0.10 & -7.16 \\
& & Legacy &  -4.82 &  -6.70 &  9.13 & -3.83 \\
\cmidrule(l){2-7}

\multirow{4}{*}{Overlapping}
& \multirow{2}{*}{American}
  & LCC    &  -5.52 &  -5.52 &  6.11 & -4.94 \\
& & Legacy &  -6.34 &  -6.66 & 11.21 & -6.22 \\
\addlinespace
& \multirow{2}{*}{European}
  & LCC    & -16.34 & -16.34 &  0.13 & -7.44 \\
& & Legacy &  -4.81 &  -6.45 &  8.15 & -3.68 \\

\bottomrule
\end{tabular}

%% file: logit.tex
\begin{threeparttable}
\small
\setlength{\tabcolsep}{5pt}
\begin{tabular}{lcccccc}
\toprule
 & \multicolumn{2}{c}{Europe} & \multicolumn{2}{c}{North America} & \multicolumn{2}{c}{Transatlantic} \\
\cmidrule(lr){2-3}\cmidrule(lr){4-5}\cmidrule(lr){6-7}
 & Economy & Business & Economy & Business & Economy & Business \\
\midrule
Constant                    & $-9.80^{**}$  & $-14.18^{**}$ & $-9.38^{**}$  & $-13.38^{**}$ & $-10.21^{**}$ & $-12.31^{**}$ \\
                            & (0.06)        & (0.11)        & (0.06)        & (0.05)        & (0.13)        & (0.15)        \\
Market distance (1{,}000 km) & $0.54^{**}$   & $0.76^{**}$   & $0.65^{**}$   & $0.38^{**}$   & $-0.13^{**}$  & $-0.23^{**}$  \\
                            & (0.02)        & (0.08)        & (0.01)        & (0.02)        & (0.03)        & (0.04)        \\
Market distance$^{2}$ (1{,}000 km)$^{2}$ 
                            & $-0.15^{**}$  & $-0.10^{**}$  & $-0.10^{**}$  & $-0.04^{**}$  & $0.02^{**}$   & $0.02^{**}$   \\
                            & (0.01)        & (0.02)        & (0.002)       & (0.004)       & (0.002)       & (0.002)       \\
GDP per capita              & $0.40^{**}$   & $1.91^{**}$   & $0.61^{**}$   & $0.58^{**}$   & $0.68^{**}$   & $1.34^{**}$   \\
                            & (0.04)        & (0.11)        & (0.03)        & (0.05)        & (0.04)        & (0.05)        \\
Directness (0/1)            & $3.11^{**}$   & $3.10^{**}$   & $2.24^{**}$   & $2.55^{**}$   & $2.84^{**}$   & $2.65^{**}$   \\
                            & (0.03)        & (0.03)        & (0.02)        & (0.02)        & (0.03)        & (0.02)        \\
Detour distance (1{,}000 km) & $-0.29^{**}$  & $-0.13^{**}$  & $-0.28^{**}$  & $-0.21^{**}$  & $-0.18^{**}$  & $-0.13^{**}$  \\
                            & (0.01)        & (0.02)        & (0.01)        & (0.01)        & (0.01)        & (0.01)        \\
$\log$ frequency            & $0.45^{**}$   & $0.52^{**}$   & $0.42^{**}$   & $0.37^{**}$   & $0.34^{**}$   & $0.29^{**}$   \\
                            & (0.004)       & (0.02)        & (0.003)       & (0.01)        & (0.005)       & (0.01)        \\
Ticket fare (\$100)         & $-0.60^{**}$  & $-0.12^{**}$  & $-0.46^{**}$  & $-0.03^{**}$  & $-0.18^{**}$  & $-0.01$       \\
                            & (0.03)        & (0.02)        & (0.03)        & (0.01)        & (0.01)        & (0.004)       \\
$\rho$                      & $0.36^{**}$   & $0.22^{**}$   & $0.51^{**}$   & $0.31^{**}$   & $0.36^{**}$   & $0.24^{**}$   \\
                            & (0.005)       & (0.01)        & (0.004)       & (0.004)       & (0.01)        & (0.004)       \\
\midrule
Observations                & \num{89081}   & \num{27869}   & \num{183727}  & \num{42905}   & \num{82905}   & \num{36682}   \\
Adjusted R$^{2}$            & 0.82          & 0.77          & 0.72          & 0.76          & 0.55          & 0.64          \\
\bottomrule
\end{tabular}
\begin{tablenotes}[flushleft]\footnotesize
\item \emph{Notes:} Standard errors in parentheses. $^{**}$ denotes significance at 1\% level.
\end{tablenotes}
\end{threeparttable}

%% file: baseline_sens_30.tex
\robustify\bfseries
\sisetup{
  detect-weight=true,
  detect-family=true,
  retain-explicit-plus
}
\begin{threeparttable}
\setlength{\tabcolsep}{5pt}
\small
\begin{tabular}{
  l
  l
  S[table-format=+3.0]
  S[table-format=+3.2]
  S[table-format=+3.2]
  S[table-format=+3.2]
  S[table-format=+3.2]
  S[table-format=+3.2]
}
\toprule
 & & {\textbf{Charges}} & \multicolumn{5}{c}{\textbf{Welfare outcomes (\,\euro~B)}} \\
\cmidrule(lr){3-3}\cmidrule(lr){4-8}
\textbf{Scenario} & \textbf{Region} & {(\euro/ton)} & {Welfare} & {Emissions} & {Producers} & {Consumers} & {Government} \\
\midrule

\multirow{2}{*}{Baseline}
  & Europe   & 0  & 187.17 & -13.88 & 56.63 & 144.72 & 0.00 \\
  & America  & 0  & 369.42 & -29.27 & 151.01 & 257.97 & 0.00 \\
\cmidrule(l){2-8}
  & \textbf{Total ($\mathcal{W}$)} &  & \bfseries 556.59 & \bfseries -43.15 & \bfseries 207.64 & \bfseries 402.69 & \bfseries 0.00 \\
\bottomrule
\end{tabular}
\end{threeparttable}

%% file: sensitivity_30.tex
\robustify\bfseries
\sisetup{
  detect-weight=true,
  detect-family=true,
  retain-explicit-plus
}
\begin{threeparttable}
\setlength{\tabcolsep}{5pt}
\small
\begin{tabular}{
  l
  l
  S[table-format=+3.0] 
  S[table-format=+3.2] 
  S[table-format=+3.2] 
  S[table-format=+3.2] 
  S[table-format=+3.2] 
  S[table-format=+3.2] 
}
\toprule
 & & {\textbf{Charges}} & \multicolumn{5}{c}{\textbf{Welfare variations (\,\euro~B)}} \\
\cmidrule(lr){3-3}\cmidrule(lr){4-8}
\textbf{Scenario} & \textbf{Region} & {(\euro/ton)} & {Welfare} & {Emissions} & {Prod.} & {Cons.} & {Gov.} \\
\midrule

\multirow{2}{*}{Baseline}
  & Europe   & 0  & 0 & 0 & 0 & 0 & 0 \\
  & America  & 0  & 0 & 0 & 0 & 0 & 0 \\
  \cmidrule(l){2-8}
  & \textbf{Total ($\mathcal{W}$)} &  & \bfseries 0 & \bfseries 0$^{*}$ & \bfseries 0 & \bfseries 0 & \bfseries 0$^{*}$ \\
\addlinespace[1.5ex]

\multirow{2}{*}{\shortstack[l]{Local\\regulators}}
  & Europe   & -4  & -5.36 & -1.87 & +2.53 & +2.61 & -0.94 \\
  & America  & -68 & +13.41 & -1.87 & +28.64 & +16.18 & -37.23 \\
  \cmidrule(l){2-8}
  & \textbf{Total ($\mathcal{W}$)} &  & \bfseries +8.05 & \bfseries -3.74 & \bfseries +31.17 & \bfseries +18.79 & \bfseries -38.32 \\
\addlinespace[1.5ex]

\multirow{2}{*}{\shortstack[l]{Global\\regulator}}
  & Europe   &  & -13.28 & -3.29 & +16.21 & +7.36 & -33.56 \\
  & America  &  & +23.69 & -3.29 & +39.86 & +20.68 & -33.56 \\
  & Global   & -81  &   &   &   &   &  \\
  \cmidrule(l){2-8}
  & \textbf{Total ($\mathcal{W}$)} &  & \bfseries +10.42 & \bfseries -6.58 & \bfseries +56.07$^{*}$ & \bfseries +28.04$^{*}$ & \bfseries -67.12 \\
\addlinespace[1.5ex]

\multirow{2}{*}{\shortstack[l]{Global regulator\\discriminating}}
  & Europe   &  & -13.81 & -3.21 & +14.57 & +7.08 & -32.25 \\
  & America  &  & +24.23 & -3.21 & +39.17 & +20.52 & -32.25 \\
  & Global   & \text{EU: -72 \& NA: -81} &   &   &   &   &  \\
  \cmidrule(l){2-8}
  & \textbf{Total ($\mathcal{W}$)} &  & \bfseries +10.44$^{*}$ & \bfseries -6.42 & \bfseries +53.74 & \bfseries +27.61 & \bfseries -64.50 \\
\addlinespace[1.5ex]

\multirow{3}{*}{Overlapping}
  & Europe   & 66  & -10.67 & -2.59 & +5.10 & +4.08 & -17.26 \\
  & America  & -4  & +19.70 & -2.59 & +38.25 & +19.71 & -35.67 \\
  & Global\tnote{$\dagger$} & -83 &  &  &  &  & 0 \\
  \cmidrule(l){2-8}
  & \textbf{Total ($\mathcal{W}$)} &  & \bfseries +9.04 & \bfseries -5.18 & \bfseries +43.35 & \bfseries +23.79 & \bfseries -52.93 \\

\bottomrule
\end{tabular}
\begin{tablenotes}[flushleft]\footnotesize
\item[$^{*}$] Associated with the optimal scenario for each welfare component.
\item[$\dagger$] The global rows represent a distinct regulator redistributing half of the tax revenues to each of the two regions.
\end{tablenotes}
\end{threeparttable}